\documentclass[10pt]{iopart}

\usepackage[bottom=1in,left=1in,right=1in,top=1in]{geometry}
\usepackage{graphicx}       
    \graphicspath{ {Figs/} }
    
\usepackage{graphicx,color,mathbbol,amsthm,amsfonts,amssymb,bm,braket,footmisc,multirow,latexsym,times,dsfont}

\usepackage[colorlinks]{hyperref}
\hypersetup{
 colorlinks=true,
 citecolor=blue,
 linkcolor=blue,
 urlcolor=cyan
 }
 \usepackage{outlines}
\usepackage[utf8]{inputenc}
\usepackage{bbm}
\usepackage{bm}
\usepackage[caption = false]{subfig}

\color[rgb]{0.,0.,0.} 
\usepackage{enumitem}  
\usepackage{mathdots}
\usepackage{tikz}
\usetikzlibrary{mindmap,backgrounds}
\usepackage[normalem]{ulem}
\usepackage{float}
\usepackage{hyperref}
\usepackage{tikz}
\usetikzlibrary{mindmap}
\usetikzlibrary{trees}
\usepackage{amssymb}
\usepackage{bm}
\usepackage{comment}
\usepackage{amsfonts}
\expandafter\let\csname equation*\endcsname\relax

\expandafter\let\csname endequation*\endcsname\relax

\usepackage{amsmath}
\usepackage{tablefootnote}
\usepackage{pifont}
\usepackage{amsthm}
\usepackage{tikz}
\usepackage{yhmath}
\usepackage{mathrsfs}
\usetikzlibrary{decorations.pathreplacing}

\tikzstyle{transition}=[circle,draw=black!50,fill=black!20,thick,
inner sep=0pt,minimum size=4mm]

\usepackage{braket}
\usepackage{multirow}
\usepackage{cancel}
\usepackage{mathtools}

\usepackage{hyperref}
\renewcommand{\P}{\Phi}

\renewcommand{\inf}{\infty}

\newcommand{\mb}{\mathbf}

\newcommand{\norm}[1]{
\lVert#1
\rVert}

\newtheorem{theorem}{Theorem}
\newtheorem*{theorem*}{Theorem}
\newtheorem{lemma}{Lemma}

\begin{document}

\title[
Criticality in quadratic bosonic Hamiltonians]{Quantum criticality beyond thermodynamic stability}

\author{Mariam Ughrelidze$^1$, Vincent P. Flynn$^{2,1}$, Emilio Cobanera$^{3,1}$,
and Lorenza Viola$^1$}

\address{$^1$Department of Physics and Astronomy, Dartmouth College, 6127 Wilder Laboratory, Hanover, NH 03755, USA}
\address{$^2$Department of Physics, Boston College, 140 Commonwealth Avenue, Chestnut Hill, MA 02467, USA}
\address{$^3$Department of Physics, SUNY Polytechnic Institute, 100 Seymour Avenue, Utica, NY 13502, USA}

\begin{abstract}
For a many-body system in equilibrium, described by a thermodynamically stable Hamiltonian, quantum criticality is associated with structural changes of the many-body ground state. However, there exist physically relevant models, notably, certain quadratic bosonic Hamiltonians (QBHs), which fail to have a ground state. From a dynamical system-theory standpoint,  QBHs can be either dynamically stable or unstable. We show that criticality is a meaningful concept for the entire class of QBHs that are {\em dynamically stable} or at the boundary of instability -- {\em regardless} of thermodynamic stability -- and that the key state for such QBHs is a naturally and unambiguously defined {\em quasiparticle vacuum}. This state is Gaussian, and coincides with the ground state if the QBH is thermodynamically stable. We identify a relevant spectral gap, the {\em Krein gap}, associated to the minimal spectral separation between creation and annihilation operators, and show that the quasiparticle vacuum is unique when the Krein gap is positive. In addition, we prove that, for dynamically stable QBHs with finite-range couplings, correlations are {\em exponentially bounded} unless the Krein gap closes -- which is associated with one of two kinds of spectral degeneracies: an exceptional point (physically, a free-particle-like collective mode) or a Krein collision (often a bosonic zero mode). As a consequence, long-range correlations can ensue in the quasiparticle vacuum. We conclude that (i) the Krein gap takes the role of the spectral gap for dynamically stable QBHs, and (ii) the boundary of dynamical stability and criticality (associated to exceptional points) or  multicriticality (associated to Krein collisions) are one and the same. Alongside long-range correlations, we find that bosonic critical behavior beyond thermodynamic stability is witnessed by the scaling of the entanglement entropy and other indicators of equilibrium criticality from information geometry. Hence, our framework opens the door to investigating {\em all} dynamically stable QBHs through the lens of critical phenomena, including thermodynamically unstable models from photonics, opto-mechanics, cavity-QED, and magnonics. 
\end{abstract}

\date{\today}

\maketitle


\section{Introduction}

Understanding quantum critical phenomena is a cornerstone of contemporary condensed-matter physics, with implications ranging from fundamental advances in discovering and classifying new phases of matter to practical leverage for next-generation quantum technologies \cite{Sachdev2023}. For closed systems in equilibrium, quantum criticality is traditionally associated with a structural change of the many-body ground state (GS) in the thermodynamic limit, which takes place as one or more non-thermal control parameters in the Hamiltonian are tuned at special, critical values. The underlying feature that makes a critical GS so remarkable is that characteristic length scales associated with pairwise correlation functions of local observables diverge at these critical points, and scale invariance emerges in spite of all microscopic couplings being short-ranged. Fueled by  progress in the control capabilities over several experimental platforms, significant effort has recently been made in extending the study of criticality to open (Markovian) quantum systems, leading to the identification of non-analytic behavior and diverging correlations in both mixed and pure (non-equilibrium) steady states \cite{NoiseDrivenCriticality,BarthelCriticality,Sieberer}. 

The extension we advocate for in this work still focuses on closed quantum-dynamics, but zooms in onto one of the archetypal and most widely used classes of many-body quantum systems: namely, systems of independent (mean-field interacting) bosons, described by {\em quadratic bosonic Hamiltonians} (QBHs) \cite{Derezinski,Decon}. {Encountered in modeling phenomena across disparate fields -- including,  phonon structure \cite{Blaizot}, parametric amplification and topological photonics \cite{Ozawa,WangClerk}, magnon amplification and lasing \cite{Antimagnons}, cosmological perturbations \cite{Kamal}, to name a few -- QBHs are also extensively considered in the context of continuous-variable quantum information \cite{CVQI,GQI}. Several rigorous results have been established on the scaling of the two-point correlation functions and, by extension, the {\em entanglement entropy} (EE) in the GS, thanks to the fact that, for these Gaussian models, a covariance-matrix formalism can be leveraged. In {\em harmonic lattices}, with couplings only in the position degrees of freedom, the closing of the many-body energy gap is associated to the onset of long-range correlations \cite{Cramer2006}. Consistently, the EE has been shown to be inversely proportional to the many-body energy gap and to scale as the boundary of the system in non-critical regimes \cite{Audenart2002,EisertEntropyEntanglement,AreaLawBosonic,Cramer2006}. This type of area law scaling is of interest from the perspective of efficient simulation of quantum states \cite{EisertReview}. Beyond harmonic lattices, the magnitude of the many-body energy gap was found to no longer be an accurate indicator of proximity to criticality. Instead, the role of the energy gap was found to be assumed by the energy gap of an auxiliary, ``symmetrized" Hamiltonian \cite{Wolf2006}, with the opening of the latter gap guaranteeing exponentially decaying correlations. 

While the focus on the GS of QBHs is, in a way, natural, we argue it is also too restrictive, for at least two main reasons. First, many systems of current theoretical and experimental interest are well-described by effective QBHs that lack a GS. One example close to our work \cite{SSF,DMM} are the spin-wave excitations, or magnons, associated to the nonequilibrium configurations of magnetic systems. These quasiparticles can carry negative energy despite the overall dynamical stability of the system \cite{Antimagnons}. Likewise, in the context of cavity QED, the linearized Hamiltonian obtained from performing the Holstein-Primakoff transformation on the parent Hamiltonian for an ensemble of atoms coupled to a cavity mode can have no lower bound on its energy \cite{Wiersig}. Yet another example is the so-called bosonic Kitaev chain \cite{ClerkBKC}: despite being dynamically stable for any finite-size realization, it is thermodynamically unstable, in that it not bounded from below (or above). Besides providing an early example of a topological directional amplifier, this QBH displays other interesting features (including distinctive EE scaling \cite{EntanglementPTClerk,DynamicalMetastability} and hints of non-trivial topology \cite{Decon,Bosoranas,EmilioQBH}), and has been experimentally realized in an optomechanical setting \cite{BKCOptomechanical}. Second, a QBH can be thermodynamically unstable and yet still display dynamical stability phases separated by a critical phase boundary. This is because a dynamically stable QBH away from a stability phase boundary is associated to a unique {\em quasiparticle vacuum} (QPV), the state annihilated by all the (quasiparticle) normal modes of the Hamiltonian. 

In this work, we propose to extend the range of applicability of the theory of critical phenomena to the entire class of dynamically stable QBHs (see Fig.\,\ref{fig: StabilityDiagram}), by showing how -- {\em irrespective of  thermodynamic stability} -- the QPV furnishes the relevant state that can become critical (and even multicritical) at a dynamical stability phase transition. While, for thermodynamically stable QBHs, the QPV coincides with the GS, the class of dynamically stable QBHs is larger than the class of thermodynamically stable ones (which it almost completely contains); descriptions of many experimental bosonic systems naturally yield dynamically stable QBHs with a QPV, but not necessarily a GS. The resulting notion of ``generalized criticality" is somewhat intermediate between traditional equilibrium and non-equilibrium statistical mechanics. On the one hand, it is strictly outside of the scope of standard equilibrium statistical mechanics because there is no GS (or Gibbs state). 
On the other hand, the kind of criticality we describe is not typical of a non-equilibrium continuous phase transition either. It is a property of a distinguished many-body state that is generically in the middle of the many-body energy spectrum (unless the QBH happens to be thermodynamically stable) but is {\em pure}, and can always be chosen to be zero-mean and Gaussian. It is also a property that 
can be characterized in detail solely in terms of {\em equal-time} correlations in the thermodynamic limit. 

\begin{figure}[t]
\centering
\hspace*{25mm}\includegraphics[width=0.6\columnwidth]{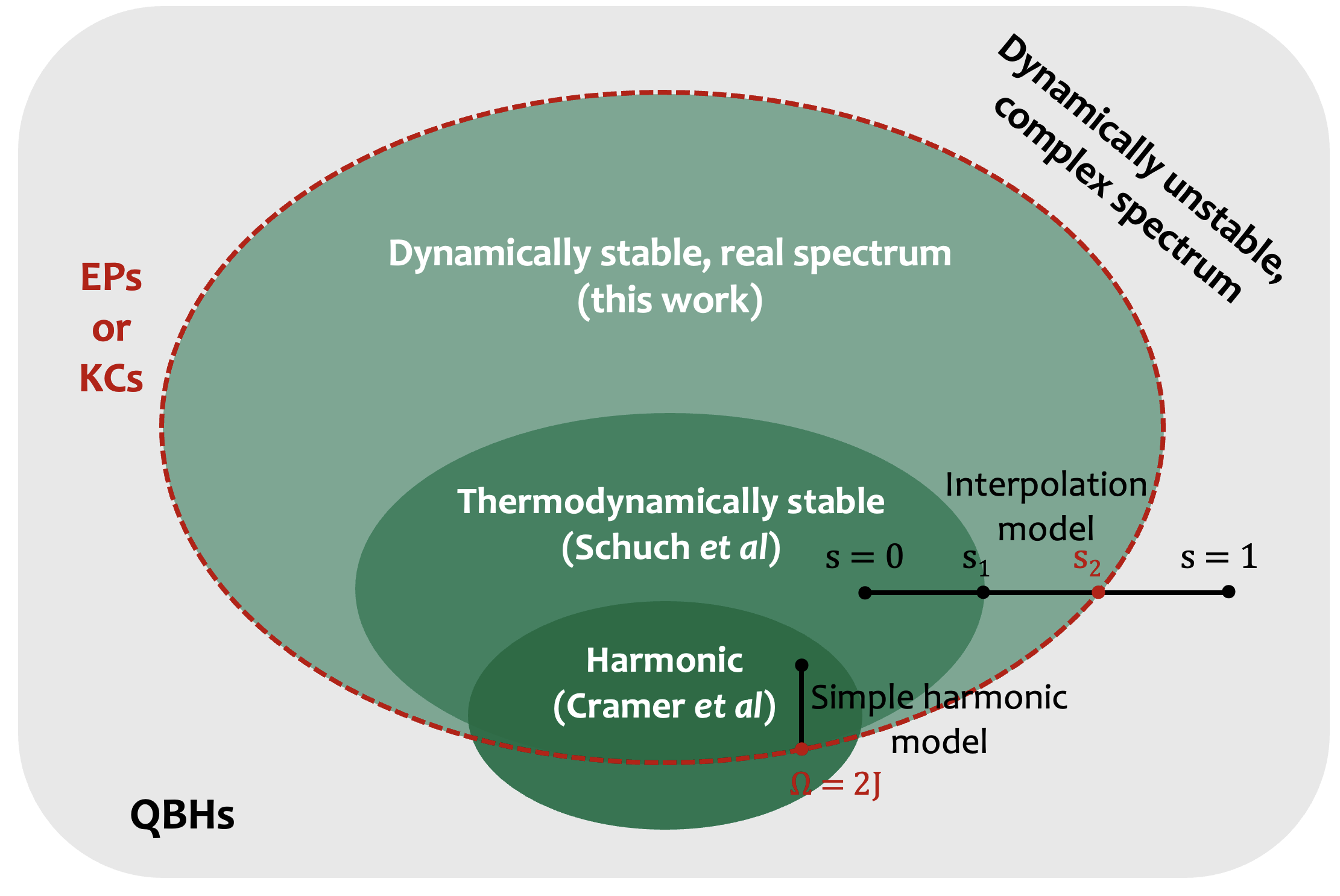}
\vspace*{-1mm}
\caption{Pictorial rendering of the landscape of QBHs. Our theory of criticality applies to the entire class of dynamically stable QBHs. The boundary between dynamically stable and unstable QBHs (red dashed line) boasts a mixture of both possible types of bosonic spectral singularities: dynamically unstable QBHs on the boundary host EPs, while the stable ones exhibit KCs. All thermodynamically stable QBHs are dynamically stable or at the boundary of stability; they are the subject of Ref.\,\cite{Wolf2006}. The special class of harmonic QBHs is the subject of Ref.\,\cite{Cramer2006}.  The simple harmonic chain is a textbook example of several physical concepts and fits our theory well. The interpolation model serves as the main illustrative example of our general theory. As a function of its parameter $s$, the model leaves the class of thermodynamically stable QBHs, since its many-body energy gap closes at $s_1$, with a finite correlation length throughout. As we show, for a QBH, criticality is associated to the transition between dynamical stability regimes; the correlation length of the interpolation model diverges at $s_2$, where an EP develops. In all cases, the relevant state is the QPV.} 
\label{fig: StabilityDiagram}
\end{figure}

Before getting to a more technical discussion, it is worth highlighting some crucial conceptual advances that derive from investigating the class of dynamically stable QBHs -- and, with that, QPVs -- as a whole. While a GS is a special kind of QPV (the only kind that has been subjected to systematic investigation so far) it is not {\em a priori} clear how to fit the body of knowledge about GS criticality inside a more general and comprehensive theory. In this paper, we show that one can take a significant step in building such a theory, by introducing a new kind of many-body gap for dynamically stable QBHs that we call the {\em Krein gap}. Loosely speaking, the Krein gap is the minimal spectral separation between creation and annihilation operators, and can only vanish if the parameters are tuned at one of the two possible spectral singularities that a bosonic system can host \cite{Decon}: either an {\em exceptional point} (EP) or a {\em Krein collision} (KC).  Building on this new notion of a spectral gap for free boson systems, we will show that many results about GSs can be recovered as corollaries to general results for the broader class of QPVs.

Viewed from a different angle, a major limitation of thermodynamically stable QBHs is their strict inability to support topologically-mandated boundary zero modes, isolated or as part of a surface band. This result is one of the no-go theorems proved in Ref.\,\cite{Squaring}. This is, in turn, consistent with another no-go theorem, also proved therein: 
any two thermodynamically stable (in particular, gapped) QBHs in the same symmetry class (whatever the many-body classifying symmetries may be) can be deformed into each other adiabatically, without closing the many-body energy gap or leaving the symmetry class. These results suggest themselves as two sides of the same coin: thermodynamically stable QBHs do not support symmetry-protected topological quantum phases. However, it is well documented by now that dynamically but not thermodynamically stable QBHs can display topologically-mandated boundary modes \cite{Peano2015, EmilioQBH}. Dropping the condition of thermodynamic stability makes the spectral flow of dynamically stable QBHs somewhat more similar to that of fermions. This alone could make the case for the additional physical interest of the general class of QPVs over its special subclass, the GSs.

With the above motivation and guiding principles in mind, in this paper we investigate finite-range, {\em translationally invariant} QBHs and develop a theory of their critical behavior in the QPV that is not tied to the property of thermodynamic stability, but includes it as a special case. Our main results may be summarized as follows.

\begin{itemize}
\item We introduce a new notion of a many-body spectral gap for dynamically stable QBHs, the \textit{Krein gap}, $\Delta_\text{Krein}$. This leverages the fact that the dynamical matrix of a dynamically stable QBH is \textit{pseudo-Hermitian} and has eigenvectors of $\pm 1$ \textit{Krein signature}, with respect to an indefinite metric that arises from bosonic commutation relations \cite{GohbergIndefiniteLA,Decon}. We show that $\Delta_\text{Krein}$ quantifies the minimal distance between quasiparticle energy bands of opposite Krein signature or, equivalently, the distance to the nearest dynamical stability phase boundary in parameter space. Thus, it can only be zero at an EP, where eigenvectors of the dynamical matrix merge, or at a KC, where the number of independent eigenvectors does not change but at least one eigenvalue has eigenvectors of opposite Krein signature. Crucially -- for arbitrary spatial dimension and number of internal degrees of freedom -- a {\em unique, translationally invariant QPV always exists} if $\Delta_\text{Krein}$ is positive. 

\item  We prove that if $\Delta_\text{Krein}>0$, the two-point (equal-time) spatial correlation functions in the QPV are exponentially bounded. It follows that the QPV can become critical and support long-range correlations {\em only} at locations in parameter space where $\Delta_\text{Krein} =0$. In this sense, the Krein gap is the correct generalization -- beyond thermodynamic stability -- of the usual many-body energy gap \cite{Cramer2006} or the ``symmetrized gap'' introduced in \cite{Wolf2006} for general (not necessarily harmonic) thermodynamically stable QBHs. The emerging physical picture is that QPV criticality will generally take place at dynamical, not thermodynamic, stability phase boundaries: thermodynamic stability phase boundaries need not be associated to QPV (in fact, GS in this case) criticality, unless they coincide with dynamical stability phase boundaries. While this is surprising in light of typical models of condensed-matter (e.g., phonon) systems, one can recognize, with hindsight, that these models do lose dynamical stability together with thermodynamic stability. We investigate in detail a compelling example where these two transitions do not occur concomitantly and, indeed, the correlations in the GS remain short-ranged at the point where the system loses thermodynamic, but dynamical stability is retained. When, for a different parameter value, the Krein gap closes and the model eventually loses dynamical stability, its QPV becomes critical.

\item Consistently with the above picture, we show that thermodynamic stability may be tuned out of a QBH without affecting dynamical stability. In essence, this amounts to adjusting the Hamiltonian parameter that asymmetrically couples position and momentum quadratures. Moreover, we show that the covariance matrix in the QPV is independent of this parameter. Hence, correlation functions computed in the QPV are blind to transitions out of thermodynamic stability, unless they happen to coincide with a transition out of dynamical stability as well.This implies that criticality in the QPV can be logically disconnected from GS criticality in free boson systems: formally, given a thermodynamically and dynamically stable QBH $H$, there exists a dynamically stable, but thermodynamically unstable $H'$, with $|\text{GS}_{H}\rangle=|\text{QPV}_{H'}\rangle$, such that $H$ and $H'$ can be continuously connected without closing $\Delta_\text{Krein}$. 
  
\item Through analyzing illustrative examples, we find that, generically, an EP displays characteristics of a standard quantum critical point --  as it is parametrically approached, the correlation length diverges and, at the EP, correlation functions tend to be long-range. KCs, found at the intersection of loci of EPs, share instead points of contact with multicritical points, with their critical properties being highly dependent on the path of approach in parameter space and their behavior at the KC harder to pin down. 

\item We show that the bipartite {\em EE in the QPV obeys an area law} and, as a Krein-gap closing is approached, it scales inversely with $\Delta_\text{Krein}$ -- consistently extending existing results for thermodynamically stable QBHs \cite{Cramer2006, Wolf2006}. The multicritical character of a KC is directly reflected in a path-dependent behavior of the EE as a Krein-gap closing associated to a KC is approached. These conclusions are further corroborated by analyzing our representative models through the lenses of information-geometric tools used for identifying quantum phase transitions \cite{GuFid} and appropriately modified to account for the pseudo-Hermitian metric inherent to QBHs. Specifically, by analyzing the behavior of the quantum fidelity and the {\em quantum metric tensor} (QMT) \cite{ZanardiInfoGeom}, we find that the points in parameter space we have identified as (multi-)critical through diverging correlation functions consistently coincide with singularities of the QMT.
\end{itemize}

The remaining content is organized as follows. In Sec.\,\ref{sec:background} we introduce the class of QBHs, as well as the pertinent notions of stability and a description of the key state of interest -- the QPV. Existing results on GS criticality in thermodynamically stable QBHs and the behavior of information-theoretic indicators are also revisited and discussed in a self-contained manner. Sec.\,\ref{sec: newtake} presents our core theoretical results, by developing a detailed description for our new paradigm of criticality in QBHs, and deriving an analytic form for the correlation functions in the QPV. In Sec.\,\ref{sec:models}, we exemplify the theory derived in the previous section through key illustrative models. In Sec.\,\ref{sec:otherindicators}, these models are re-analyzed from the point of view of the EE scaling and the behavior of the QMT. Sec.\,\ref{sec:conclusion} concludes by summarizing the key findings and  discussing their significance, along with outstanding problems and avenues for future exploration. Several appendices provide full technical detail on derivations not included in the main text and complete proofs of our main theorems.

\section{Background}
\label{sec:background}

\subsection{Translationally invariant quadratic bosonic Hamiltonians}

We focus on translationally invariant systems defined on a $D$-dimensional lattice, with bi-infinite boundary conditions (BCs) in every spatial direction, so that $\bm{j} \in \mathbb{Z}^D$. When it is useful to work in finite size with periodic BCs, we instead take $\bm{j} \in \{1, \ldots, N\}^D$. More general lattices could be considered along the lines of 
Ref.~\cite{Cramer2006}, but we do not pursue this here. To each site we assign $d$ internal bosonic degrees of freedom (DOFs) (equivalently, we assume there are $d$ bosons per unit cell) and denote the corresponding creation and annihilation operators by $a_{\bm{j},\ell}^\dag$ and $a_{\bm{j},\ell}$, respectively, with $\ell=1,\ldots,d$. These satisfy the canonical commutation relations (CCRs) $[a_{\bm{j},\ell},a_{\bm{j'},\ell'}] = 0$ and $[a_{\bm{j},\ell},a_{\bm{j'},\ell'}^\dag] = \delta_{\bm{j}\bm{j'}}\delta_{\ell \ell'}$, where we leave the identity operators on the underlying bosonic multimode Fock space implicit. The most general quadratic bosonic Hamiltonian (QBH), with a finite coupling range $R \geq 1$, takes the form \cite{Blaizot}:
\begin{align}
\label{eq: QBH}
H = \sum_{\bm{j},\bm{r},\ell,\ell'} a_{\bm{j},\ell}^\dag (\mathbf{K}_{\bm{r}})_{\ell \ell'} a_{\bm{j}+\bm{r},\ell'} + \frac{1}{2}\big[ a_{\bm{j},\ell}^\dag(\bm{\Delta}_{\bm{r}})_{\ell\ell'} a_{\bm{j}+\bm{r},\ell'}^\dag + \text{H.c.}\big], 
\end{align}
where the $d\times d$ \textit{hopping} matrices $\mathbf{K}_{\bm{r}}$ encode the U(1)-symmetric particle-conserving couplings between sites separated by $\bm{r} \in {\mathbb Z}^D$, while the $d\times d$ \textit{pairing} or \textit{squeezing} matrices $\bm{\Delta}_{\bm{r}}$ encode the particle-nonconserving couplings. The finite-range condition implies that $\mathbf{K}_{\bm{r}} = \bm{\Delta}_{\bm{r}} = 0$ for $|\bm{r}|>R$. 
The hopping matrices satisfy $\mathbf{K}_{\bm{r}}^\dag = \mathbf{K}_{-\bm{r}}$ and the pairing matrices satisfy $\bm{\Delta}_{\bm{r}}^T = \bm{\Delta}_{-\bm{r}}$. We note that the condition on $\mathbf{K}_{\bm{r}}$ ensures that $H=H^\dag$, while the condition on $\bm{\Delta}_{\bm{r}}$ removes a redundancy in the description inherited from the CCRs in the way of a constant shift \cite{Decon, PostBosoranas}.


Translational invariance provides a convenient representation in momentum space, in terms of the canonical Fourier bosonic modes given by
\begin{align*}
    b_{\ell}(\bm{k}) \equiv \sum_{\bm{j}}e^{-i\bm{k}\cdot\bm{j}} a_{\bm{j},\ell},\qquad a_{\bm{j},\ell} = \frac{1}{V}\int_\text{BZ}d^D\bm{k} \,e^{i\bm{k}\cdot\bm{j}} b_\ell(\bm{k}), 
\end{align*}
where $\bm{k}$ ranges over the first Brillouin zone (BZ), of volume $V$ (in the representative case of a cubic lattice with unit lattice constant, the BZ is given by $[-\pi, \pi]^D$, yielding $V = (2\pi)^D$). The CCRs take the form:
\begin{align*}
    [b_\ell(\bm{k}),b_{\ell'}^\dag(\bm{k'})] = V  \delta(\bm{k}-\bm{k'})\delta_{\ell \ell'},\qquad   [b_{\ell}(\bm{k}),b_{\ell'}(\bm{k'})] = 0.
\end{align*}
In terms of these modes, the Hamiltonian takes a block-diagonal form
\begin{align}
\label{eq: Hblockdiag}
    H = \frac{1}{V}\int_\text{BZ} d^D\bm{k}\, H(\bm{k}),\quad H(\bm{k}) = \sum_{\ell,\ell'} b_\ell^\dag(\bm{k})\mathbf{K}_{\ell \ell'}(\bm{k})b_{\ell'}(\bm{k}) + \frac{1}{2}\big[ b_{\ell}^\dag(\bm{k})\bm{\Delta}_{\ell \ell'}(\bm{k}) b_{\ell'}^\dag (-\bm{k}) + \text{H.c.}\big],
    \end{align}
where $\mathbf{K}(\bm{k})$ and $\bm{\Delta}(\bm{k})$ are the Fourier transforms of the hopping and pairing matrices, namely, 
\begin{align}
\label{eq: KkDeltak}
    \mathbf{K}(\bm{k}) \equiv \sum_{\bm{r}} e^{i\bm{k}\cdot \bm{r}} \mathbf{K}_{\bm{r}},\qquad  \bm{\Delta}(\bm{k}) \equiv\sum_{\bm{r}} e^{i\bm{k}\cdot \bm{r}} \bm{\Delta}_{\bm{r}}.
\end{align}
Owing to the properties the hopping and pairing matrices obey in real space, these $\bm{k}$-dependent matrix-valued functions satisfy $\mathbf{K}^\dag(\bm{k}) = \mathbf{K}(\bm{k})$ and $\bm{\Delta}^T(\bm{k}) = \bm{\Delta}(-\bm{k})$, respectively. The finite-range condition ensures that the sums in Eq.\,\eqref{eq: KkDeltak} truncate, implying in turn that the matrix elements of $\mathbf{K}(\bm{k})$ and $\bm{\Delta}(\bm{k})$ are trigonometric polynomials. Consequently, these matrix-valued functions are analytic in $\bm{k}$.

\subsubsection{Diagonalization of translationally invariant QBHs.}
\label{sub:Diag}

A QBH can be diagonalized, if a set of bosonic quasiparticle modes exist, in terms of which $H$ takes a diagonal form \cite{Blaizot}. That is, we seek a new, complete set of bosonic operators $\beta_n(\bm{k}) = \sum_{i} [u_{ni} b_i(\bm{k}) + v_{ni}b_{i}^\dag(-\bm{k})]$ and real energies $\omega_n(\bm{k})$ (in units where $\hbar=1$) such that
\begin{align}
\label{eq: betaomega}
    [H,\beta_n(\bm{k})] = -\omega_n(\bm{k}) \beta_n(\bm{k}), \quad n=1,\ldots,d . 
\end{align}
If such a set of quasiparticles and energies exists, then it follows that 
\begin{align}
\label{eq: Hk}
    H \equiv \frac{1}{V}\int_\text{BZ} d^D\bm{k} \sum_{n}\omega_n(\bm{k}) \beta_n^\dag(\bm{k}) \beta_n(\bm{k}) + E_{\text{qpv}} ,
\end{align}
where $E_{\text{qpv}}$ is a (potentially infinite) ``zero-point'' energy contribution that arises from the spatially infinite lattice and is associated to a Bogoliubov {\em quasiparticle vacuum state} (QPV) $\ket{\tilde{0}}$, defined by $\beta_n(\bm{k})\ket{\tilde{0}} = 0$, for all $n$ and $\bm{k}$ (see Sec.\,\ref{sec: qvacback} for more details). Eigenstates of $H$ are then built from $\ket{\tilde{0}}$, by creating quasiparticle excitations.

To construct the \textit{Bogoliubov transformation} that maps $b(\bm{k})\mapsto \beta(\bm{k})$, let bosonic Nambu arrays in real and momentum space be defined as 
\begin{align}
    \phi_{\bm{j}} \equiv  [a_{\bm {j},1},\ldots,a_{\bm{j},d},a_{\bm{j},1}^\dag,\ldots,a_{\bm {j},d}^\dag]^T = \frac{1}{V}\int_{\text{BZ}} d^D\bm{k} \, e^{i\bm{k}\cdot\bm{j}} \phi(\bm{k}), 
    \label{mNambu}
\end{align}
with $ \phi(\bm{k}) \equiv [b_1(\bm{k}),\ldots,b_d(\bm{k}),b_1(-\bm{k})^\dag,\ldots,b_{d}^\dag(-\bm{k})]^T$. It then follows that under real-space lattice translations $\phi_{\bm{j}}\mapsto \phi_{\bm{j}+\bm{r}}$, the momentum-space Nambu array transforms naturally as $\phi(\bm{k}) \mapsto e^{i\bm{k}\cdot\bm{r}}\phi(\bm{k})$.
The momentum-space Nambu array in Eq.\,\eqref{mNambu} further satisfies
\begin{align}
\label{eq: phiCCR}
    \phi(\bm{k})^\dag = [\bm{\tau}_1\phi(-\bm{k})]^T,\quad [\phi_{\ell}(\bm{k}),\phi_{\ell'}^\dag(\bm{k'})] = V\delta(\bm{k}-\bm{k'})(\bm{\tau}_3)_{\ell\ell'}, \quad\bm{\tau}_3 \equiv \begin{pmatrix}
        \bm{1}_d & 0 \\ 0 & -\bm{1}_d
    \end{pmatrix}.
\end{align}
We similarly define the Nambu-Pauli matrices $\bm{\tau}_{1,2} \equiv  \bm{\sigma}_{1,2}\otimes \bm{1}_d$. One can then verify that \cite{Blaizot, PostBosoranas, DynamicalMetastability}
\begin{align}
\label{eq: Hphicomm}
    [H,\phi(\bm{k})] \equiv -\mathbf{g}(\bm{k})\phi(\bm{k}),\quad \mathbf{g}(\bm{k}) = \begin{pmatrix}
        \mathbf{K}(\bm{k}) & \bm{\Delta}(\bm{k})
        \\
        -\bm{\Delta}^*(-\bm{k}) & - \mathbf{K}^*(-\bm{k}) 
    \end{pmatrix},
\end{align}
where the $2d \times 2d$ matrix-valued function $\mathbf{g}(\bm{k})$, which we call the \textit{Bloch dynamical matrix} for reasons that will become clear momentarily, can be shown to be \textit{pseudo-Hermitian} with respect to the indefinite metric $\bm{\tau}_3$: that is, $\mathbf{g}^\dag(\bm{k}) = \bm{\tau}_3 \mathbf{g}(\bm{k})\bm{\tau}_3$. In fact, $\mathbf{g}(\bm{k})$ fails to be Hermitian whenever the pairing matrices are non-vanishing. Equivalently, $\mathbf{g}(\bm{k})$ is Hermitian if and only if $H$ has U(1) symmetry, and thus conserves particle number \cite{Colpa,Blaizot,Decon}\footnote{Since the key property is in fact {\em normality}, one can show that $[\mathbf{g} (\bm{k}), \mathbf{g}^\dag(\bm{k})]\equiv 0$ if and only if $K(\bm{k})\Delta (\bm{k})+\Delta(\bm{k}) K^*(-\bm{k})=0$.  }.
As an additional constraint, the dynamical matrix always obeys a ``charge-conjugation'' symmetry, $\mathbf{g}^*(\bm{k}) = -\bm{\tau}_1 \mathbf{g}(-\bm{k}) \bm{\tau}_1$. 

In light of Eqs.\,\eqref{eq: betaomega} and \eqref{eq: Hphicomm}, finding the desired Bogoliubov transformation amounts to finding a transformation $\phi(\bm{k}) \mapsto \psi(\bm{k}) \equiv \mathbf{L}^{-1}(\bm{k})\phi(\bm{k})$, such that (i) $\mathbf{L}(\bm{k})$ diagonalizes $\mathbf{g}(\bm{k})$; and (ii) $\psi(\bm{k})$ satisfies Eq.\,\eqref{eq: phiCCR}, ensuring that the elements obey bosonic statistics. Mathematically, (ii) requires that $\mathbf{L}^*(\bm{k}) = \bm{\tau}_1\mathbf{L}(-\bm{k})\bm{\tau}_1$ and $\mathbf{L}^\dag(\bm{k})\bm{\tau}_3\mathbf{L}(\bm{k}) = \bm{\tau}_3$, which ensures that $\psi(\bm{k})$ takes the form
\begin{align*}
\psi(\bm{k}) = [\beta_1(\bm{k}),\ldots,\beta_d(\bm{k}),\beta_1(-\bm{k})^\dag,\ldots,\beta_{d}^\dag(-\bm{k})]^T,
\end{align*}
with $\{\beta_n(\bm{k})\}$ satisfying the CCRs. Meanwhile, condition (i) requires $[H,\psi(\bm{k})] = -\bm{\Omega}(\bm{k})\psi(\bm{k})$,  with
\begin{align}
\label{eq:gkdiag}
    \bm{\Omega}(\bm{k}) \equiv \text{diag}\big[\omega_1(\bm{k}),\ldots,\omega_d(\bm{k}),-\omega_1(-\bm{k}),\ldots,-\omega_d(-\bm{k})\big] = \mathbf{L}^{-1}(\bm{k})\mathbf{g}(\bm{k})\mathbf{L}(\bm{k}).
\end{align}
By definition, the columns of $\mathbf{L}(\bm{k})$ are the eigenvectors of $\mathbf{g}(\bm{k})$, which, by symmetry constraints, must satisfy
\begin{align}
\label{eq: Geig}
    \mathbf{g}(\bm{k})\vec{\beta}_{n,\pm}(\bm{k}) = \pm \omega_n(\pm \bm{k}) \vec{\beta}_{n,\pm}(\bm{k}),\quad \vec{\beta}_{n,s}^\dag(\bm{k})\bm{\tau}_3 \vec{\beta}_{n',s'}(\bm{k}) = \delta_{nn'}(\bm{\tau}_3)_{ss'},\quad \vec{\beta}_{n,\mp}(\bm{k}) = \bm{\tau}_1 \vec{\beta}_{n,\pm}^*(-\bm{k}),
\end{align}
so that $\mathbf{L}(\bm{k})$ is a proper Bogoliubov transformation. Explicitly, the quasiparticles in Eq.\,\eqref{eq: betaomega} are then given by $ \beta_n(\bm{k}) = \vec{\beta}_{n,+}^{\,\dag}(\bm{k})\bm{\tau}_3 \phi(\bm{k}).$ We will henceforth refer to the eigenvalue $\omega_n(\bm{k})$ as the {\em $n$-th quasiparticle energy band}. It should be noted that the choice of $\omega_n(\bm{k})$ and $-\omega_n(-\bm{k})$ in Eq.\,\eqref{eq:gkdiag} is {\em not} arbitrary. Perhaps counterintuitively, this choice does not depend on the sign of the quasiparticle energy. Rather, as we will discuss in Sec.\,\ref{sec: DSKS}, it is dictated by the \textit{Krein signature} of the corresponding eigenvector, which arises from the pseudo-Hermiticity of $\mb g(\bm{k})$. Importantly, since $\mathbf{g}(\bm{k})$ is generically non-normal, there is no guarantee that the desired transformation $\mb L(\bm k)$ exists. In the most extreme case, $\mathbf{g}(\bm{k})$ need not be diagonalizable at all ({see Sec.\,\ref{sec: DSKS}}). Further to that, even if $\mathbf{g}(\bm{k})$ is diagonalizable, the desired Bogoliubov transformation is guaranteed to exist if and only if all the eigenvalues of $\mathbf{g}(\bm{k})$ are real \cite{Colpa,Blaizot,Decon}.

\subsubsection{Dynamical stability and Krein theory.}
\label{sec: DSKS}

Matrices that are diagonalizable and possess an entirely real spectrum are especially interesting from the perspective of dynamical systems theory. Specifically, a set of coupled linear time-invariant (LTI) differential equations, say, $\partial_t\vec{v} = -i\mathbf{M}\vec{v},$ is known to generate bounded evolution for all initial conditions at all times if $\mathbf{M}$ is diagonalizable with an entirely real spectrum. This connects directly with the diagonalization of QBHs by noting that Eq.\,\eqref{eq: Hphicomm} provides the Heisenberg equations of motion for the Nambu array, $\partial_t \phi(\bm{k};t) = i[H,\phi(\bm{k};t)] = -i \mathbf{g}(\bm{k})\phi(\bm{k};t)$. Since this dynamical equation is of the aforementioned form, it is stable whenever $\mb{g}(\bm{k})$ is diagonalizable with only real eigenvalues. In fact, due to the symmetry conditions that a bosonic dynamical matrix $ \mathbf{g}(\bm{k})$ must obey, reality of the spectrum is sufficient and also necessary for $\phi(\bm{k};t)$ to remain bounded for all times. Moreover, since the dynamics generated by $H$ is unitary, this implies any observable will also remain bounded for all times, irrespective of the initial state. We call systems with this property \textit{dynamically stable}. This follows from the fact that the Heisenberg dynamics of {\em any} operator built from linear combinations of products of creation and annihilation operators is entirely determined by that of the fundamental operators. For example, $(a^\dag a)(t) = a^\dag(t) a(t)$.  Thus, if each mode $b_j(\bm{k};t)$ undergoes strictly bounded evolution, then so will any arbitrary observable. This establishes a one-to-one correspondence between the class of QBHs that are dynamically stable and those that may be diagonalized via a Bogoliubov transformation. 

Within the mathematical literature, LTI dynamical systems generated by pseudo-Hermitian matrices have been long studied \cite{YakuKrein}. In particular, a comprehensive characterization of the stability phase boundaries of such systems has been obtained \cite{GohbergIndefiniteLA}. Consider a general pseudo-Hermitian dynamical system, of the form
\begin{align}
    \partial_t \vec{v} = -i \mathbf{M}\vec{v},\quad \mathbf{M}^\dag = \bm{\eta} \mathbf{M}\bm{\eta}^{-1} ,
\end{align}
where $\bm{\eta}$ is an invertible Hermitian matrix. By definition, $\mathbf{M}$ is pseudo-Hermitian with respect to the indefinite metric $\bm{\eta}$. Suppose now that $\mathbf{M}$ depends on a set of parameters $p$ in a manifold $\mathcal{M}$ and that, within some submanifold $\mathcal{S}\subset \mathcal{M}$, the system is dynamically stable. One may then ask, what is the nature of the system along the stability phase boundaries $\partial \mathcal{S}$? This question can be answered by introducing the \textit{Krein signature}: given a vector $\vec{u}$, the Krein signature of $\vec{u}$ with respect to $\bm{\eta}$ is defined as the sign of $\vec{u}^{\,\dag} \bm{\eta}\vec{u}$, with the convention $\text{sgn}(0)=0$. A subspace is said to be $\bm{\eta}$-definite if every vector within the subspace has the same Krein signature. Otherwise, we say it is $\bm{\eta}$-indefinite. 

One may show that the eigenspaces associated to eigenvalues of $\mathbf{M}$ that are (i) non-real or (ii) engender nontrivial Jordan chains -- resulting in exponential or polynomial instabilities over time, respectively -- must be $\bm{\eta}$-indefinite. In case (ii), $\mathbf{M}$ is nondiagonalizable. This suggests that $\bm{\eta}$-definiteness may be closely tied to dynamical stability. This connection is made precise in a key result of Krein stability theory, known as the {\em Krein-Gel'fand-Lidskii theorem} (see e.g., Chapter III of Ref.\,\cite{YakuKrein}). Specifically, if $\lambda$ denote a {\em real} eigenvalue of $\mathbf{M}$ with a corresponding eigenspace $\mathcal{E}_{\lambda}$, this theorem establishes that: 

\smallskip

{\bf (a)} {\em If $\mathcal{E}_{\lambda}$ is $\bm{\eta}$-definite, all Jordan chains associated to $\lambda$ are of length one and, in addition, there exists $\epsilon,\delta>0$ such that if $\mathbf{M}'$ is $\bm{\eta}$-pseudo-Hermitian and $\norm{\mathbf{M}'-\mathbf{M}}<\delta$, then all the eigenvalues $\lambda'$ of $\mathbf{M}'$ such that $|\lambda'-\lambda|<\epsilon$ are also real and correspond to Jordan chains of length one.}

{\bf (b)} {\em If $\mathcal{E}_{\lambda}$ is $\bm{\eta}$-indefinite and all the Jordan chains associated to $\lambda$ are of length one, then for {\em every} $\epsilon>0$ there exists a $\bm{\eta}$-pseudo-Hermitian matrix $\mathbf{M}'$ which obeys $\norm{\mathbf{M}-\mathbf{M}'}<\epsilon$ and possesses {\em non-real} eigenvalues in an open neighborhood of $\lambda$.}

\smallskip

This theorem reveals that the stability phase boundaries of pseudo-Hermitian systems corresponds {\em precisely} to those parameters $p$ for which $\mathbf{M}(p)$ has {\em one or more indefinite eigenspaces}. As stated earlier, if $\mathbf{M}(p)$ is nondiagonalizable, it must have at least one indefinite eigenspace. In this case, the point $p=p_\text{EP}$ is called an \textit{exceptional point} (EP). Remarkably, however, $\mathbf{M}(p)$ may be fully diagonalizable yet still posses an indefinite eigenspace. In this case, we say that the eigenvalues corresponding to this eigenspace supports a \textit{Krein collision} (KC). Equivalently, there is a KC at $\lambda$ if there are at least two corresponding eigenvectors $\vec{u}_\pm$ with opposite Krein signatures. We may then compactly characterize dynamical stability phase boundaries as those points in the parameter space for which $\mathbf{M}(p)$ has either an EP or a KC. 

These results are directly applicable to translationally invariant QBHs. If we append to $\mathbf{g}$ a dependence on some set of parameters, $\mathbf{g}(\bm{k};p)$, we identify the dynamical stability phase boundaries of $H$ as those points $p$ where $\mathbf{g}(\bm{k},p)$ supports an EP or KC. In the former case, $H$ is dynamically unstable. In the latter case, the system is stable but, by virtue of point {\bf (b)} above, it may be destabilized by some arbitrarily small perturbation. As a consequence of Eq.\,\eqref{eq: Geig}, one may show that both EPs and KCs can only occur if there are bands $n$ and $m$ such that $\omega_n(\bm{k}) = -\omega_m(\bm{k})$. In particular, we see that a bosonic zero mode, i.e., a mode corresponding to some band $n^*$ and momentum $\bm{k}^*$ such that $\omega_{n^*}(\bm{k}^*)=0$, is intrinsically unstable to certain arbitrarily small perturbations \cite{Squaring, Decon}.

\subsubsection{Thermodynamic stability.}

Another notion of stability that is keenly relevant for QBHs is thermodynamic stability. We say a Hamiltonian $H$ is {\em thermodynamically stable} if there exists either a finite lower or upper bound on its expectations. It is possible to show (e.g., by considering expectations $\braket{\alpha|H|\alpha}$, with $\ket{\alpha}$ a multimode coherent state) that any QBH whose Bloch dynamical matrix possesses a non-real eigenvalue is thermodynamically unstable \cite{Decon}. However, dynamical stability is neither necessary nor sufficient for thermodynamic stability. For example, consider a free particle QBH, $H \propto p^2$, where for later reference we introduce the {\em quadrature operators}
\begin{equation}
p\equiv \frac{i(a^{\dag}-a)}{\sqrt{2}}, \quad  x\equiv \frac{a^{\dag}+a}{\sqrt{2}},\quad [x, p] = i  .
\label{quad}
\end{equation}
The Hamiltonian is dynamically unstable, as evident from considering the evolution of $x$, yet manifestly bounded from below. Assessing whether a given dynamically stable QBH is also thermodynamically stable can be accomplished by examining Eq.\,\eqref{eq: Hk}. The eigenvalues of $H$ are built from integer linear combinations of the quasiparticle energies $\omega_n(\bm{k})$. Thus, a dynamically stable $H$ is bounded from below (above) if and only if $\omega_n(\bm{k})\geq 0$ ($\omega_n(\bm{k}) \leq 0$) for all $n$ and $\bm{k}$. In the first, non-negative case, it follows that the QPV is the GS of $H$. In Appendix \ref{App: BlochMat}, we illustrate how the conditions for both thermodynamical and dynamical stability can be expressed directly in terms of properties of $\mathbf{g}(\bm{k})$ for the relevant case of single internal degree of freedom (DOF, $d=1$).

Necessary and sufficient conditions for thermodynamic stability have been established for general QBHs, beyond translationally invariant ones we focus on. Specifically, a QBH $H$ is thermodynamically stable if and only if the matrix $\mathbf{H}=\bm{\tau}_3\mathbf{G}$ is positive- (or negative-) semidefinite \cite{Derezinski} (Theorem 2.4). The possibility of having {\em both} positive and negative quasiparticle energies allows QBHs to be dynamically stable, but thermodynamically unstable \cite{Decon}.

\subsubsection{The quasiparticle vacuum, its energy and correlations.}
\label{sec: qvacback}

In Sec.\,\ref{sub:Diag}, we noted that the many-body eigenstates of a dynamically stable QBH may be constructed in the standard way by exciting quasiparticles on top of the QPV, i.e., the state $\ket{\tilde{0}}$ satisfying $\beta_n(\bm{k})\ket{\tilde{0}} = 0$ for all $n$ and $\bm{k}$. Let us characterize the associated energy contribution, as appearing in Eq.\,\eqref{eq: Hk}. Starting from Eq.\,\eqref{eq: Hblockdiag}, note that we may symmetrize the hopping term, which allows us to rewrite the Hamiltonian as:
\begin{align*}
    H = \frac{1}{V}\int_\text{BZ} d^D\bm{k}\, \frac{1}{2}\bigg[ \phi^\dag(\bm{k}) \underbrace{\begin{pmatrix}
\mathbf{K}(\bm{k}) & \bm{\Delta}(\bm{k})
\\
\bm{\Delta}^*(-\bm{k}) & \mathbf{K}^*(-\bm{k})
    \end{pmatrix}}_{\mb h(\bm k)\equiv \bm{\tau}_3\mb{g}(\bm k)}\phi(\bm{k}) - \tr[\mathbf{K}(\bm{k})]\delta(\bm{0}) \bigg].
\end{align*}
The Hermiticity of $\mb h(\bm k)$ ensures the Hermiticity of $H$ while, in view of the above results, the condition $\mb h(\bm k)\geq0$ guarantees thermodynamic stability. Meanwhile, as discussed in Sec.\,\ref{sec: DSKS}, if $\mathbf{g}(\bm{k})$ has an entirely real spectrum for all $\bm{k}$, we may find a Bogoliubov transformation $\mathbf{L}(\bm{k})$ that diagonalizes $\mathbf{g}(\bm{k})$ according to $\mathbf{g}(\bm{k}) = \mathbf{L}(\bm{k}) \bm{\Omega}(\bm{k}) \mathbf{L}^{-1}(\bm{k})$, with $\bm{\Omega}(\bm{k})$ defined in Eq.\,\eqref{eq:gkdiag}.  Therefore, we have 
\begin{align*}
    H =  \frac{1}{V}\int_\text{BZ} d^D\bm{k} \,\frac{1}{2}\left[\phi^\dag(\bm{k}) \bm{\tau}_3\mathbf{g}(\bm{k})\phi(\bm{k}) - \tr[\mathbf{K}(\bm{k})]\delta(\bm{0})\right]
    =  \frac{1}{V}\int_\text{BZ} d^D\bm{k} \, \frac{1}{2}\left[\psi^\dag(\bm{k}) \bm{\tau}_3 \bm{\Omega}(\bm{k}) \psi(\bm{k}) - \tr[\mathbf{K}(\bm{k})]\delta(\bm{0})\right].
\end{align*}
Expanding out the first term in the second equality and simplifying, we arrive at the final expression
\begin{align*}
    H = \frac{1}{V}\int_\text{BZ} d^D\bm{k}\left[\sum_{n=1}^d \omega_n(\bm{k}) \beta_n^\dag(\bm{k}) \beta_n(\bm{k}) + \frac{1}{2} \left( \sum_{n=1}^d \omega_n(\bm{k}) - \tr[\mathbf{K}(\bm{k})] \right)\delta(\bm{0})\right].
\end{align*}
Thus, the energy of the QPV, $E_{\text{qpv}}$, may be read off as 
\begin{align}
\label{eq:QPVenergydensity}
  H\ket{\tilde{0}}=E_{\text{qpv}}\ket{\tilde{0}},\quad
  E_{\text{qpv}}= \frac{1}{V}\int_\text{BZ} d^D\bm{k}\,
  \frac{1}{2}\left(\,\sum_{n=1}^d \omega_n(\bm{k}) - \tr[\mathbf{K}(\bm{k})]\right)\delta(\bm{0}).
\end{align}
It is worth noting that, in the number-conserving case, $\tr[\mathbf{K}(\bm{k})] =\sum_{n=1}^d \omega_n(\bm{k})$. Therefore, this term cancels out. This is not surprising because, in this case, the quasiparticles are linear combinations of $b_i(\bm{k})$ (they do not include $b_i^\dag(-\bm{k})$), preserving the normal ordering of the operators in the real-space formulation of the Hamiltonian in Eq.\,\eqref{eq: QBH}. Normal ordering ensures that the bare Fock vacuum is annihilated by $H$ and does not have a divergent energy. 

Quasiparticle (or non-interacting) vacuum states may be most generally defined in a {\em Hamiltonian-independent} way as the states that can be reached from a Fock vacuum $\ket{0}$ by means of a (homogeneous) Gaussian unitary transformation. Gaussian unitaries $U$ are the ones that implement (linear) Bogoliubov transformations \cite{GQI}. For translationally invariant systems, we have seen in Sec.\,\ref{sub:Diag} that the Gaussian unitaries which preserve translational symmetry act as 
\begin{align}
\label{eq:GUT}
    U\phi(\bm{k})U^\dag = \mathbf{L}^{-1}(\bm{k})\phi(\bm{k}),\quad \mathbf{L}^*(\bm{k}) = \bm{\tau}_1\mathbf{L}(-\bm{k})\bm{\tau}_1,\quad \mathbf{L}^\dag(\bm{k})\bm{\tau}_3\mathbf{L}(\bm{k}) = \bm{\tau}_3,
\end{align}
where the inverse is included as a matter of convention. Writing $\psi(\bm{k}) = \mathbf{L}^{-1}(\bm{k})\phi(\bm{k})$ as before, the QPV state corresponding to the unitary $U$ is given simply by $\ket{\tilde{0}} = U\ket{0}$. This follows from:
\begin{align}
    \beta_n(\bm{k})\ket{\tilde{0}} = \beta_n(\bm{k})U \ket{0}  = (U b_n(\bm{k})U^\dag )U\ket{0} =U b_n(\bm{k})\ket{0} = 0.
    \label{eq:Uvac}
\end{align}
Furthermore, since $U$ is translationally invariant (i.e., it leaves different $\bm{k}$-sectors uncoupled), the state $\ket{\tilde{0}}$ is also translationally invariant. Because such states are reached by means of a Gaussian unitary transformation, they are necessarily Gaussian states themselves. A quantum state $\rho$ on a lattice is \textit{Gaussian} (or quasi-free) if it is uniquely determined by its bosonic mean vector and bosonic  covariance matrix (CM), respectively defined as \cite{Geza2002, Wolf2006, GQI}:
\begin{align}
\vec{m}^\rho_{\bm{j}} \equiv \braket{\phi_{\bm{j}}}_\rho  ,\quad 
(\mathbf{C}^\rho_{\bm{j},\bm{j'}})_{\ell\ell'} \equiv \braket{\{\phi_{\bm{j},\ell} - \langle \phi_{\bm{j},\ell}}, \phi_{\bm{j'},\ell'}^\dag - \braket{\phi_{\bm{j'},\ell'}^\dag} \} \rangle_\rho  , 
\label{CCM}
\end{align}
where $\{A,B\} \equiv AB+BA$ denotes the anticommutator and $\phi_{\bm{r}}$ is defined in Eq.\,\eqref{mNambu}. Since we have further restricted to homogeneous Gaussian unitary transformations, it can be shown that the corresponding mean vectors vanish, leaving only the CM to uniquely specify the state. For translationally invariant states, the CM can only depend on the separation $\bm{r} = \bm{j'}-\bm{j}$. As such, we may refocus our attention from the connected correlation function in Eq.\,\eqref{CCM} to $\mathbf{C}_{\bm{r}}^\rho \equiv \mathbf{C}_{\bm{j},\bm{j}+\bm{r}}^\rho$, which encodes all real-space correlations between the modes localized at any two sites separated by $\bm{r}$. It is also natural to define the momentum-space CM as
\begin{align*}
    \mathbf{C}^\rho(\bm{k}) \equiv \sum_{\bm{r}} e^{-i\bm{k}\cdot\bm{r}}\mathbf{C}^\rho_{\bm{r}}.
\end{align*}

In the context of QBHs, the Gaussian unitary that maps the Fourier modes $b_n(\bm{k})$ to the quasiparticles $\beta_n(\bm{k})$ is precisely the Bogoliubov transformation that diagonalizes $H$. Therefore: (i) dynamically stable QBHs are precisely those that may be diagonalized by a Gaussian unitary; and (ii) their QPVs are obtained by acting on the Fock vacuum with this Gaussian unitary. Thus, with $\rho^\text{qpv}=\ket{\tilde{0}}\bra{\tilde{0}}$, property (ii) specifically provides us with a way to compute the QPV CM in momentum space, $\mb C^\text{qpv}(\bm{k})$, and from there the real-space CM, $\mb C_{\bm{r}}^\text{qpv}$, in terms of the modal matrix $\mathbf{L}(\bm{k})$ of $\mathbf{g}(\bm{k})$. While we refer to Appendix \ref{App: CMfromModal} for the details, the final result takes a very simple form:
\begin{align}
\label{eq: Ck}
    \mathbf{C}^\text{qpv}(\bm{k}) = \mathbf{L}(\bm{k})\mathbf{L}^\dag(\bm{k}),\qquad \mathbf{C}^\text{qpv}_{\bm{r}} = \frac{1}{V}\int_{\text{BZ}} d^D\bm{k} \, e^{i\bm{k}\cdot \bm{r}} \mathbf{L}(\bm{k})\mathbf{L}^\dag(\bm{k}).
\end{align}

\subsection{Ground-state correlation functions and criticality}

As mentioned, critical quantum states are characterized by the long-range behavior of their correlation functions. For QBHs, the most relevant states and their dynamics are Gaussian, making the framework of continuous-variable (CV) \textit{Gaussian quantum information} a natural and widely adopted language \cite{CVQI, GQI}. Because many of the existing results are formulated within this formalism \cite{Cramer2006,Wolf2006,AreaLawBosonic}, and this will play a central role in our analysis of QPV criticality and entanglement, we now recast the preceding discussion in terms of the Gaussian quadrature representation.

\subsubsection{The quadrature formalism.} The basic concepts are most easily introduced by considering the simple setting of a single DOF and spatial dimension, $d=D=1$, and by working under periodic BCs on a lattice of size $N$. The generalization to multiple internals degrees of freedom and higher spatial dimension is straightforward in principle, at the cost of added notation \cite{Cramer2006}. Instead of bosonic Nambu arrays, we introduce quadrature arrays by associating a quadrature operator to each site: 
\begin{align}
\label{eq: DimLessQuads}
R \equiv [x_1,\ldots,x_N,p_1,\ldots,p_N]^T \equiv \{R_j \}, \qquad {j=1,\ldots, 2N.}
\end{align} 
The CCRs are then given by:
\begin{align*}
    [R_j,R_{j'}]=i\,\boldsymbol{\Sigma}_{ j j'}
     , \quad \boldsymbol{\Sigma}\equiv \begin{pmatrix}
    \mbox{\hspace*{2mm}0}&\mb{1}\\
    -\mb{1}&0
\end{pmatrix},
\end{align*}
where again we leave the Fock-space identity operator implicit and $\boldsymbol{\Sigma}$ denotes the \textit{symplectic form}. If $\mathbf{H}_{xx}$ ($\mathbf{H}_{pp}$) is an $N\times N$ real symmetric matrix that encodes the couplings between position (momentum) degrees of freedom, and $\mathbf{H}_{xp}$ an $N\times N$ real matrix that encodes the couplings between position and momentum degrees of freedom, we can then write a general QBH in the form 
\begin{align}
\label{eq:quadHam}
H= {\frac{1}{2}\sum
\limits_{ j j'}^{} R_{j}\mb H_{ j j'}R_{j'}- f(\mb H), }
\qquad \mb H\equiv\begin{pmatrix}
    \mb H_{xx}&\mb H_{xp}\\
    \mb H_{xp}^T&\mb H_{pp}
\end{pmatrix}.
\end{align}
Denoting $\text{Re}(\mb X)\equiv ({\mb X+\mb X^{*}})/2$ and $\text{Im}(\mb X)\equiv ({\mb X-\mb X^{*}})/2i$ the element-wise real and imaginary parts, respectively, the quadrature couplings in Eq.\,\eqref{eq:quadHam} can be related to the bosonic ones from Eq.\,\eqref{eq: QBH} as follows: 
$$\mb H_{xx}=\text{Re}(\mb K+\mb \Delta),\quad \mb H_{pp}=\text{Re}(\mb K-\mb \Delta), \quad \mb H_{xp}=\text{Im}(\mb \Delta-\mb K), \quad \mb H_{xp}^T=\text{Im}(\mb \Delta+\mb K), \quad f(\mb H)=\frac{1}{2} \tr(\mb K).$$
The important subclass of \textit{harmonic} lattices, as investigated in \cite{Wolf2006, Cramer2006}, corresponds to QBHs where the position and momentum degrees of freedom are {\em not} coupled, $\mb H_{xp}=0$ and $\mb H=\mb H_{xx}\oplus \mb H_{pp}$. 

Given a Gaussian quantum state $\rho$, the CM $\mb \Gamma$ in the quadrature basis is naturally defined as
\begin{equation}
\mathbf{\Gamma}_{j j'}^\rho \equiv \langle \{ R_{j} - \braket{R_{j}},R_{j'} - \braket{R_{j'}}\} \rangle_\rho.
\label{GammaCM}
\end{equation}
Any $\mb \Gamma^\rho$ corresponding to a physical state must be a positive-definite, real, symmetric matrix \cite{SolvingQuasifree}. Furthermore, the Heisenberg uncertainty relation obeyed by the quadratures imposes an additional constraint \cite{GaussianCov,AdessoEE}:
\begin{align}
\label{state condition}
    \mb \Gamma^\rho +i \, \boldsymbol{\Sigma}\geq 0, \quad \forall \rho\geq 0.
\end{align}
By the Williamson theorem, we know that any symmetric, positive $2N\times 2N$ matrix can be decomposed as:
\begin{align}
\label{CM decomp}
    \mb \Gamma^\rho = \mb S^T \text{diag}\big[\nu_1,\ldots,\nu_N,\nu_1,\ldots,\nu_N \big]\mb S, \quad \nu_j>0, \;\forall j, 
\end{align} 
where $\mb S$ is a \textit{symplectic transformation}, namely, one that preserves CCRs, $\mb S^T \boldsymbol{\Sigma} \mb S=\boldsymbol{\Sigma}$, and $\nu_j$ are the \textit{symplectic eigenvalues} of $\mb\Gamma^\rho$, that is, the (strictly) positive eigenvalues of $i\boldsymbol{\Sigma}\mb\Gamma^\rho$, as $\sigma(i\boldsymbol{\Sigma}\mb\Gamma^\rho)=\{\pm \nu_j\}$. Finally, imposing Eq.\,\eqref{state condition}, it follows that a valid CM must satisfy $\nu_j\geq 1$. 

While the above applies to an arbitrary Gaussian state, additional constraints arise if $\rho$ is pure. In particular, the purity condition, $\tr(\rho^2)=1$, at the level of the CM amount to $\det{\mb \Gamma^\rho}=1$ \cite{AdessoPurity,AdessoEE}. This fixes $\nu_j=1$ for all $j$, and saturates the minimum uncertainty in Eq.\,\eqref{state condition}. The condition for purity of a Gaussian state may be compactly written in terms of the CM as \cite{KrugerThesis}: 
\begin{align}
\label{CovMatPurity}
    (\boldsymbol{\Sigma}\mb \Gamma^\rho)^2=-\mb 1_{2N}.
\end{align}
Note that $\mb C^{\rho}$ in Eq.\,\eqref{CCM} and $\mb \Gamma^{\rho}$ in Eq.\,\eqref{GammaCM} are related to each other via the linear transformation that takes the bosonic Nambu array to the quadrature one, i.e., $(a_j,a_j^\dag) \mapsto (x_j,p_j)$.

\subsubsection{Existing results on long-range correlations.} Several results have been rigorously established for relevant classes of {\em thermodynamically stable} QBHs with finite-range interactions, relating the decay of GS position or momentum correlation functions to the behavior of the many-body spectral gap. To the best of our knowledge, while lattices in $D>1$ dimensions have been considered, existing analyses focus on systems with a single DOF (single-band), $d=1$.
Since the GS of a QBH is zero-mean Gaussian (hence, uniquely determined by its CM) GS criticality properties have been extensively studied through the CM formalism we described above. Let $\mb \Gamma^\text{gs} \equiv \begin{pmatrix}
        \mb \Gamma_{xx}&\mb \Gamma_{xp}\\
      \mb \Gamma_{xp}^T&  \mb \Gamma_{pp}
    \end{pmatrix}$, where $\mb \Gamma_{xx}$ contains all the GS correlations between $x$ quadratures, and the other blocks are analogously defined. The most pertinent results for GS criticality may be summarized as follows: 

\begin{outline}
\1 For a single-band, finite-range harmonic lattice, described by a QBH of the form $\mb H=\mb H_{xx}\oplus \mb 1_N$, the GS CM and the many-body energy gap are given by:
     \begin{equation}
     \label{eq:HarmonicCorrelationsPIdentity}
\mb \Gamma^\text{gs}= \mb H_{xx}^{-1/2}\oplus \mb H_{xx}^{1/2}, \quad \Delta^\text{gs}=\lambda_{\text{min}}(\mb H_{xx})^{1/2}. 
\end{equation}
When $\Delta^\text{gs}>0$, $\mb H$ is \textit{non-critical}: the GS correlations decay exponentially as
\begin{equation}
|\langle x_{\bm j}x_{\bm j'}\rangle_\text{gs}|,\,|\langle p_{\bm j}p_{\bm j'}\rangle_\text{gs}|\leq K e^{-|\bm j-\bm j'|/\xi}, \quad K>0 ,
\label{eq:HarmonicDecay}
\end{equation}
with $\xi$ being the correlation length and the constant $K$ being generally different for $x$ and $p$ correlations \cite{Cramer2006}. As the gap closes, $\Delta^\text{gs}\rightarrow 0$, both $K$ and $\xi$ diverge. When $\Delta^\text{gs}=0$, the system is \textit{critical}: it hosts long-range correlations, that cannot be bounded by a decaying exponential. Similar conclusions may be drawn for harmonic lattices that are not translationally invariant, by appropriately generalizing the expressions for $\mb \Gamma^\text{gs}$ and $\Delta^\text{gs}$\cite{Cramer2006}.
  
\1 For a single-band, finite-range, translationally invariant QBH, long-range GS correlations have been shown to arise only if the many-body energy gap of an auxiliary, \textit{symmetrized Hamiltonian} $\mathcal{S}(\mb H)$ vanishes \cite{Wolf2006}. Specifically, the GS CM may be expressed in the form
\begin{align*}
 \mb \Gamma^\text{gs}&\equiv (\boldsymbol{ \mathcal{E}}^{-1}\oplus \boldsymbol{ \mathcal{E}}^{-1})\,\boldsymbol{\Sigma}\mathcal{S}(\mb H)\boldsymbol{\Sigma}^T, \quad  
\end{align*}
where $\boldsymbol{ \mathcal{E}}$, the symmetrized Hamiltonian $\mathcal{S}(\mb H)$ and the corresponding  {\em symmetrized gap} are defined by
 \begin{align}
\label{eq:WolfCovMat}
\boldsymbol{ \mathcal{E}} \equiv \sqrt{\mb H_{xx}\mb H_{pp}-\tfrac{1}{4}(\mb H_{xp}+\mb H_{xp}^T)^2}&, \quad 
\mathcal{S}(\mb H)\equiv \begin{pmatrix}
    \mb H_{xx}&\frac{\mb H_{xp}+\mb H_{xp}^T}{2}\\
   \frac{\mb H_{xp}+\mb H_{xp}^T}{2}&\mb H_{pp}
\end{pmatrix}, \quad \Delta^\text{gs}_{\text{sym}}\equiv \lambda_{\text{min}}(\boldsymbol{ \mathcal{E}}).
\end{align} 
All the GS correlations are bounded by a decaying exponential and the system is non-critical when $\Delta^\text{gs}_{\text{sym}}>0$.  This is significant because for non-symmetric QBHs ($\mb H_{xp}\neq \mb H_{xp}^T$), $\Delta^\text{gs}$, the gap of the physical Hamiltonian, can generically be zero without $\Delta^\text{gs}_{\text{sym}}=0$. This hints at the fact that even for thermodynamically stable QBHs, the vanishing of the Hamiltonian's energy gap fails to characterize critical phase boundaries in general.

\end{outline}

\subsection{Information-theoretic approaches to criticality}
\label{sub: info-th}

Investigations of quantum phase transitions and critical phenomena have been naturally enriched and complemented by tools from quantum information theory. In what follows, we will discuss two of them. The first focuses on the entanglement structure of many-body ground states, where quantities such as entanglement entropy and spectra diagnose correlations and universal scaling behavior near criticality \cite{Osborne,RolandoGE,Srivastava2024}. The second emphasizes the differential-geometric structure of quantum states and their parent Hamiltonians, using the {\em quantum fidelity} and the {\em quantum geometric tensor} (QGT) to probe state-distinguishability and sensitivity to parameter changes near critical singularities in parameter space \cite{Caves,GuFid,ZanardiInfoGeom}.

\subsubsection{Entanglement entropy and area law.} The EE quantifies the amount of bipartite entanglement between a distinguished region and the rest of the system. Formally, let $B$ denote a distinguished subregion of the defining lattice, with $A\cup B$ corresponding to the full system. Even if the GS $\rho$ of the system is pure,  hence its von Neumann entropy, $S(\rho)\equiv-\text{Tr} (\rho \log \rho)$, is zero, a non-vanishing entropy generically arises for the state describing subsystem $B$ alone, obtained via the partial trace $\rho_{B}\equiv \text{Tr}_{A}(\rho)$. The EE of subsystem $B$ is then defined as $S(\rho_{B})\equiv S_B \equiv -\text{Tr}\left(\rho_{B} \log \rho_{B}\right)$. Notably, for Gaussian states, $S_B$ can be computed from the CM alone \cite{AdessoEE}: 
\begin{align*}
    S_B&\nonumber=\sum\limits_{i=1}^N S(\nu_i), \qquad S(\nu)=\frac{\nu+1}{2}\log\frac{\nu+1}{2}-\frac{\nu-1}{2}\log\frac{\nu-1}{2},
\end{align*}
where $\nu_i$ are the  \textit{symplectic eigenvalues} of $[\bm{\Gamma}]_B$, namely, the positive eigenvalues of $[i \boldsymbol{\Sigma} \mb \Gamma]_B$, with $[\bullet ]_B$ denoting a projection of the matrix to a submatrix that only has support on $B$. For a pure state $\rho$, the scaling of $S(\rho_B)$ is upper-bounded by the \textit{logarithmic negativity} \cite{Vidal2002,PlenioLogNeg}, $E_N(\rho_B)\equiv \log \norm{\rho^{T_B}}_1$, where $\norm{X}_1=\text{tr}\left((X^{\dag}X)^{1/2}\right)$ is the trace norm, and $T_B$ denotes partial transposition with respect to subsystem $B$ \cite{Audenart2002,Vidal2002,AdessoContVar2004}. The logarithmic negativity of a pure Gaussian state can equivalently be computed from the corresponding partially transposed CM \footnote{Note that, on the level of the CM, partial transposition is achieved by time-reversing the quadratures that belong to the distinguished subregion, which amounts to mapping $(x_A,x_B,p_A,p_B)\mapsto (x_A,x_B,p_A,-p_B)$. As an example, for a symmetrically bisected chain with a CM $\mb \Gamma$, the transposed CM takes the form $\mb \Gamma^{T_B}=\boldsymbol{\Theta}_B\mb \Gamma \boldsymbol{\Theta}_B$, with $\boldsymbol{\Theta_B}=\mathbf{1}_N\oplus \boldsymbol{\tau}_3$.}. 
All in all:
\begin{align*}
 E_N=   -\frac{1}{2}\sum\limits_{k=1}^{2N}\log\left(\text{min}(1,|\sigma\left(i\boldsymbol{\Sigma}^{-1}\mb \Gamma^{T_B}\right)|\right) = 
 \log \prod\limits_{k} \frac{1}{\sqrt{\bar{\nu}_k}}, 
\end{align*} 
where $\bar{\nu}_k$ are the symplectic eigenvalues of the partially transposed CM that are less than 1 \cite{Audenart2002,Vidal2002,AdessoContVar2004}. 

The behavior of EE in the GS of a finite-range, harmonic QBH on a general (not-necessarily-translationally invariant) lattice has been found to be \emph{highly non-generic}: it is non-extensive, and scales only as the boundary of the distinguished region, obeying an \textit{area law} -- rather than a volume law, which is the case for a typical quantum state \cite{Cramer2006, AreaLawBosonic,EisertEntropyEntanglement}. For gapped Hamiltonians of the form $\mb H_{xx}\oplus \mb 1$, an upper bound on the EE scaling was found analytically \cite{Cramer2006} by upper-bounding $E_N$ and leveraging the exponential decay of quadrature correlations:
\begin{align}
\label{eq: EEUpperBound}
    S_B\leq 16c^2\norm{\mb H_{xx}}\text{Li}_{1-2D}(e^{-1/\xi})\frac{s(B)}{(\Delta^\text{gs})^2},
\end{align}
with  $\xi$ the correlation length, $c>0$, $s(B)$ the area of region $B$, $\text{Li}_{1-2D}$ a polylogarithm of degree $1-2D$ and $\Delta^\text{gs}$ the many-body gap of Eq.\,\eqref{eq:HarmonicCorrelationsPIdentity}. This expression makes explicit the validity of the area law, as well as an inverse dependence of the EE bounds on the energy gap. 

For 1$D$ systems, the area law implies no scaling of EE with system-size. For the special case of a translationally invariant, symmetrically bisected harmonic chain with $\mb H_{pp}=\mb 1_N$ and nearest-neighbor position couplings, a simple analytic form of $E_N$ to bound EE has been found and shown to be $N$-independent: \cite{Audenart2002,EisertReview}:
\begin{align}
\label{eq: LogNegBound}
 E_N=\frac{1}{2}\log\bigg(\frac{\norm{\mb H_{xx}}^{1/2}}{\lambda_{\text{min}}(\mb H_{xx})^{1/2}}\bigg).
\end{align}
Again, the divergence of the entanglement bound with the energy gap-closing stands out. The area law has been shown \cite{StatDependenceCramer} to hold even in critical bosonic lattices for $D>1$. However, findings from conformal theory have shown that 1$D$ critical bosonic systems host a logarithmic divergence in EE \cite{EisertReview}. 

\subsubsection{Quantum fidelity and metric tensor.} 
Alongside entanglement measures, information-geometric tools have also become attractive methods for diagnosing quantum phase transitions. At the most basic level, a fundamental measure of the overlap between two arbitrary quantum states is provided by the {\em (Bures) quantum fidelity}, $\mathscr{F}(\rho_1,\rho_2) \equiv \text{Tr}\big( \sqrt{\sqrt{\rho_1} \,\rho_2\sqrt{\rho_1}}\big)$. Both the fidelity and the associated susceptibility, expressing the fidelity response to a change in the parameter(s) driving the system across phase boundaries have been extensively applied to study GS criticality \cite{GuFid}. Notably, the quantum fidelity between two arbitrary Gaussian states can be analytically computed in terms of the difference between their mean vectors $\vec{m}_2-\vec{m}_1$ and their CMs $\mb \Gamma_1,\mb\Gamma_2$ \cite{GaussFidelity}. For zero-mean pure states, the resulting expression can be simplified to a particularly compact form, which will be relevant to our subsequent analysis:
\begin{align}
 \mathscr{F}( \rho_1,\rho_2)= \frac{1}{\sqrt[4]{\det\left((\mb \Gamma_1+\mb \Gamma_2)/2\right)}}  .
 \label{eq:Fid}
\end{align}

A related but more general approach has been developed by considering the differential-geometric structure of a parametrized Hamiltonian family, say, $H(\vec{\lambda})$, with associated GS $|{\psi}(\vec{\lambda})\rangle$, 
where $\vec{\lambda}\equiv \lambda_\mu$ belongs to the parameter manifold $\mathcal{M}$. If $\partial_{\mu}\equiv\frac{\partial}{\partial \lambda^{\mu}}$ denotes differentiation with respect to a Hamiltonian parameter, a measure of distance between GSs corresponding to different parameter values may then be constructed via the {\em quantum metric tensor} (QMT) $g_{\mu\nu}$ \cite{QuamtumMetric}, defined in terms of the real (symmetric) part of the quantum geometric tensor (QGT) $\chi_{\mu\nu}$:
\begin{align}
 \chi_{\mu\nu} &\equiv 
 \langle \partial_{\mu}\psi|\partial_{\nu}\psi\rangle-\langle \partial_{\mu}\psi|\psi\rangle\langle \psi|\partial_{\nu}\psi\rangle.
\end{align}
In \cite{ZanardiInfoGeom}, singular behavior of the QMT was proposed as a faithful indicator of GS quantum phase transitions. This was motivated by the observation that, in the vicinity of a quantum critical point, the distinguishability between GSs corresponding to infinitesimally different Hamiltonian parameters is strongly enhanced, manifesting in a divergence of the QMT at criticality. 

This idea has recently been generalized to pseudo-Hermitian and non-Hermitian Hamiltonians by appropriately modifying the QGT and the QMT \cite{NHQGT,BosonicQGT}. The QGT for pseudo-Hermitian Hamiltonians was found to be:
\begin{align}
\label{eq: QGT}
    \chi^{a b }_{\mu\nu} = (\partial_{\mu}\vec{\beta}^{a})^{\dag}(\partial_{\nu}\vec{\beta}^{b})-(\partial_{\mu}\vec{\beta}^{a})^{\dag}(\vec{\beta}^{b})(\vec{\beta}^{a})^{\dag}(\partial_{\nu}\vec{\beta}^{b}).
\end{align}
Here, $a, b \in\{L,R\}$ label the left- and right- eigenvectors $\vec{\beta}$ of the Bloch dynamical matrix $\mb g(\bm k)$, normalized as $(\vec{\beta}^{a})^{\dag}(\vec{\beta}^{b})=1$, with $a\neq b$, while $\{\mu,\nu\}$ are, as before, related to the parameters of the Hamiltonian. Since, for QBHs, the left- and right- eigenvectors obey $(\vec{\beta}^{L})^{\dag}\equiv(\vec{\beta}^{R})^{\dag}\boldsymbol{\tau}_3$, the QMT is then given by:
\begin{align}
\label{eq: QMT}
g^{ a b }_{\mu\nu}= \frac{  \chi^{a b }_{\mu\nu}+  \chi^{a b }_{\nu\mu}}{2}= \frac{  \chi^{a b }_{\mu\nu}+  \chi^{b a }_{\nu\mu}}{2}.
\end{align}

\section{A new take on criticality}
\label{sec: newtake}

\subsection{The Krein gap and the degeneracy of the QPV manifold}

\begin{figure}[t]
    \centering\hspace*{20mm}
    \includegraphics[width=0.8\columnwidth]{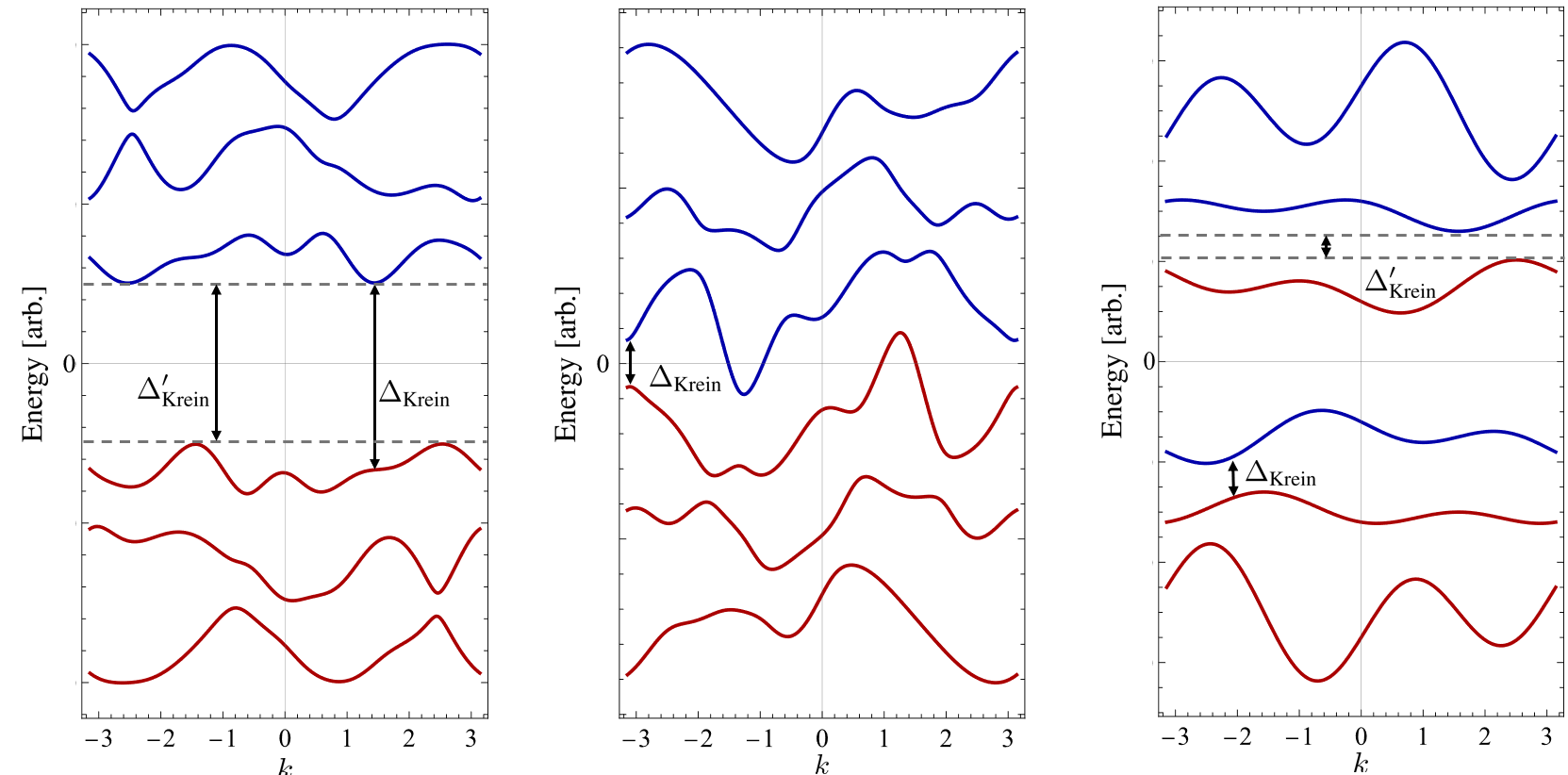}
    \caption{Representative band structures with the direct and indirect Krein gaps labeled. In all cases, the blue curves represent the three quasiparticle energy bands $\{\omega_1(k) \leq \omega_2(k)\leq \omega_3(k)\}$, while the red curves depict the (fictitious) hole energy bands $\{-\omega_1(-k) \geq -\omega_2(-k) \geq -\omega_3(-k)\}$. \textbf{Left:} A thermodynamically stable configuration with both gaps about zero energy. \textbf{Middle:} A thermodynamically unstable configuration with a vanishing indirect Krein gap but nonzero direct Krein gap about zero energy. \textbf{Right:} A thermodynamically unstable configuration with both the indirect and the direct Krein gaps about a nonzero energy. Note that, due to the symmetry of particle and hole bands, i.e., $\omega_n^h(\bm{k}) = - \omega_n^p(-\bm{k})$, the minimal gap between the second-highest particle band and the highest hole band is equal to that of the lowest particle band and the second-highest hole band. 
While all three of these example systems possess a unique translationally invariant QPV, only the left and right systems possess a unique QPV altogether, as $\Delta_{\text{Krein}}'=0$ for the middle system. }
\label{fig: KreinGapCartoon}
\end{figure}

The central idea of this work is to extend the notion of GS criticality to a larger class of QBHs, by situating the QPV into the same role  the many-body GS occupies in the conventional setting. Thus, {\em if a QPV exists -- and is unique -- for a given, possibly thermodynamically unstable QBH, we will say that the QBH is critical if the correlations in this state cannot be exponentially bounded}. This identification is in many ways natural, since in the thermodynamically stable regime the QPV reduces to the many-body GS, recovering the standard notion of criticality. For thermodynamically stable QBHs the existence of a positive many-body gap also implies uniqueness of the GS \cite{Wolf2006}; in our broader setting, we need to determine conditions under which (i) a given QBH has a QPV; and (ii) the QPV is unique. The necessary and sufficient condition for (i), as described in Sec.\,\ref{sec: qvacback}, is dynamical stability. To make headway in addressing (ii), appropriate generalizations of the notion of ``gap closing'' are needed:

\medskip
 
\noindent 
{\bf Definition.} {\em Let $H$ be a dynamically stable, translationally invariant QBH, with associated quasiparticle energies $\{ \pm \omega_n (\pm {\bm{k}}) \}$ [Eq.\,\eqref{eq:gkdiag}]. The \emph{(direct) Krein gap} of $H$ is the quantity} 
\begin{align}
\label{eq: KreinGap}
             \Delta_{\text{Krein}} \equiv \min_{\bm{k},m,n}|\omega_{n}(\bm{k})+\omega_{m}(-\bm{k})|, \quad {\bm{k}} \in {\mathrm{BZ}} . 
\end{align}
{\em Similarly, the \emph{indirect Krein gap} of $H$ is defined by} 
\begin{align}
\label{eq: IndKreinGap}
         \Delta_{\text{Krein}}' \equiv \min_{\bm{k},\bm{k'},m,n}|\omega_{n}(\bm{k})+\omega_{m}(\bm{k'})| \leq \Delta_\text{Krein},
          \quad {\bm{k}}, {\bm{k'}} \in \mathrm{BZ} .  
\end{align}

\medskip

Some physical intuition into the meaning of the Krein gaps may be gained by introducing the notion of particle and (fictitious) hole bands. The {\em particle bands} are the quasiparticle energy bands, $\omega_n^p(\bm{k}) \equiv \omega_n(\bm{k})$, while the {\em hole bands} are \footnote{We remark that, unlike in the fermionic case, there is no physically meaningful notion of a bosonic hole. If we let $a' = a^\dag$ then $[a',a'{}^\dag] = -1_F$, with $1_F$ being the identity on Fock space. Thus, the CCRs are violated~\cite{Squaring,10.1063/5.0035358}.} $\omega^h_n(\bm{k}) \equiv -\omega_n(-\bm{k})$.
 The Krein gaps can then be conveniently rewritten as
\begin{align*}
    \Delta_\text{Krein} = \min_{\bm{k},m,n}|\omega_{n}^p(\bm{k})-\omega^h_{m}(\bm{k})|,\quad\Delta_\text{Krein}' = \min_{\bm{k},\bm{k'},m,n}|\omega_{n}^p(\bm{k})-\omega^h_{m}(\bm{k'})|.
\end{align*}
Thus, the (direct) Krein gap may be thought as the smallest energy separation between a particle and a hole of the same momentum, while the indirect Krein gap is the smallest energy separation between any particle and hole normal mode. Closing of a Krein gap requires a degeneracy between particle and hole bands, with the degeneracy occurring {\em locally} in $\bm{k}$-space in the direct case. In Fig.\,\ref{fig: KreinGapCartoon}, we depict examples of these quantities for representative band structures in 1$D$. By definition, it follows that the Krein gap can {\em only} become zero at an EP or a (``direct'', fixed-$\bm k$)  KC. According to Sec. \ref{sec: DSKS}, this implies that $\Delta_{\text{Krein}}=0$ {\em only} at the dynamical stability phase boundaries of a QBH. We are now in a position to give our first main result:

\begin{theorem}
\label{thm: UniqueVac}
Let $H$ be a dynamically stable, translationally invariant QBH. Then: 
\begin{itemize}
\item[(i)] If the indirect Krein gap $ \Delta_\text{Krein}'  >0$, there exists a unique QPV. In addition, the QPV is translationally invariant.
\item[(ii)] If the direct Krein gap $\Delta_\text{Krein} >0$, there exists a unique translationally invariant QPV.
\end{itemize}
\end{theorem}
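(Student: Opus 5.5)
Our plan is to reduce uniqueness of the QPV to the freedom in choosing the Bogoliubov modal matrix $\mathbf{L}(\bm{k})$. We then show that the Krein-gap conditions kill all freedom except phases, which leave the state invariant.

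First, characterize all QPVs. A QPV is a state $\ket{\tilde 0}$ annihilated by a complete set of quasiparticle annihilators $\beta'$ that diagonalize $H$ in the sense of Eq.~\eqref{eq: betaomega}. Any such $\beta'$ is a linear combination of the modes $\phi(\bm{k})$, possibly mixing momenta if translation invariance is not imposed a priori. The commutation relation $[H,\beta']=-\omega'\beta'$ forces $\beta'$ to lie in the span of the eigenoperators of $\mathrm{ad}_H$ with eigenvalue $-\omega'$. These eigenoperators are the $\beta_n(\bm{k})$, with eigenvalue $-\omega_n(\bm{k})$, and the $\beta_m^\dag(\bm{k}')$, with eigenvalue $+\omega_m(\bm{k}')$. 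Since the spectrum is real and $\mathbf{g}(\bm{k})$ is diagonalizable, these operators form a basis.

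The key step is then a no-mixing argument. Suppose some new annihilator contains a component along both $\beta_n(\bm{k})$ and $\beta_m^\dag(\bm{k}')$. Then the two eigenvalues must agree, $-\omega_n(\bm{k})=\omega_m(\bm{k}')$, i.e.\ $\omega_n(\bm{k})+\omega_m(\bm{k}')=0$.
\begin{itemize}
\item For (i), $\Delta'_\text{Krein}>0$ forbids this for all $\bm{k},\bm{k}'$. Hence every quasiparticle annihilator is a combination of the $\beta_n(\bm{k})$ alone, with the $\bm{k}$ mixing allowed only among degenerate energies.
\item For (ii), we restrict to translationally invariant choices. There $\beta'$ carries definite momentum, so only $\bm{k}'=-\bm{k}$ can enter through $\phi(\bm{k})$. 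The direct gap $\Delta_\text{Krein}>0$ then forbids the mixing.
\end{itemize}
Equivalently, in the Krein language of Sec.~\ref{sec: DSKS}, a positive gap makes every eigenspace of $\mathbf{g}(\bm{k})$ $\bm{\tau}_3$-definite. Within a definite eigenspace, the $\bm{\tau}_3$-normalization of Eq.~\eqref{eq: Geig} restricts the freedom in $\mathbf{L}(\bm{k})$ to $\mathbf{L}\mapsto\mathbf{L}\,(\mathbf{U}(\bm{k})\oplus\mathbf{U}^*(-\bm{k}))$ with $\mathbf{U}$ unitary. In (i), the analogous statement is a unitary mixing among the $\beta$'s across momenta.

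Next, conclude uniqueness. If the new annihilators are $\beta'=\sum U\beta$ with $U$ unitary, then the common kernel of the $\beta'$ equals the common kernel of the $\beta$. Concretely, in (ii) the CM $\mathbf{C}^\text{qpv}(\bm{k})=\mathbf{L}\mathbf{L}^\dag$ of Eq.~\eqref{eq: Ck} is invariant under $\mathbf{L}\mapsto\mathbf{L}(\mathbf{U}\oplus\mathbf{U}^*)$. A zero-mean Gaussian state is fixed by its CM, so the translationally invariant QPV is unique. In (i), the unique state $U_{\rm Bog}\ket{0}$ from Eq.~\eqref{eq:Uvac} built from the translation-invariant $\mathbf{L}$ is itself translationally invariant. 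Existence follows from dynamical stability, as in Sec.~\ref{sec: qvacback}.

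The main obstacle is making the no-mixing argument rigorous in the infinite lattice, where the $\beta_n(\bm{k})$ are operator-valued distributions. One must justify that an annihilator in a mixed eigenspace must in fact mix, and not merely that it is allowed to. We would handle this by working at finite size $N$ with periodic BCs, where everything is finite-dimensional. There the $\mathrm{ad}_H$ eigenspace argument is exact, and the gap conditions (which persist on the discretized BZ) give uniqueness for each $N$.

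Finally, we would note why this gap is needed. When it closes, mixed combinations such as $\cosh r\,\beta_n+\sinh r\,\beta_m^\dag$ are legitimate annihilators, because their commutator is still $1$ given the opposite signatures. These yield distinct squeezed vacua, which explains the non-uniqueness at $\Delta'_\text{Krein}=0$.
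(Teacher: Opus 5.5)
Your proposal is correct and is essentially the paper's argument run directly rather than by contrapositive. The paper uses Lemma~1 to say that a second QPV $V\ket{\tilde 0}$ requires some normal mode $\beta'$ with $[H,\beta']=-\omega_n\beta'$ and nonzero weight on the $\beta_m^\dag$, and the resulting diagonal eigenvalue equation forces $\omega_n(\bm k)=-\omega_m(\bm k')$ (with $\bm k'=-\bm k$ once translation invariance is imposed in part (ii)), which is exactly your no-mixing step. One caution on your closing remark: the paper stresses (Appendix~\ref{app: converses}) that for $D>0$ the squeezed combinations $\cosh r\,\beta_n+\sinh r\,\beta_m^\dag$ act only on a measure-zero set of momenta and leave real-space correlations unchanged, so non-uniqueness when the Krein gap closes is not established---this does not affect your proof of the theorem itself.
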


Accordingly, the fact that a QBH $H$ is ``Krein-gapped''  in the thermodynamic limit is {\em sufficient} for a unique translationally invariant QPV to exist. While we defer a proof of the above theorem and a discussion of the converse implication to Appendix \ref{app: UniqueVac}, we note that the statement in part $(ii)$ is more specialized and, in line with the definition of the direct Krein gap, is tied to translational symmetry. Since, in fact, the concept of the indirect Krein gap is not inherently linked to any specific symmetry, a statement akin to part $(i)$ holds more generally: namely, the indirect Krein gap being open implies uniqueness of the QPV for {\em any} dynamically stable QBH. Since in this work we focus on translationally invariant QPVs, we restrict our attention to the {\em direct} Krein gap in what follows. Consequently, the main question is whether the manifold of translationally invariant QPVs exhibits degeneracy, which part $(ii)$ addresses.

\subsection{The Krein gap and the QPV correlations}
\label{Sec: CorrelationsQPV}

The uniqueness of the QPV of a translationally invariant QBH away from dynamical stability phase boundaries grants us the freedom to bypass the constraint of thermodynamic stability, by lifting the calculation of correlation functions from the GS to the QPV. Our goal is then to determine necessary conditions for a QBH to possess a QPV that supports long-range correlations. We do so by identifying sufficient conditions for arbitrary QPV correlation functions to be exponentially bounded.

To establish boundedness of the correlations, we leverage known analyticity results from Fourier analysis. Specifically, it is known that a real multivariate function $f(\bm k)$, periodic in each variable $k_\mu$, is analytic if and only if the corresponding Fourier coefficients are exponentially bounded: $|f_{\bm r}|\leq A \,e^{-\eta |\bm r|}$, $A\in {\mathbb R}$ \cite{DynSysChaos}. It follows that real-space correlation functions are exponentially bounded if and only if the corresponding momentum-space correlations are analytic in $\bm k$.

Let $H$ be a dynamically stable Krein-gapped QBH and $\mathbf{C}(\bm{k}) = \mathbf{L}(\bm{k})\mathbf{L}^\dag(\bm{k})$ be the QPV momentum-space CM defined in Eq.\,\eqref{eq: Ck} (for brevity we will omit the superscript ``qpv" henceforth). Recall from Eq.\,\eqref{eq:gkdiag} that $\mathbf{L}(\bm{k})$ is a pseudo-unitary matrix that diagonalizes $\mathbf{g}(\bm{k})$. This, in terms of the eigenvectors $\vec{\beta}_{n,\pm}(\bm{k})$ in Eq.\,\eqref{eq: Geig} (which form the columns of $\mathbf{L}(\bm{k})$), gives
\begin{align}
\label{eq: Ckbeta}
    \mathbf{C}(\bm{k}) = \sum_{n=1}^d\left[\vec{\beta}_{n,+}(\bm{k})\vec{\beta}_{n,+}^{\,\dag}(\bm{k})+\vec{\beta}_{n,-}(\bm{k})\vec{\beta}_{n,-}^{\,\dag}(\bm{k})\right].
\end{align}
If the eigenbasis $\{\vec{\beta}_{n,\pm}(\bm{k})\}$ constituted an orthonormal basis, Eq.\,\eqref{eq: Ckbeta} would provide a resolution of the identity. This, however, is not the case whenever $\mathbf{g}(\bm{k})$ is non-normal. Instead, the conditions in Eq.\,\eqref{eq: Geig} provide an alternative resolution of the identity:
\begin{align}
\label{eq: resid}
    \bm{1}_{2d} = \sum_{n=1}^d\left[\vec{\beta}_{n,+}(\bm{k})\vec{\beta}_{n,+}^{\,\dag}(\bm{k})-\vec{\beta}_{n,-}(\bm{k})\vec{\beta}_{n,-}^{\,\dag}(\bm{k})\right]\bm{\tau}_3.
\end{align}
With Eqs.\,\eqref{eq: Ckbeta} and \eqref{eq: resid} in hand, we define a \textit{QPV Krein projector} by letting 
\begin{align}
\label{eq: Pk}
    \mathbf{P}(\bm{k}) \equiv \frac{1}{2}\big(\bm{1}_{2d} + \mathbf{C}(\bm{k})\bm{\tau}_3\big) = \sum_{n=1}^d\vec{\beta}_{n,+}(\bm{k})\vec{\beta}_{n,+}^{\,\dag}(\bm{k})\bm{\tau}_3.
\end{align}
This operator is a \textit{pseudo-orthogonal projector} (or $\bm{\tau}_3$-orthogonal projector \cite{GohbergIndefiniteLA}), i.e., it satisfies $\mathbf{P}(\bm{k})^2 = \mathbf{P}(\bm{k})$ and $\mathbf{P}^\dag(\bm{k}) = \bm{\tau}_3\mathbf{P}(\bm{k})\bm{\tau}_3$. More specifically, it projects onto the (direct sum of) particle eigenspaces: $\mathbf{P}(\bm{k})\vec{\beta}_{n,s} = \delta_{s,+}\vec{\beta}_{n,s}$. Importantly, Eq.\,\eqref{eq: Pk} completely characterizes the QPV correlation matrix and its dependency on $\bm{k}$ in terms of a projector which has an equivalent representation as a Riesz projector, namely:
\begin{align}
\label{eq: Riesz}
    \mathbf{P}(\bm{k}) = \frac{1}{2\pi i}\oint_\gamma \big(z\bm{1}_{2d}-\mathbf{g}(\bm{k})\big)^{-1}\,dz,
\end{align}
where $\gamma(\bm{k})$ is any rectifiable contour containing only the particle excitation energies $\{\omega_n(\bm{k})\}$ for some fixed $\bm{k}$. This key technical result facilitates establishing our main theorem: 

\begin{theorem}
\label{th:MainResult} 
Let $H$ be a dynamically stable, translationally-invariant QBH. The Krein gap is open, $\Delta_{\text{Krein}}>0$, if and only if the QPV is unique and all correlations are exponentially bounded.
\end{theorem}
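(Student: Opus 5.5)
For the forward direction, the plan is to use the Riesz representation of the QPV Krein projector, Eq.\,\eqref{eq: Riesz}, together with the identity $\mathbf{C}(\bm{k}) = (2\mathbf{P}(\bm{k})-\bm{1}_{2d})\bm{\tau}_3$ that follows from Eq.\,\eqref{eq: Pk}. Uniqueness of the translationally invariant QPV is then supplied directly by Theorem~\ref{thm: UniqueVac}(ii). Exponential boundedness reduces to showing that every entry of $\mathbf{C}(\bm{k})$ extends analytically to a complex strip $|\text{Im}\,k_\mu|<\eta$. By the Fourier-analytic fact quoted above, this gives $|(\mathbf{C}_{\bm{r}})_{\ell\ell'}|\le A e^{-\eta|\bm{r}|}$. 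The correlations in the quadrature basis inherit the same bound, since they are related to $\mathbf{C}$ by a fixed linear transformation.

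The central step is the construction of a single contour $\gamma$ that works uniformly in a neighborhood of each $\bm{k}_0\in$ BZ and extends to complex $\bm{k}$. Since $\Delta_\text{Krein}>0$, at $\bm{k}_0$ every particle eigenvalue $\omega_n(\bm{k}_0)$ is separated from every hole eigenvalue $-\omega_m(-\bm{k}_0)$ by at least $\Delta_\text{Krein}$. Because the spectrum is real, I would take $\gamma$ to be a finite union of circles of radius at most $\Delta_\text{Krein}/3$ centered on the particle eigenvalues. This contour encloses exactly the particle eigenvalues, and no hole eigenvalue lies within distance $\Delta_\text{Krein}/3$ of it.

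The entries of $\mathbf{g}(\bm{k})$ are trigonometric polynomials, so they extend to entire functions of $\bm{k}\in\mathbb{C}^D$. Continuity of eigenvalues then implies that for $\bm{k}$ in a small complex polydisc around $\bm{k}_0$, no eigenvalue of $\mathbf{g}(\bm{k})$ crosses $\gamma$, and the enclosed eigenvalues are the continuations of the particle ones. On that polydisc the resolvent $(z-\mathbf{g}(\bm{k}))^{-1}$ is jointly analytic in $(z,\bm{k})$ for $z\in\gamma$, so the Riesz integral defines a holomorphic $\mathbf{P}(\bm{k})$. On real $\bm{k}$ this agrees with Eq.\,\eqref{eq: Pk}, because the projector onto a given spectral subset is unique.

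Compactness of the BZ gives a finite subcover, and uniqueness of analytic continuation glues the local pieces into a holomorphic function on a uniform strip. I expect this uniformity to be the main technical obstacle. The particle eigenvalues may cross one another, so individual eigenvectors $\vec{\beta}_{n,+}$ need not be analytic. The Riesz projector avoids this difficulty because it depends only on the \emph{total} particle spectral subset, and for that only the Krein gap matters, not any intraband gap.

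For the converse direction, assume $\Delta_\text{Krein}=0$ while $H$ remains dynamically stable. Since $H$ has no EP, this forces a KC at some $\bm{k}^*$: $\omega_n(\bm{k}^*)=-\omega_m(-\bm{k}^*)$. I would argue that correlations cannot all be exponentially bounded. Suppose they were. Then $\mathbf{C}(\bm{k})$, and hence $\mathbf{P}(\bm{k})$, would be real-analytic on the whole BZ, so the particle subspace would vary analytically through $\bm{k}^*$.

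Near $\bm{k}^*$, take a loop or path in $\bm{k}$-space. Along it, a particle band and a hole band meet at $\bm{k}^*$ with eigenvectors of opposite Krein signature, and their spectral projectors are forced to exchange or to fail to extend analytically. The aim is to show that the analytically continued $\mathbf{P}$ would have to contain a vector of negative Krein signature, or that $\mathbf{P}$ is discontinuous. I would try this first through the boundedness of $\mathbf{C}=\mathbf{L}\mathbf{L}^\dag$ near the collision.

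If the QPV is not unique, the statement holds trivially, because the uniqueness conclusion fails. At a KC, one can choose the Krein-signature labeling on either side of the degenerate eigenspace. This produces distinct translationally invariant QPVs, which I would treat as the converse direction of Theorem~\ref{thm: UniqueVac}, deferred to the appendix. The converse therefore reduces to showing that a collision yields either degeneracy of the QPV or non-analyticity of $\mathbf{C}$. Making this dichotomy rigorous, in particular ruling out a ``removable'' collision in which the bands touch but $\mathbf{P}$ stays analytic, is the delicate point. If that case cannot be excluded, I would fall back on showing that such touching bands admit a relabeling that changes the QPV, and therefore destroys uniqueness.
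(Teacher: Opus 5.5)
Your forward direction is the paper's proof. It uses uniqueness from Theorem~\ref{thm: UniqueVac}(ii), analyticity of $\mathbf{P}(\bm k)$ from the Riesz formula \eqref{eq: Riesz} with a contour isolating the particle eigenvalues, and the Fourier--analyticity correspondence. Your version is more careful than the paper's: you get a uniform strip from compactness, and you note that only the total particle spectral subset matters, so crossings among particle bands are harmless. One step is missing. You need to say why the eigenvalues enclosed by $\gamma$ at real $\bm k$ near $\bm k_0$ are the positive-signature ones, so that the Riesz projector agrees with \eqref{eq: Pk}. This follows because the range of the Riesz projector varies continuously, is $\bm\tau_3$-positive at $\bm k_0$ (an open condition), and has dimension $d$.

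The converse does not go through, and the obstruction is more than technical. Under the hypothesis of dynamical stability, $\Delta_{\text{Krein}}=0$ can only come from a KC. Your plan needs a KC to force either non-analyticity of $\mathbf{C}$ or non-uniqueness of the QPV, and neither branch is available.

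\emph{First branch.} This branch is false. Take the double harmonic chain of Sec.~\ref{sub:dhc} at $\Omega_1=\Omega_2=0$. There $\mathbf{g}(k)=(1-\cos k)\,\mathbf{g}_0$, where $\mathbf{g}_0$ is $k$-independent with eigenvalues $\pm2\sqrt{K_1K_2}$. Consequently $\mathbf{P}(k)$ is constant, its range stays $\bm\tau_3$-positive through $k=0$ (no signature exchange), and the correlations are strictly on-site, Eq.~\eqref{eq:multicritKCcorrelat}. Yet $\mathbf{g}(0)=0$ is diagonalizable, so $H$ is dynamically stable with $\Delta_{\text{Krein}}=0$. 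Any $\mathbf{g}(k)=f(k)\,\mathbf{g}_0$ with $f$ a real even trigonometric polynomial that has zeros behaves the same way, including cases (e.g.\ $f=\cos k$) where particle and hole bands genuinely cross. So the ``removable collision'' you worry about really occurs. For such $H$, the converse is equivalent to non-uniqueness of the QPV.

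\emph{Second branch.} Your fallback of relabeling or squeezing at the collision momentum is exactly the map $V(r,\theta)$ of Eq.~\eqref{eq: Dsqueeze}. Appendix~\ref{app: converses} shows this map alters $\mathbf{C}(\bm k)$ only on a null set and hence gives the same Gaussian state. The paper states explicitly that non-uniqueness at a direct KC is unknown for $D>0$, so you cannot defer to a converse of Theorem~\ref{thm: UniqueVac}.

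\emph{The paper's argument.} The paper's own converse is the single inference ``$\mathbf{P}(\bm k)$ analytic $\Rightarrow$ a contour enclosing only the particle eigenvalues exists for every $\bm k$ $\Rightarrow$ $\Delta_{\text{Krein}}>0$.'' That is precisely the step you flag as delicate, and the same example refutes it: $\mathbf{P}$ is constant, yet no contour separates $0$ from $0$ at $k=0$.

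What is solidly established, by you and by the paper, is the forward implication. Equivalently, a degenerate translationally invariant QPV, or correlations that are not exponentially bounded, force $\Delta_{\text{Krein}}=0$. The ``only if'' needs a genuinely new ingredient, for instance a construction of a physically distinct translationally invariant QPV at a KC that survives the thermodynamic limit. Failing that, the statement should be weakened.
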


\begin{proof}
If the Krein gap is open, then (i) the QPV is unique by Theorem \ref{thm: UniqueVac} and (ii) for any fixed $\bm{k}$, the set of particle eigenvalues $\{\omega_n(\bm{k})\}$ of $\mathbf{g}(\bm{k})$ is isolated. Thus, it is possible to smoothly choose a contour $\gamma(\bm{k})$ as described under Eq.\,\eqref{eq: Riesz}.  Now, $\mathbf{g}(\bm{k})$ is always analytic in $\bm{k}$ for systems with finite-range couplings. It follows from Eq.\,\eqref{eq: Riesz} that $\mathbf{P}(\bm{k})$, and therefore $\mathbf{C}(\bm{k})$, is analytic in $\bm{k}$. Thus, $\mathbf{C}(\bm{k})$ is exponentially bounded in view of the above-mentioned Fourier-analysis result. 

Conversely, if the QPV is unique with exponentially bounded correlations, then $\mathbf{P}(\bm{k})$ is necessarily analytic in $\bm{k}$. A contour $\gamma(\bm{k})$ can then be smoothly chosen to enclose only the $d$ particle eigenvalues for any $\bm{k}$ and thus, the hole eigenvalues $\{-\omega_n(-\bm{k})\}$ remain bounded away from the set of particle eigenvalues for all $\bm{k}$. Consequentially, we have $\Delta_\text{Krein}>0$.
\end{proof}

The above theorem motivates a broader notion of criticality that leaves behind the GS in favor of the QPV, as anticipated. Specifically, it indicates that the Krein gap serves as an appropriate analogue of the many-body spectral gap within the class of dynamically stable, translationally invariant QBHs, and that long-range correlations in the QPV can emerge {\em only} at dynamical stability phase boundaries, where $\Delta_{\text{Krein}} = 0$. Since this may happen at both an EP and a KC, it is worth further examining what differences these two scenarios may engender.

\subsection{QPV correlations specialized to single-band models}

In order to isolate the essential features, we proceed with the case of a single degree of freedom per site, i.e., $d=1$. Specializing to $d=1$ sharpens the picture considerably: the QPV CM  admits a closed-form expression in terms of the four real functions $d_0(\bm k), \ldots, d_3(\bm k)$, EPs and KCs are cleanly separated by a simple geometric criterion, and the parameter $d_0(\bm k)$ that decouples thermodynamic from dynamical stability is explicitly identified.

As a first step, we express the Bloch dynamical matrix in the form 
\begin{align*}
       \mb g(\bm k)\equiv \begin{pmatrix}
       d_0(\bm k)+d_3(\bm k)&i d_1(\bm k)+d_2(\bm k) \\
    i d_1(\bm k)-d_2(\bm k)   &d_0(\bm k)-d_3(\bm k)
    \end{pmatrix}, \qquad d_j(\bm k)=\sum\limits_{|\bm r|<R}d_{j\bm r}\, e^{i\bm k\cdot \bm r}.
\end{align*}
Physically, the real-space coefficients $d_{3\bm r}$ and $d_{0\bm r}$ encode real and imaginary hopping terms, while $d_{2\bm r}, d_{1\bm r}$ encode real and imaginary pairing terms, respectively. As we show in Appendix \ref{App: BlochMat}, the dynamical stability condition amounts to
\begin{align}
     \mathcal{E}(\bm k)^2
   = d_3(\bm k)^2-d_2(\bm k)^2-d_1(\bm k)^2\geq 0, \quad \forall {\bm k}, 
   \label{dynStab}
\end{align}
with $\mb g(\bm k)$ diagonalizable and $\mathcal{E}(\bm k)\equiv\tfrac{1}{2}( \omega(\bm k)+\omega(-\bm k))$ denoting the separation between the particle and the hole bands. The boundary of the dynamically stable regime is the set of parameters such that $\mathcal{E}(\bm k')=0$ for one or more momentum values $\bm k'$. On this boundary, the system remains dynamically stable unless $\mb g(\bm k')$ fails to be diagonalizable. 

On the other hand, thermodynamic stability is found to hold when
\begin{align}
\label{thStab}
\hspace{0.5em}d_3(\bm k)\geq 0 \text{ and }   \mathcal{E} (\bm k)^2-d_0(\bm k)^2\geq 0, \quad \forall \bm k, \quad \text{or}\quad 
 \hspace{0.4em}d_3(\bm k)\leq  0 \text{ and }   \mathcal{E} (\bm k)^2-d_0(\bm k)^2\geq 0, \quad \forall \bm k, 
\end{align}
where the two possibilities correspond to a system with energies bounded from below or from above, respectively. It follows that a dynamically stable QBH may be taken from a thermodynamically stable to a thermodynamically unstable phase by tuning $d_0(\bm k)$ \footnote{We note, in passing, that a harmonic lattice can only have non-zero $d_3(\bm k)$ and $d_2(\bm k)$ terms; hence, if it is dynamically stable, it is also thermodynamically stable if $d_3(\bm k)\geq 0$, or if $d_3(\bm k)\leq 0$, for all $\bm k$.}. Going a step further, the CM of the QPV in momentum space can be seen to attain the following form (see Appendix \ref{App: CMfromModal1} for detail): 
\begin{align}
\label{GammaQPV}
   \mb \Gamma^\text{qpv}(\bm k)&= \frac{\text{sgn}(d_3(\bm k))}{\mathcal{E}(\bm k)}\begin{pmatrix}
        d_3(\bm k)-d_2(\bm k)&-d_1(\bm k)\\
        -d_1(\bm k)& d_3(\bm k)+d_2(\bm k)
    \end{pmatrix}.
\end{align}

\noindent 
Based on the above findings, we are now in a position to draw a few key conclusions:

 \begin{enumerate}
   \item[1)] The absence of the parameter that tunes thermodynamic stability, $d_0(\bm k)$, implies that the QPV correlation functions are \textit{insensitive to the loss of thermodynamic stability}. Therefore, the CM of the QPV must be identical to the one of the GS in a regime where the QBH is thermodynamically stable. This is exactly what we find: first, even when thermodynamic stability is lost, $\mb \Gamma^\text{qpv}$ remains a {\em valid} CM describing a pure quantum state (as we formally prove in Appendix \ref{app:ValidCovMat}); second, the QPV CM in Eq.\,\eqref{GammaQPV} is structurally identical to the one describing the GS in Eq.\,\eqref{eq:WolfCovMat}.

  \item[2)] The gap of the symmetrized Hamiltonian in Eq.\,\eqref{eq:WolfCovMat}, introduced in \cite{Wolf2006} to track the onset of GS criticality, acquires a concrete physical significance, namely, it quantifies the distance to a nearest dynamical stability phase boundary. Mathematically, it relates directly to the Krein gap, 
$$ \Delta^\text{gs}_{\text{sym}}=  
\lambda_{\text{min}}\left(\boldsymbol{ \mathcal{E}}\right)=\underset{\bm k}{\text{min }} \mathcal{E}(\bm k)= \tfrac{1}{2} \, \Delta_{\text{Krein}}. $$
\end{enumerate}

\noindent 
It is useful to note that the entries of the momentum-space CM  take the form
\begin{align}
\label{entries}
f(\bm{k}) \in  \left\{ \frac{d_3(\bm{k}) \pm d_2(\bm{k})}{\mathcal{E}(\bm{k})},
\, \frac{d_1(\bm{k})}{\mathcal{E}(\bm{k})}
\right\},
 \quad \mathcal{E}(\bm{k}) = \sqrt{d_3(\bm{k})^2 - d_2(\bm{k})^2 - d_1(\bm{k})^2},\quad \forall \bm{k}\in \mathrm{BZ}.
\end{align}
It is illuminating to visualize the structure of $\mathcal{E}(\bm k)^2$ in the three-dimensional parameter space spanned by $(d_1(\bm k), d_2(\bm k), d_3(\bm k))$ at a fixed $\bm k$. The equation $\mathcal{E}(\bm k)^2 = d_3(\bm k)^2 - d_2(\bm k)^2 - d_1(\bm k)^2$ defines a cone: its interior, $\mathcal{E}(\bm k)^2 > 0$, is the dynamically stable region, while its surface, $\mathcal{E}(\bm k)^2 = 0$, marks the boundary where dynamical stability is lost. On the conical surface away from the apex, $\mathcal{E}(\bm k) = 0$ with at least one $d_i(\bm k) \neq 0$ -- these points are EPs. The apex itself, where 
$d_1(\bm k) = d_2(\bm k) = d_3(\bm k) = 0$ simultaneously, is a KC. Physically, this translates into the following conclusions:

\begin{itemize}
\item[3)] 
At an EP $\bm{k} = \bm{k}_c$, the Krein gap closes, $\Delta_{\text{Krein}} = 2\mathcal{E}(\bm{k}_c)= 0$, but not all $d_i(\bm{k}_c)$ vanish. Consequently, for at least one CM component $f(\bm{k})$, the numerator is finite, whereas the denominator vanishes, implying that $f(\bm{k})$ is non-analytic at $\bm{k}_c$. As noted, such a non-analyticity in $\bm{k}$-space is a sufficient condition for the absence of exponentially decaying correlations in real space.
    
\item[4)] 
At a KC, the Krein gap closes, $\Delta_{\text{Krein}} = 2\mathcal{E}(\bm{k}_c)= 0$, but now all $d_i(\bm{k}_c)$, $i=1,2,3$ 
vanish simultaneously, causing {\em both} the numerator and denominator of $f(\bm{k})$ in Eq.\,\eqref{entries} to vanish. Consequently, $\lim\limits_{\bm{k}\to \bm{k}_c} f(\bm{k})$ may not be well-defined, and the behavior of real-space correlation functions may depend sensitively on the direction of approach in parameter space, $\vec{\alpha}\to \vec{\alpha}_c$, with $\vec{\alpha}$ denoting the tunable Hamiltonian parameters, and $\Delta_{\text{Krein}} (H(\vec{\alpha}_c))=0$.
\end{itemize}

With that, the emerging picture for our generalized notion of criticality can then be summarized as follows:

\smallskip

\noindent {\bf Claim: Generalized criticality from Krein-gap closing.} {\em In a QBH with a finite number of Krein-gap closings, EPs mark critical boundaries where QPV correlations cease to decay exponentially and become long-range. KCs exhibit a more complex critical behavior, for which the scaling and divergence of correlation functions can vary depending on the direction of approach.}

\smallskip

The above is consistent with the behavior seen in equilibrium critical phenomena; in particular, multicritical points arise from the intersection of two or more critical boundaries and, corresponding to the fact that multiple relevant parameters must be tuned simultaneously, different diverging correlation lengths may be associated to different paths in parameter space, or divergence is only seen if the point is approached along the critical line \cite{Sachdev,Deng,Patra}. In the following section, we will explore and reinforce these ideas with illustrative examples.

\section{Illustrative models}
\label{sec:models}

\subsection{Warm-up: A simple
harmonic-chain model}
\label{sub:PC}

Let us first apply the framework we have developed to a quintessential example of a model displaying GS criticality, a 1D harmonic lattice which describes a phonon chain, and can be equivalently thought of as a discretized Klein-Gordon field. The relevant QBH reads \cite{Dualities}:
\begin{align}
\label{eq:GHC}
        H&=\frac{1}{2}\sum_{j\in \mathbb{Z}}\Big(\Omega(p_j^2+x_j^2)-2 J x_jx_{j+1}\Big)\\
    \nonumber&=\frac{1}{2}\sum_{j\in \mathbb{Z}}\left(\Omega(a_j^{\dag}a_j+a_ja_j^{\dag})-J(a_{j+1}^{\dag}a_j^{\dag}+a_{j+1}^{\dag}a_j+\text{H.c.})\right), 
    \quad \Omega\geq 2J\geq 0,
\end{align}
whose many-body energy gap $\Delta$ is finite for all $\Omega>2J$ and becomes zero when $\Omega=2J$. From the dynamical matrix (which we explicitly give in Appendix \ref{App: GHCCovMat}), we determine that for $\Omega>2J$, the system is dynamically stable, while at $\Omega=2J$ it develops an EP. Thermodynamic stability holds throughout $\Omega\geq 2J$. Therefore, the procedure we have described for computing QPV correlations amounts to computing correlation functions in the GS for this model. Plugging in the appropriate parameters $\big(d_3(k)=\Omega-J \cos(k), d_2(k)=-J\cos(k)\big)$ into Eq.\,\eqref{GammaQPV} yields, after some simplification, the following expressions: 
\begin{align}
    \langle x_{j}x_{j+r}\rangle
    \sim \int^{\pi}_0 \frac{dk }{2\pi}\frac{1}{\sqrt{1-\alpha\cos (k)}} \, \cos(k r),     \quad 
    \langle p_{j}p_{j+r}\rangle
    \sim \int^{\pi}_0 \frac{dk }{2\pi}\sqrt{1-\alpha \cos (k)}\cos(kr), \quad \alpha\equiv \frac{2J}{\Omega}, 
\end{align}
where we have identified $\mathcal{E}(k)=\sqrt{1-\alpha \cos (k)}$ and the off-diagonal blocks of the CM are identically zero, since the model is harmonic. The Krein gap, which coincides with the ordinary spectral gap, $\Delta_\text{Krein} = \Delta =2 \text{min}_k \,\mathcal{E}(k) = 2\sqrt{1-\alpha}$, is open for $0\leq \alpha<1$, while it closes for $\alpha=1$.
Since this is a thermodynamically stable system, the transition from an open Krein gap to a closed one is, in fact, just a transition from a phase with a finite many-body energy gap to one with a zero many-body energy gap. Thus, critical behavior is expected to emerge at $\alpha=1$. Following this thread, let us examine the above correlations more closely at different points in the parameter space. 

For $0\leq \alpha<1$, the integrands are real and analytic for all $k$, hence, as remarked, bounded by decaying exponentials as in Eq.\,\eqref{eq:HarmonicDecay}. The associated correlation length may be determined by letting $\xi \equiv 1/\eta$, where $\eta$ denotes the largest value for which the integrands admit an analytic continuation into the complex plane under $k\mapsto k+i\eta$. Setting $\Delta_\text{Krein}=0$, the edge of real-analyticity is marked by $\sqrt{1-\alpha \cos (k)}=0$, which yields $k =\pm\arccos(1/\alpha)$. Without loss of generality, let $k^{*}\equiv \arccos(1/\alpha)$, and $\tilde{\alpha}\equiv 1/\alpha$. We then have 
\begin{align*}
 \eta = \text{Im}(k^*)= \text{Im}(\arccos(\tilde{\alpha}))=\text{ Im}(-i\log(\sqrt{\tilde{\alpha}^2-1}+\tilde{\alpha}))=-\log(\sqrt{\tilde{\alpha}^2-1}+\tilde{\alpha}),
\end{align*}
where we used $\sqrt{\tilde{\alpha}^2-1}+\tilde{\alpha}>0$. Finally, this yields $\xi=1/\eta=\frac{1}{2(1-\alpha)^{1/2}}+\mathcal{O}(1-\alpha)^{1/2}$, meaning that, as the Krein-gap closing point is approached, the correlation length diverges with a critical exponent of $1/2$. Following the definition of the dynamic critical exponent $z$ in standard critical phenomena \cite{Sachdev}, it is natural to let  $\Delta_\text{Krein}\sim\xi^{-z}$ near a gap-closing point. Here, $\Delta_{\text{Krein}}=2\sqrt{1-\alpha}$, together with $\xi\sim (1-\alpha)^{-1/2}$. Accordingly, $z=1$.

When $\alpha=1$, the position correlation functions $\langle x_{j}x_{j+r}\rangle$ are unbounded for all $r$, whereas 
\begin{align*}
    \langle p_{j}p_{j+r}\rangle
    &\sim \int^{\pi}_0 \frac{dk }{2\pi}\sqrt{1-\cos (k)}\cos(kr)\sim \frac{1}{1-4r^2}.
\end{align*}
As expected, something dramatic happens at the gap-closing point -- with one set of correlations becoming unbounded, and the other set displaying algebraic decay with distance $r$. This is exactly the behavior of correlations reported for this model (up to slight changes in parametrization) in \cite{Wolf2006}. 

{\em Remark.---} We stress that the above results on correlations are {\em not} tied to the thermodynamic stability of the simple harmonic chain we considered. To make this point concrete, consider modifying the model in Eq.\,\eqref{eq:GHC} by adding an imaginary-hopping coupling term (see also Ref.~\cite{DynamicalMetastability}): 
\begin{align}
\label{eq:GHCwImHopHam}
    H\mapsto H+\frac{i\gamma}{2}\sum\limits_{j\in \mathbb{Z}}\,\big(a_{j+1}^{\dag}a_j-a_{j}^{\dag}a_{j+1}\big)\quad  \text{or}\quad  H\mapsto H+\frac{\gamma}{2}\sum\limits_{j\in \mathbb{Z}}\,\big(p_{j+1}x_j-x_{j+1}p_{j}\big).
\end{align}
The Bloch dynamical matrix is modified according to $ \mb g(k)\mapsto \mb g(k)+\gamma \sin(k) \mb 1_2.$ From conditions derived in Sec.\,\ref{Sec: CorrelationsQPV}, we know that for a large enough $\gamma=\gamma_{c}$ (see also Appendix \ref{App: GHCCovMat}), this additional term causes the system to transition from being  thermodynamically stable to unstable, but has no effect on dynamical stability. Crucially, however, we also know that this modification \textit{leaves all the correlation functions unchanged}. Thus, for $\Omega>2J$, all correlations are exponentially bounded, while at $\Omega=2J$, this system hosts long-range, algebraically decaying momentum correlations, just like the original model, despite lacking thermodynamic stability beyond $\gamma>\gamma_c$.

\subsection{An interpolation model}
\label{sub:im}

\begin{figure}[t]
    \centering
    \includegraphics[width=\textwidth]{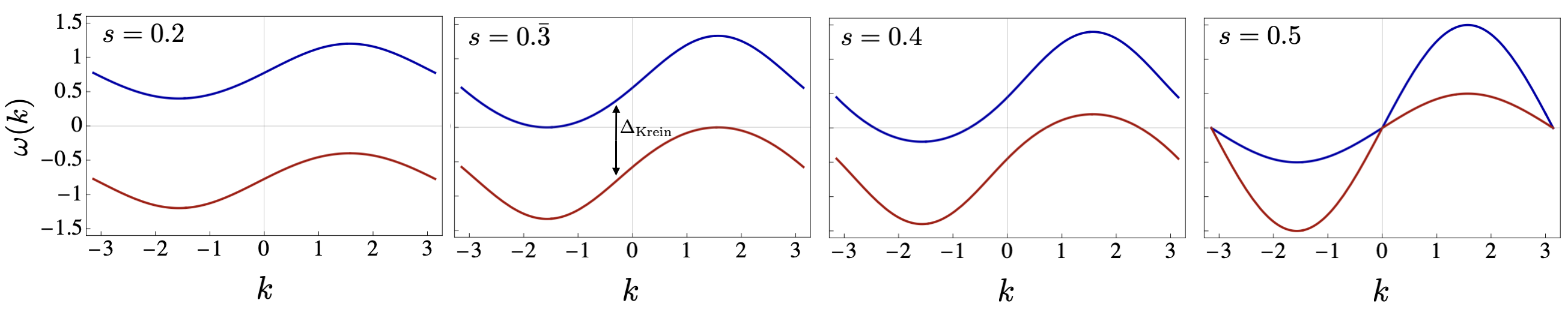}
    \vspace*{-4mm}
    \caption{
Eigenvalues of the interpolation model, Eq.\,\eqref{eq: InterpHam}, with positive (blue) and negative (red) Krein signature.  At $s=s_1 = 1/3$, thermodynamic stability is lost -- the indirect Krein gap closes at zero energy. However, the direct Krein gap remains open until $s=s_2 = 1/2$, where it finally closes for $k=0,\pm \pi$. The values $s_{1}, s_2$ are defined in Eq.\,\eqref{special} and the parameters are $J=2,\Delta=\Omega=1$. }
    \label{fig:BandStruc}
\end{figure}

Thus far, we forayed into a thermodynamically unstable regime by modifying a workhorse of standard bosonic criticality, only to realize a scenario where a thermodynamic stability phase transition takes place, yet the correlation functions are trivially affected. We now wish to take on a more challenging task and start in the opposite limit. In particular, let us consider a nearest-neighbor chain that is, loosely speaking, the opposite of harmonic, i.e., it only has couplings between $x$ and $p$ quadratures (as defined in Eq.\,\eqref{eq: DimLessQuads}):
\begin{align*}
   H= \frac{1}{2}\sum\limits_{j\in \mathbb{Z}}\big((J+\Delta)x_jp_{j+1}-(J-\Delta)p_jx_{j+1}\big), \quad J > \Delta.
\end{align*}
Unfortunately, this QBH model is not only thermodynamically, but also dynamically unstable. In fact, the conditions in Eqs.\,\eqref{dynStab}-\eqref{thStab} can be used to show that this is true for any model that lacks harmonic terms. To remedy this, we introduce, again, a simple harmonic contribution:
\begin{align}
\label{eq: InterpHam}
   H= \frac{1}{2}\sum\limits_{j\in \mathbb{Z}}\Big[ (1-s) \, \Omega
    (x_j^2+p_j^2)+s\,\big((J+\Delta)x_jp_{j+1}-(J-\Delta)p_jx_{j+1}\big)\Big], \quad s\in[0,1], \quad \Omega>0. 
\end{align}
In this way, we can interpolate between a model that is fully stable, both thermodynamically and dynamically, for $s=0$, and one -- the bosonic Kitaev chain \cite{ClerkBKC, Decon} -- which is known to be both thermodynamically and dynamically unstable under bi-infinite BCs, for $s=1$. Up to reparameterization, the above QBH coincides with the ``Kitaev oscillator chain'' introduced in \cite{DynamicalMetastability} and studied in connection with finite-size scaling.
 
The ``interpolation model'' defined in Eq.\,\eqref{eq: InterpHam} displays a rich stability phase diagram, which may be characterized by computing the relevant dynamical matrix and using the conditions in Eqs.\,\eqref{dynStab}-\eqref{thStab}, as further discussed in Appendix \ref{App: InterpCovMat}. The Krein gap $\Delta_\text{Krein} = 2 \sqrt{1-\alpha^2},$ with $\alpha^2\equiv \frac{\Delta^2 s^2}{\Omega^2 (1-s)^2}$ (see also Fig.\,\ref{fig:BandStruc} for a depiction of the quasiparticle energy bands), and two special points play an important role in the discussion:
\begin{align}
s_1\equiv \frac{1}{1+J/\Omega}, \qquad s_2\equiv \frac{1}{1+\Delta/\Omega}. 
\label{special}
\end{align}
As it turns out, assuming that $J> \Delta$, the model is thermodynamically stable only in the region $s\in [0,s_1]$, but retains dynamical stability all the way up to $s\in [0,s_2)$, with $s_2$ being an EP. This stability phase diagram thus allows us to analyze the behavior of the correlation functions both in the thermodynamically stable and unstable phases, to assess whether there is a meaningful transition between the two. By using the explicit expressions given in Appendix \ref{App: InterpCovMat}, along with Eq.\,\eqref{GammaQPV}, we obtain:
\begin{align}
\label{eq:interpcorrel}
\begin{pmatrix}
        (\mb \Gamma_{xx})_{j,j+r}&  (\mb \Gamma_{xp})_{j,j+r}\\
         (\mb \Gamma_{px})_{j,j+r}     &  (\mb \Gamma_{pp})_{j,j+r}
    \end{pmatrix}&={\int^{\pi}_{-\pi}\frac{dk}{2\pi} e^{ikr}  \underbrace{\frac{1}{\sqrt{1-\alpha ^2 \cos ^2(k)}}\left(
\begin{array}{cc}
1 & -\alpha \cos (k) \\
-\alpha \cos (k )& 1\\
\end{array} \right)}_{\Gamma^\text{qpv}(k)}.}
\end{align}

For $s<s_2$, or $\alpha<1$, the Krein gap is open throughout, whereas for $\alpha=1$, the Krein gap vanishes at $k=0, \pm \pi$. Thus, for $0\leq \alpha <1$, all integrands in Eq.\,\eqref{eq:interpcorrel} are real-analytic and the correlation functions are bounded by decaying exponentials. Note, in particular, that the point of loss of thermodynamic stability, $s_1$, as well as the entire many-body gapless region $s\in (s_1,s_2)$, is non-critical and there is {\em no} qualitative change in the correlation functions between $s<s_1$ and $s_2>s>s_1$. 

As $\alpha\rightarrow 1$, the correlation length $\xi$ in the exponential bounds becomes larger and larger. Since the denominator for all the above momentum-space correlation functions is $\sqrt{1-\alpha^2\cos^2(k)}$, the edge of real-analiticity is determined by $\sqrt{1-\alpha^2\cos^2(k)}=0$. Solving for $k$ yields $k=\pm\arccos(\pm 1/\alpha)$ and, as in Sec.\,\ref{sub:PC}, we may let $k^{*}\equiv \arccos(1/\alpha)$ without loss of generality. Similar to the harmonic-chain model, the correlation length is then found to diverge as $\xi=\frac{1}{2 (1-\alpha)^{1/2}}+\mathcal{O}\left((1-\alpha)^{1/2}\right)$. 
By further writing $\Delta_{\text{Krein}}=\sqrt{1-\alpha^2} \rightarrow\sqrt{2}(1-\alpha)^{1/2}+\mathcal{O}((1-\alpha)^{3/2})$, we find, as before, $z=1$.

\begin{figure}[t]
\centering \hspace*{15mm}
    \includegraphics[width=0.9\linewidth]{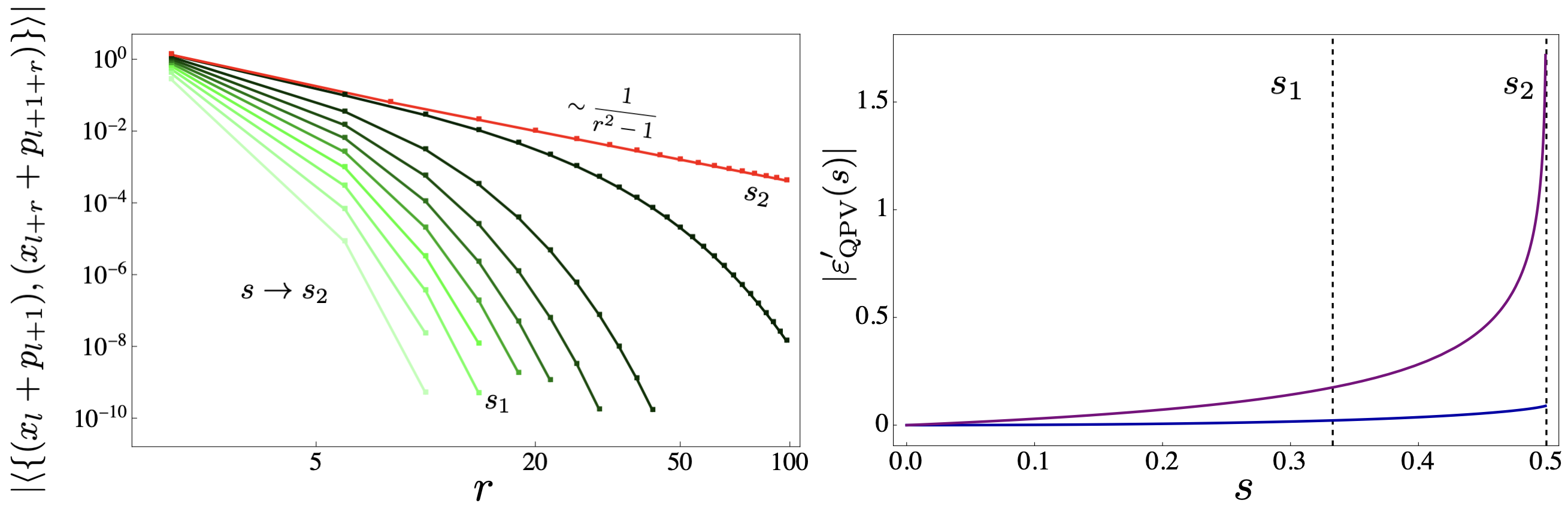}
    \vspace*{-4mm}
\caption{\textbf{Left:} Correlation functions for the interpolation model from Eq.\,\eqref{eq:algebraiccorrelinterp}, for different values of $s$. For $s<s_2$, the darker the color, the larger the value of $s$. The curve corresponding to $s= s_2$ is shaded in red. The exponential decay persists throughout (before and after the loss of thermodynamic stability at $s_1$), with the correlation length diverging as $s\rightarrow s_2$. Correlations are algebraic at $s_2$. \textbf{Right:} Absolute values of the QPV energy density $\varepsilon_\text{qpv}\equiv E_\text{qpv}/V$ (blue, from Eq.\,\eqref{eq:QPVenergydensity}) and its first derivative (purple). The derivative diverges as $\Delta_{\text{Krein}}\rightarrow 0$ with $s\rightarrow s_2$. Here, $J=2,\Delta=\Omega = 1$.}
\label{fig: interpcorrelat}
\end{figure}

Precisely at $\alpha=1$, the Krein gap vanishes both for $k=0$ and $\pm \pi$, causing all integrals in Eq.\,\eqref{eq:interpcorrel} to diverge for all $r$. This is in contrast with the simple harmonic chain, for which $\Delta_\text{Krein}$ closes at a single $k$-point and, therefore, while the $xx$ correlations diverge, the $pp$ correlations remain finite. In the latter case, it is immediate to relate this behavior to the linear symmetries (the ``zero modes'' \cite{PostBosoranas}) of the system at criticality and the expectation that the GS should be invariant. At its critical point, the simple harmonic chain Hamiltonian commutes with the total momentum $\sum_{j\in \mathbb{Z}} p_j$, and the $xx$ correlations in an invariant GS can only accommodate this symmetry by formally taking the value infinity. Similarly, the zero modes of the interpolation model at $\alpha=1$,  which take the form $\sum_j\frac{x_j+p_j}{\sqrt{2}}$ and $\sum_j(-1)^j\frac{x_j-p_j}{\sqrt{2}}$, force the $xx,xp$, and $pp$ correlators to diverge. Nonetheless, it is possible to find a linear combination of these simple correlators that does remain invariant under the linear symmetries of the model and also cancels out the divergence in the integral at the critical point. Indeed, we identify an algebraically decaying correlator: 
\begin{align}
\label{eq:algebraiccorrelinterp}
  \langle\{ (x_{j}+p_{j+1}), (x_{j+r}+p_{j+1+r}) \}\rangle 
&=
\int^{\pi}_0\frac{dk}{\pi}\frac{2\cos(kr)-\alpha \cos(k)[\cos(k(r+1))+\cos(k(r-1))]}{\sqrt{1-\alpha^2\cos^2 (k)}}\bigg\vert_{\alpha=1}\\
\nonumber&=\int^{\pi}_0\frac{dk}{\pi}\sin (k) \cos(kr)=-\frac{ \cos (\pi  r)+1}{\pi(r^2-1)}.
\end{align}
The behavior of these correlations is illustrated in Fig.\,\ref{fig: interpcorrelat} (left panel), which reveals that the correlation functions remain exponentially bounded across both thermodynamically stable and unstable regimes, with no qualitative distinction between them. Long-range correlations appear \emph{solely at the dynamical stability phase boundary}. Moreover, as seen in the right panel of Fig.\,\ref{fig: interpcorrelat} a divergence in the derivative of the QPV energy density is seen to emerge at the dynamical (rather than the thermodynamical) stability phase boundary -- reinforcing that the former constitutes the true critical phase boundary in QBHs.

\subsection{A double harmonic-chain model}
\label{sub:dhc}

\begin{figure}[t]
    \centering \hspace*{15mm}
    \includegraphics[width=0.85\textwidth]{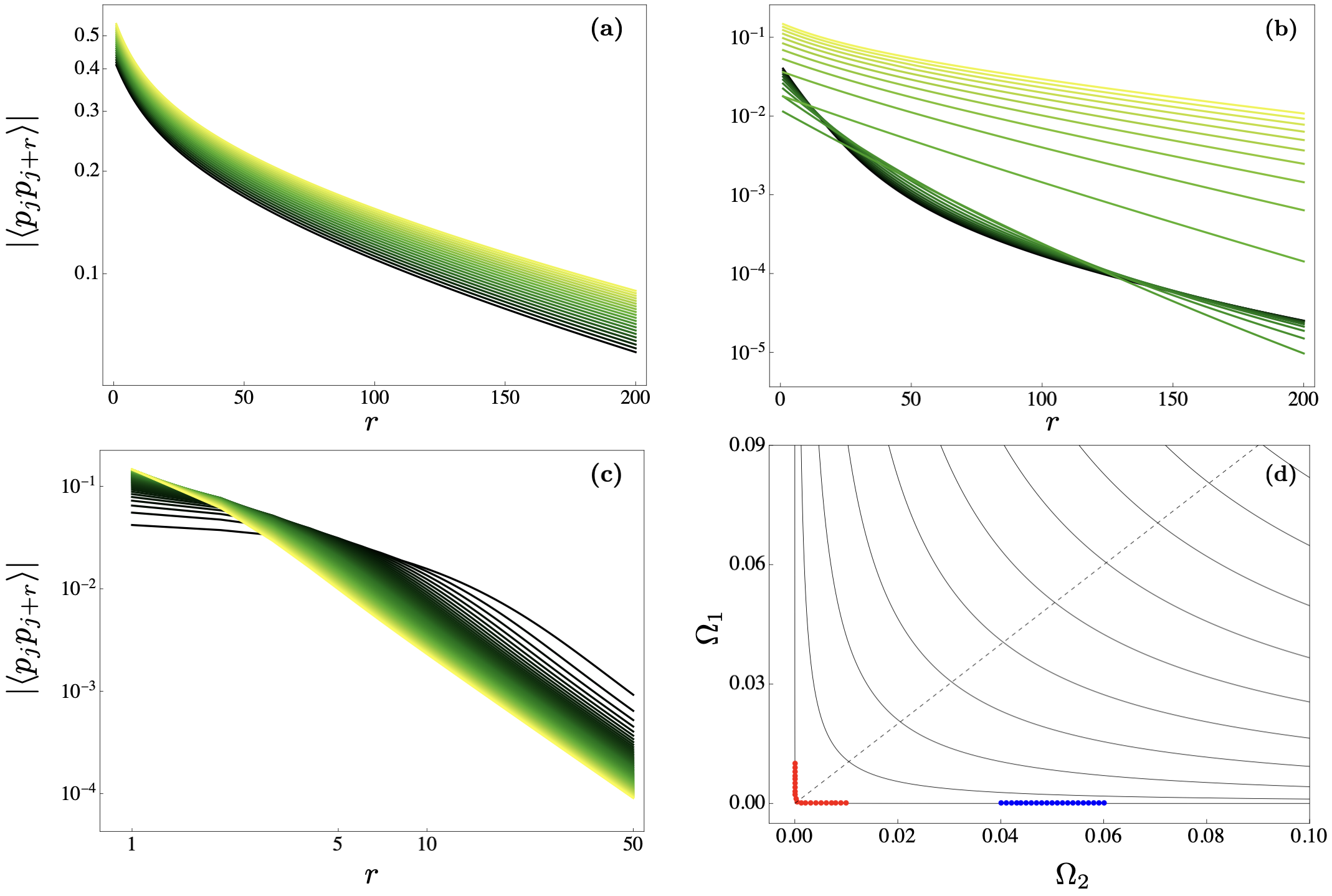}
    \caption{Correlation functions and Krein gap for the double harmonic chain. 
\textbf{(a)} Log plot of $|\langle p_jp_{j+r}\rangle|$ with a fixed $\Delta_{\text{Krein}}=0.001$ and varying choice of parameters $(\Omega_1,\Omega_2)$ (see also panel \textbf{d}): from darkest to lightest colors, the values of $\Omega_2$ ($\Omega_1$) are varied from $10^{-3}$ to $1$, with increments of $10^{-2}$. The exponential decay of correlations is shown for 20 points, that are sampled $\Delta_{\text{Krein}}$ away from the EP at $\Omega_1=0, \Omega_2=0.05$. \textbf{(b)} Same as \textbf{(a)}, but for the decay of correlations $\Delta_{\text{Krein}}$ away from the KC. For the EP, the slope of the decaying exponentials, and hence the correlation length $\xi$, does not vary for small variations in the parameters. For the KC, there is a large variation in $\xi$, depending on the parameters chosen, i.e., the {\em path of approach} to the KC. 
\textbf{(c)} Log-log plot of algebraically decaying $|\langle p_jp_{j+r}\rangle|$ $\big(|\langle x_jx_{j+r}\rangle|\big)$ correlations along the $\Omega_1=0$ ($\Omega_2=0$) EP line. 
\textbf{(d)} Solid black curves denote the level sets of the Krein gap: $\Delta_\text{Krein}$ is constant along each black curve. In blue we denote points near an EP at $(\Omega_1=0, \Omega_2=0.05)$, which we have sampled for correlations in panel \textbf{(a)}. In red we denote points around the KC ($\Omega_1=\Omega_2=0$) we have sampled for correlations in panel \textbf{(b)}. The dashed black line denotes a parameter regime for which the Fock vacuum is a valid QPV of the Hamiltonian. In all panels, $K_1=K_2=1$.} 
 \label{fig:KCapproach}
\end{figure}

So far, we have studied models with EP-type Krein-gap closings. While the next model we present is thermodynamically stable, it displays a rich interplay between EP- and KC- type Krein gap-closings and, as such, extends the existing picture of GS criticality in QBHs. Let us consider the following Hamiltonian:
\begin{align}
\label{eq:multicritHam}
    H=\frac{1}{2}\sum\limits_{j\in\mathbb{Z}}~\Big(2\Omega_1 p_j^2+2\Omega_2 x_j^2+K_1(p_{j+1}-p_j)^2 + K_2(x_{j+1}-x_j)^2\Big), \quad K_1,K_2,\Omega_1,\Omega_2\geq 0.
\end{align}
We note that the above QBH is both harmonic and self-dual under the local canonical transformation $x_i\mapsto p_i$ and $p_i\mapsto -x_i$. It follows that all the spectral properties of the models are invariant under the simultaneous exchange of parameters $\Omega_1\leftrightarrow\Omega_2$ and $K_1\leftrightarrow K_2$. The self-dual surface in parameter space is characterized by the conditions $\Omega_1=\Omega_2$ and $K_1=K_2$. This anticipates the mirroring of $xx$ and $pp$ correlations we will find below. 

From the form of the dynamical matrix (given in Appendix \ref{App: Double}), and Eq.\,\eqref{GammaQPV}, we can infer that, for $K_1,K_2,\Omega_1,\Omega_2>0$, this model is both thermodynamically and dynamically stable. In this parameter regime, it thus lies within the realm of \cite{Cramer2006,Wolf2006}. When $(\Omega_2=0, \Omega_1\neq 0)$ or $(\Omega_1=0, \Omega_2\neq 0)$, however, an EP develops at $k=0$. When the two resulting EP lines meet at $\Omega_1=\Omega_2=0$,  a KC occurs at $k=0$ instead. 
As we will now show, peculiar features arise due to these spectral singularities -- and particularly the KC -- in spite of the underlying thermodynamical stability.
The QPV correlation functions are calculated to be
\begin{align}
\label{eq:MultiCritCorrel}
 \langle x_jx_{j+r}\rangle &\nonumber=\int\limits_{0}^\pi \frac{dk}{\pi} \frac{\Omega_2+K_2(1-\cos (k))}{\sqrt{4(\Omega_1+K_1(1-\cos (k)))(\Omega_2+K_2(1-\cos (k)))}}\cos(kr),\\
    \langle p_jp_{j+r}\rangle &=\int\limits_{0}^\pi  \frac{dk}{\pi} \frac{\Omega_1+K_1(1-\cos (k))}{\sqrt{4(\Omega_1+K_1(1-\cos (k)))(\Omega_2+K_2(1-\cos (k)))}}\cos(kr),
\end{align}
where we identify $\mathcal{E}(k)=\sqrt{4(\Omega_1+K_1(1-\cos (k)))(\Omega_2+K_2(1-\cos (k)))}$ and $\Delta_\text{Krein}\equiv \Delta = 4\sqrt{\Omega_1\Omega_2}$. Their behavior is illustrated in Fig.\,\ref{fig:KCapproach}. The two EP lines are critical, with long-range correlations that are dual to each other: for $\Omega_1=0, \Omega_2>0$, the QBH has algebraically decaying $xx$ correlations and unbounded $pp$ correlations, whereas for $\Omega_2=0,\Omega_1>0$, the opposite is true. In fact, when $K_1=K_2$, the $xx$ and $pp$ correlations exactly mirror each other along the two EP lines (see Fig.\,\ref{fig:KCapproach}(c)). A transition between these two regimes happens at the KC, where the two EP lines meet and the scaling of the $xx$ and $pp$ correlations is interchanged. 

By examining the behavior in Fig.\,\ref{fig:KCapproach}(c), one may observe that, as $\Omega_1\rightarrow 0$, the correlations are decaying more slowly as a function of $r$. However, their overall amplitude is also decreasing, eventually becoming identically zero at the KC, for a given $r>1$. Let us see precisely how correlations behave at the KC. Although the QBH is harmonic and gapless at this point, directly plugging $\Omega_1=\Omega_2=0$ into Eq.\,\eqref{eq:MultiCritCorrel} shows correlations are manifestly not long-range: 
\begin{align}
\label{eq:multicritKCcorrelat}
     \langle x_jx_{j+r}\rangle =
    \begin{cases}
\frac{1}{2}\sqrt{\frac{K_2}{K_1}}& r=0 \\
0& r\neq 0
\end{cases}, \quad \quad
    \langle p_jp_{j+r}\rangle =
    \begin{cases}
\frac{1}{2}\sqrt{\frac{K_1}{K_2}}& r=0 \\
0& r\neq 0
\end{cases}.
\end{align}

In fact, the above correlations are perfectly local, in apparent contradiction with known results on harmonic lattices. The resolution is that the theorem of \cite{Cramer2006}, which guarantees long-range correlations whenever the energy gap vanishes, applies only when the system is non-trivially coupled \emph{through position alone}, a condition that fails for the KC. Indeed, KC-type Krein gap-closings are eliminated entirely when the QBH takes the form $\mb H = \mb H_{xx} \oplus \mb 1$ (see Appendix \ref{app:direct KC condition}). Accordingly, extensions of the criticality results of \cite{Cramer2006} to QBHs with non-trivial \emph{position and momentum} couplings should be taken with care.

The importance of this distinction becomes clear when we examine $|\langle p_j p_{j+r} \rangle|$ in the vicinity of the KC rather than at the point itself. Reaching a KC requires tuning \emph{two} parameters simultaneously, in contrast to the single-parameter tuning of the earlier examples. Figures \ref{fig:KCapproach}(a) and \ref{fig:KCapproach}(b) compare the decay of $pp$ correlations at fixed Krein gap near an EP and a KC, respectively, as $\Omega_1$ and $\Omega_2$ are varied. Near the EP, the decay is insensitive to the direction of approach; near the KC, it depends sharply on the path taken. This path sensitivity, together with the intersection of two critical lines at the KC and the crossover of the gap closing from linear to quadratic in $k$, identifies 
the KC as \emph{multicritical}.

Such a multicritical nature can be made most evident by studying the \emph{path-dependent critical behavior} upon approach, as in standard settings \cite{Deng,Patra}.  Unlike an ordinary critical point, where a single $(\nu, z)$ pair governs scaling, at the KC the relationship between the Krein gap and the correlation length is path-dependent -- a direct consequence of the KC being a codimension-2 multicritical point at the intersection of two EP lines.
Consider a general one-parameter approach to the KC,  $(\Omega_1, \Omega_2) \equiv (a\, t^{\alpha_1}, b\, t^{\alpha_2})$ with $a, b > 0$, $\alpha_1, \alpha_2 > 0$, and $t \to 0^+$. A direct calculation gives
\begin{equation}
\Delta_{\text{Krein}} \sim 2\sqrt{ab}\: t^{(\alpha_1+\alpha_2)/2}, 
\qquad 
\xi \sim \bigg[2\min\!\left(\frac{a\,t^{\alpha_1}}{K_1},\, 
\frac{b\,t^{\alpha_2}}{K_2}\right)\bigg]^{-1/2}.
\end{equation}
For small $t$, i.e., very close to the Krein gap-closing, the minimum is controlled by the term with the larger exponent, so that $\xi \sim t^{-\max(\alpha_1, \alpha_2)/2}$. The amplitudes $a, b, K_1, K_2$ enter the prefactor, but not the exponent. Two features emerge: (i) The correlation length diverges with $\nu = 1/2$ as a function of $t$, with an amplitude that depends on the direction of approach.\footnote{In the fine-tuned case $K_1 = K_2$, along $\Omega_1 = \Omega_2$, the $k$-space correlator becomes constant, correlations are purely on-site, and $\xi = 0$ along the entire diagonal.} (ii) The scaling ratio between the Krein gap and correlation length {\em explicitly depends on the path exponents}:
\begin{equation}
\Delta_{\text{Krein}} \sim \xi^{-z}, \qquad 
z = \frac{\alpha_1 + \alpha_2}{ \max(\alpha_1, \alpha_2)}  \in [1, 2].
\end{equation}
Thus, a behavior that continuously interpolates between two limits emerges: Diagonal approaches ($\alpha_1 = \alpha_2$) yield $z = 2$, whereas paths tangent to an EP line ($\min(\alpha_1, \alpha_2) \to 0$) produce $z \to 1$. In contrast, at the EPs, $\xi$ is fixed by the {\em distance to the critical point alone} ($\xi \sim \Omega_2^{-1/2}$ or $\xi \sim \Omega_1^{-1/2}$), and 
$\Delta_{\rm Krein} \sim \xi^{-1}$ holds, independently of the path, with a constant exponent $z=1$.

The above behavior is illustrated in Fig.\,\ref{fig:KCbump}, where we analyze the path-dependence at the KC by considering a specific family of approaches: $\Omega_1 \equiv \Omega_2^n \rightarrow 0$, with $n \in \mathbb{N}$. For this family, $\alpha_1 = n$ and $\alpha_2 = 1$, giving $z = (n+1)/n$: thus, the dynamic exponent ranges from $z = 2$ at $n = 1$ (diagonal approach) to $z \to 1$ as $n \to \infty$ (asymptotically tangent to the $\Omega_2$-axis EP line in Fig.\,\ref{fig:KCbump}(a)). Figure \ref{fig:KCbump}(c) further shows how the correlation length diverges along such paths when $n > 1$, as expected at a bona fide critical point. These paths also reveal a striking non-monotonic feature: although the magnitude of correlations for $r > 1$ eventually vanishes as $\Omega_2 \to 0$ (see Fig.\,\ref{fig:KCbump}(b)), there is an intermediate regime where they are \emph{amplified}, with the extent of amplification growing with $n$.

The origin of this amplification is the proximity of the path to nearby EPs: for $n > 1$, the $\Omega_1 = \Omega_2^n$  trajectory inches asymptotically toward the $\Omega_1 = 0$ EP line (Fig.\,\ref{fig:KCbump}(a)), and EPs, as dynamical stability phase boundaries, drive the growth of correlation functions. This intuition is made precise by the \emph{Krein phase rigidity} (KPR) \cite{Decon}, an indicator that vanishes at a dynamical stability phase boundary. For our model, $\text{KPR}(k;\Omega_2) = \norm{\vec{\beta}(k;\Omega_2)}^{-2}$, with the $\boldsymbol{\tau}_3$-normalized eigenvector $\vec{\beta}(k;\Omega_2)$ given in Appendix \ref{App: Double}. We find that the KPR vanishes as $\Omega_2 \to 0$ at $k = k_c = 0$ along the paths $\Omega_1 = \Omega_2^n$ with $n > 1$, while remaining finite for $n = 1$, consistent with the picture above.

\begin{figure}[t]
\centering
    \includegraphics[width=\textwidth]{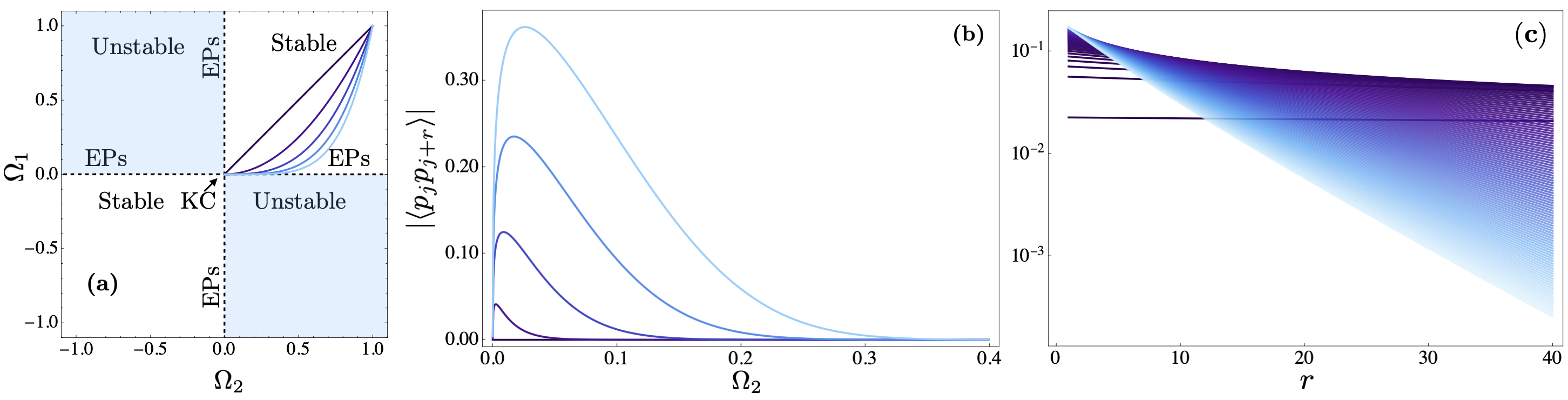}
\vspace*{-4mm} 
\caption{\textbf{(a)} The dynamical stability phase diagram of the double harmonic chain, with blue curves representing paths $\Omega_1=\Omega_2^n$ and $n=1,2,3,4,5$ from darkest to lightest. 
\textbf{(b)} Momentum correlation functions of the double harmonic chain for fixed $r=50$, plotted as a function of $\Omega_2$
along various paths approaching the KC:
$\Omega_1=\Omega_2^n$, with $n=1,2,3,4,5$, from darkest to lightest. In all cases (except for $n=1$, when the QPV is the Fock vacuum), the magnitude of the correlations undergoes an amplification as $\Omega_2\rightarrow0$, until eventually going back down to zero. 
\textbf{(c)} $|\langle p_jp_{j+r}\rangle|$ on a logarithmic scale as a function of $r$, with varying $(\Omega_1,\Omega_2)$. From light and transparent to dark and opaque, $\Omega_1=\Omega_2^2\rightarrow0$. As $\Delta_\text{Krein}$ closes, nearing the KC, the correlation length increases. The overall amplitude of the correlations increases at first, reaching a peak, until eventually decreasing 
to zero. In all cases, $K_1=K_2=1$. }
\label{fig:KCbump}
\end{figure}

\section{Information-theoretic indicators of QPV criticality}
\label{sec:otherindicators}

In this section, we further bolster our expanded notion of criticality by looking at information-theoretic indicators which, as discussed in Sec.\,\ref{sub: info-th}, have been extensively used to characterize quantum critical phenomena. 

\subsection{Entanglement entropy scaling}

For non-critical QBHs, exponentially bounded correlation functions in the GS have been associated with sub-extensive scaling of the EE. Having established that exponentially bounded correlations are more generally supported by the QPV, as long as $\Delta_{\text{Krein}}>0$, it becomes natural to examine the behavior of EE in the QPV. If our extended notion of QPV criticality is to be a legitimate one, a reasonable question to ask is whether the scaling of EE remains unaltered; specifically: (i) Does the EE satisfy an area law for $\Delta_{\text{Krein}}>0$? (ii) Does the EE scale inversely with the Krein gap, analogously to its dependence on the many-body energy gap in the harmonic case? We argue that the answer to both these questions is positive, as summarized in the following:

\smallskip
 
\noindent {\bf Claim: Area law for Krein-gapped QBHs.} {\em If $\Delta_{\text{Krein}}>0$, the EE of a translationally invariant QBH with finite-range couplings obeys an area-law in the QPV and scales inversely with $\Delta_{\text{Krein}}$.}

\smallskip

This statement can be easily proved for translationally invariant QBHs which, in the quadrature form of Eq.\,\eqref{eq:quadHam}, additionally obey the condition 
\begin{align}
\label{eq:HarmonicWithHopping}
\mb H_{xp}+\mb H_{xp}^T=0.
\end{align}
QBHs with this property can be obtained by adding imaginary hopping terms to a harmonic-lattice model, as in Eq.\,\eqref{eq:GHCwImHopHam}. For translationally invariant harmonic lattices, the upper-bound on the EE (recall Eq.\,\eqref{eq: EEUpperBound}) is proportional to the area of the distinguished sub-region and inversely proportional to the GS energy gap. Strictly speaking, Eq.\,\eqref{eq: EEUpperBound} was obtained for $\mb H_{pp}=\mb 1$. Note, however, that the CM of a translationally invariant harmonic Hamiltonian is $ \mb \Gamma= \left(\mb H_{xx}^{-1}\mb H_{pp}\right)^{1/2}\oplus \left(\mb H_{xx}^{-1}\mb H_{pp}\right)^{-1/2}$ \cite{Wolf2006}. Thus, $\mb H= \mb H_{xx}\oplus\mb H_{pp}$ has the same CM as $\tilde{\mb H}= \tilde{\mb H}_{xx}\oplus \mb 1$, with $ \tilde{\mb H}_{xx}\equiv \mb H_{xx}\mb H_{pp}^{-1}$, when $\mb H_{xx}$ and $\mb H_{pp}$ are {\em both} invertible -- as it is the case when $\Delta_{\text{Krein}}>0$). Under the same assumptions, the elements of $\tilde{\mb H}_{xx}$ and $\tilde{\mb H}_{xx}^{-1/2}$ are exponentially decaying \cite{Wolf2006}. Therefore,  Eq.\,\eqref{eq: EEUpperBound} can still be used, under an appropriate change of variables. Going further, by the results in Sec.\,\ref{Sec: CorrelationsQPV}, the strength of the imaginary hopping can be tuned to induce loss of thermodynamic stability, without changing the relevant CM: the underlying state is then reinterpreted as the QPV rather than the GS. Since entanglement measures in the GS, as well as the QPV, are computed through the CM, such a transition leaves the EE unaffected as well. In particular, for QBHs obeying Eq.\,\eqref{eq:HarmonicWithHopping}, the bound in Eq.\,\eqref{eq: EEUpperBound} still holds, with $\Delta^\text{gs}=\lambda_{\text{min}}(\mb H_{xx})^{1/2}$ being replaced by $\Delta_\text{Krein}$ \footnote{ While $ \lambda_{\text{min}}(\mb H_{xx})^{1/2}=\lambda_{\text{min}}(\boldsymbol{ \mathcal{E}})$ no longer has the meaning of the many-body energy gap when the GS is lost, it is still non-negative and corresponds to $\Delta_\text{Krein}$ in the thermodynamically unstable phase, as long as dynamical stability is retained.}.

\begin{figure}[t]
    \centering
    \includegraphics[width=\textwidth]{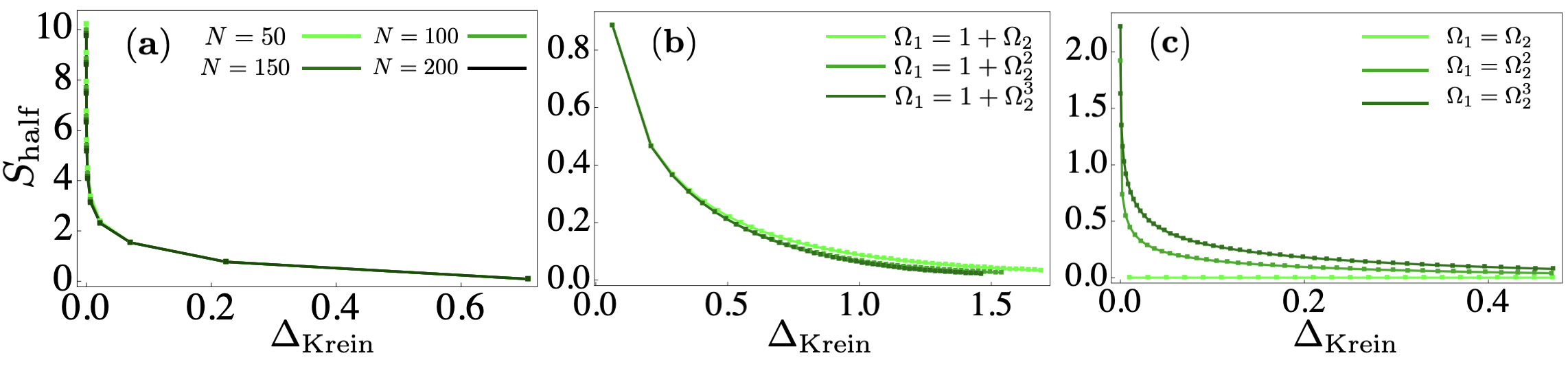}
    \vspace*{-3mm}
    \caption{EE of a symmetrically bisected chain as a function of the Krein gap. (\textbf{a}) Interpolation model for different system size $N$: the EE reaches a limiting asymptotic value inversely proportional to $\Delta_{\text{Krein}}$. No qualitative changes
are seen between regions where the model is thermodynamically stable (here, $\Delta_{\text{Krein}}  \gtrsim 0.57$), or unstable ($\Delta_{\text{Krein}}  \lesssim  0.57$). Parameters $J=2,\Delta=\Omega=1$.
(\textbf{b}) and (\textbf{c}): Double harmonic chain for fixed system size, $N=30$. In (\textbf{b}), an EP at $(\Omega_2=0,\Omega_1=1)$ is approached using different paths in parameter space. As the Krein gap-closing is approached, different paths converge to the same EE value, which scales inversely with $\Delta_{\text{Krein}}\rightarrow 0$. In (\textbf{c}), the KC $(\Omega_1=0,\Omega_2=0)$ 
is approached using different paths instead.
In contrast with the EP case, the EE does not converge to the same value:
 while there are some paths for which the EE diverges as $\Delta_{\text{Krein}}\rightarrow 0$, there is also a path for which it vanishes.
In both panels, $K_1=K_2=1$.}
    \label{fig:MultiCritEE}
\end{figure}

For QBHs which do not have the simple structure associated with Eq.\,\eqref{eq:HarmonicWithHopping}, we numerically examine the validity of our above area-law claim by assessing the scaling of EE as a function of both system-size and the magnitude of the Krein gap in the interpolation model of Sec.\,\ref{sub:im}. As shown in Fig.\ref{fig:MultiCritEE}(a), we find that area law is satisfied and the magnitude of the EE diverges as the Krein gap closes. Notably, there is no qualitative difference between thermodynamically stable and unstable phases.

It is also worth commenting on the distinct signatures that EPs and KCs engender on the EE behavior. For the double harmonic chain, we showed in Sec.\,\ref{sub:dhc} that correlation functions behave in a qualitatively different way as EPs or KCs are approached. Figs. \ref{fig:MultiCritEE}(b) and \ref{fig:MultiCritEE}(c) show that this is also reflected by the EE behavior: while it is found to diverge in the same fashion regardless of the path of approach to an EP, this is not the case for a KC. Each path asymptotically approaches a different value of the EE, as $\Delta_\text{Krein}$ closes. Furthermore, just as in case of correlation functions, the EE diverges along two of the chosen paths to the KC, $\Omega_1=\Omega_2^2$ and $\Omega_1=\Omega_2^3$, whereas it remains identically zero for the path with $\Omega_1=\Omega_2$. Once again, the idea of path-dependence near a KC emerges.

\subsection{Quantum metric tensor and fidelity behavior} 

As a final observation in support of our generalized notion of bosonic criticality, we now show how the pseudo-Hermitian QMT given in Eq.\,\eqref{eq: QMT} is sensitive to the closing of the Krein gap, further corroborating its role as the relevant many-body gap for QBHs. 

Let us first consider the interpolation model in Eq.\,\eqref{eq: InterpHam}. Using the explicit expressions of the eigenvectors given in Appendix \ref{App: InterpCovMat}, and focusing on the Hamiltonian parameter $\alpha= s\Delta/[\Omega(1-s)]$, we obtain the following expression for the corresponding momentum-space QMT:
\begin{align*}
    g^{LR}_{\alpha\alpha}(k)=-\frac{\cos ^2(k)}{4
   \left(\alpha ^2 \cos
   ^2(k)-1\right)^2}.
\end{align*}
This quantity diverges only when $\alpha=1$ and $k=k_c=0,\pm\pi$, exactly coinciding with the critical points we identified from the analysis of the correlation functions in Eq.\,\eqref{eq:interpcorrel}. 

A similar analysis may be repeated for the double-harmonic chain model in Eq.\,\eqref{eq:multicritHam}, using the eigenvectors from Appendix \ref{App: Double}. At the gap-closing momentum $k = k_c = 0$, we can evaluate three independent components of the momentum-space QMT: 
\begin{equation}
g^{LR}_{\Omega_1\Omega_1}(0) =- \frac{1}{16\Omega_1^2}, \qquad 
g^{LR}_{\Omega_2\Omega_2}(0) = -\frac{1}{16\Omega_2^2}, \qquad 
g^{LR}_{\Omega_1\Omega_2}(0) = \frac{1}{16\,\Omega_1\Omega_2}.
\end{equation}
Each diagonal component diverges on exactly one EP line: 
$g^{LR}_{\Omega_1\Omega_1}$ at $\Omega_1 = 0$, detecting the $\Omega_1$-axis EP line, and $g^{LR}_{\Omega_2\Omega_2}$ at $\Omega_2 = 0$, detecting the $\Omega_2$-axis EP line. The off-diagonal component diverges on both EP lines and at the KC (see Fig.\,\ref{fig:QMT}(left)). Thus, at the KC, the QMT becomes singular, despite the real-space correlations being trivially short-ranged. The multicritical nature of this point is reflected in correlations only indirectly, through their path-dependent decay. In contrast, the QMT captures the singular character of this point directly at the KC itself, without requiring a study of how it is approached. In this sense, the QMT is a sharper diagnostic than correlation functions, its divergence providing further evidence that the KC is a genuine multicritical point.

\begin{figure}[t]
    \centering \hspace*{20mm}
    \includegraphics[width=0.85\textwidth]{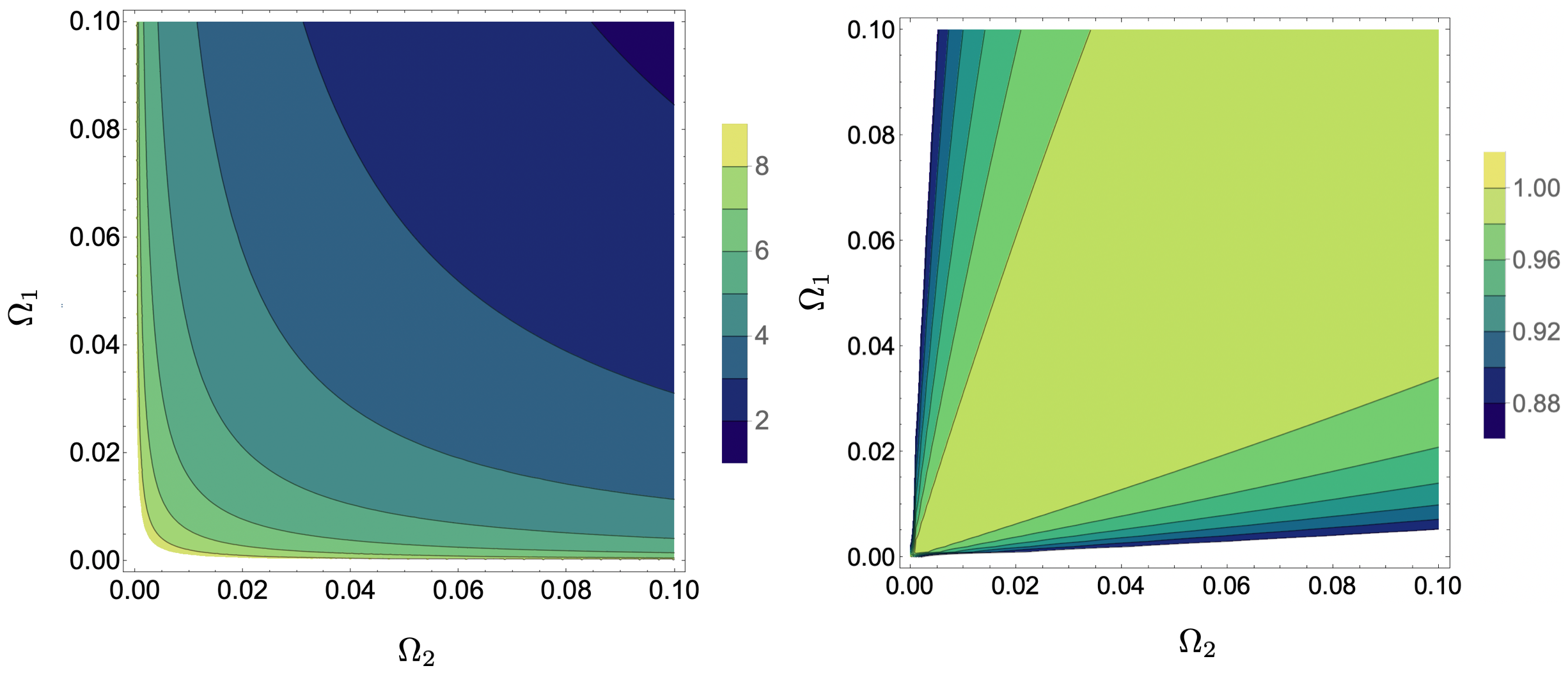}
    \vspace*{-3mm}
    \caption{\textbf{Left:} The $k$-space QMT for the double harmonic chain, $\log\left(g^{LR}_{\Omega_1\Omega_2}(k=0)\right)$, displays a divergent behavior near the critical values of the parameters $\Omega_1,\Omega_2$ at $k=k_c=0$. Around $\Omega_1=\Omega_2=0$, the QMT reflects the characteristic multicritical nature of the KC. \textbf{Right:} A density plot of the distance between the QPV and the Fock vacuum, as quantified by the quantum fidelity for zero-mean pure Gaussian states. In both panels, $K_1=K_2=1$. } 
    \label{fig:QMT}
\end{figure}

A complementary, global picture of multicriticality at the KC emerges from the quantum fidelity given in Eq.\,\eqref{eq:Fid}, which in this case further simplifies to $$\mathscr{F}( \rho^\text{qpv},|0\rangle\langle 0|) = \left(\det\left(\frac{\mb \Gamma + \mb I}{2}\right)\right)^{-1/4},$$ 
and whose behavior is illustrated in Fig.\,\ref{fig:QMT}(right). Comparing QPVs in the vicinity of the KC against the Fock vacuum, we find that states infinitesimally close to the KC can nonetheless be drastically distinct from one another. This provides a structural origin for the path-dependent decay of correlations discussed in Sec.\,\ref{sub:dhc}. The KC is thus revealed as a limit point of sharply differing QPVs. This reconciles the apparent tension between the triviality of on-site correlations at the KC and the richness of path-dependent correlations upon approach.

\section{Conclusions and outlook}
\label{sec:conclusion}

We have introduced a new framework for investigating the critical behavior of closed systems of independent bosons based on dynamical rather than thermodynamical stability considerations. We recovered the known critical behavior of thermodynamically stable, translationally invariant QBHs as corollaries to our results, and found that the fundamental dichotomy for QBHs is the {\em difference between dynamical stability and instability} -- irrespective of thermodynamic stability, hence the existence of a many-body GS. 

From a technical standpoint, two ingredients are key to our generalization: i) the notion of a Krein gap 
-- loosely speaking, the minimal energy separation between bosonic creation and annihilation operators; and ii) the identification of the QPV as the appropriate state hosting critical behavior, even in the absence of a GS. Notably, the Krein gap coincides with the usual spectral gap for the simplest case of harmonic QBHs, whereas for general thermodynamically stable QBHs, it recovers -- and provides a physical interpretation to -- the symmetrized gap introduced in Ref.\,\cite{Wolf2006} on the basis of mathematical convenience. By leveraging Gaussianity and translational invariance, we have computed the two-point correlation functions for dynamically stable QBHs in the QPV, for arbitrary spatial dimensions and number of internal DOFs. The Krein gap must close for long-range correlations to arise: for a finite Krein gap, correlations are always bounded by a decaying exponential. A corollary to this is that the QPV may become quantum-critical 
{\em only} at dynamical stability phase boundaries, as the Krein gap also quantifies the proximity of a QBH to a nearest dynamical stability phase boundary. More precisely, we found that when the Krein gap closes at a finite number of points in momentum space, and the gap-closings are of EP-type, long-range correlations generically emerge. In contrast, hallmarks of multicritical behavior are observed for KC-type Krein-gap closings, with the scaling of the correlations being highly dependent upon the path of approach to the KC.

We have bolstered our proposed generalized criticality notion by examining the extent to which it is also reflected in the behavior of information-theoretic indicators that are often used in parallel with correlation functions for standard  critical phenomena. In particular, we showed that, for the general class of dynamically stable QBHs, the bipartite EE in the QPV obeys the same scaling with system size and the relevant (Krein) gap as it does for harmonic lattices in the GS. Similarly, we have found that the QMT diverges precisely at the critical points identified through correlation functions, and in fact provides a sharper diagnostic tool for the multicritical nature of a KC point. 

From a conceptual standpoint, the fundamental categorization of QBHs into dynamically stable and unstable ones is both natural and physically compelling; in a way, it seems fair to say that it brings the equilibrium statistical mechanics of dynamically stable QBHs on par with that of quadratic fermionic Hamiltonians. At the most basic level, these two classes of Hamiltonians are as similar as they can be, given the difference in quantum statistics, for the simple but profound reason that both of them are fully solvable in terms of canonical quasiparticles. The crucial difference is that, for fermions, there is always a QPV GS, whereas for dynamically stable bosons, that is not the case. Interestingly, the problem of simulating an arbitrary quadratic fermionic Hamiltonian on a quantum computer is known to always belong to the BQP-complete complexity class \cite{KrausBQP}. Taken together, our results instead provide strong support to the idea that dynamical stability may underlie the exponential advantage in quantum simulation recently reported for a broad class of QBHs \cite{SchuchBQP}, with the onset of dynamical instability leading to a computational complexity transition into PostBQP. Likewise, lifting stability requirements has been advocated as a necessary step for a topological, symmetry-based classification of QBHs to be achievable \cite{EmilioQBH,MariamNext}. 

Our work suggests several venues for further investigation. First, consistent with the fact that the standard notion of criticality is defined for systems in the thermodynamic limit, throughout our analysis we have assumed to work in the idealized setting of a boundary-less system, subject to bi-infinite BCs. In practice, finite-size scaling methods are used to infer critical properties (notably, critical exponents) from data pertaining to realistic systems, which are necessarily finite and modeled through open BCs. Care is needed for bosonic systems, however, due to the possibility that dynamical stability is lost upon imposing boundaries, for so-called Type II dynamically metastable QBHs \cite{DynamicalMetastability} (which e.g., the interpolation model exemplifies). Examining the behavior of correlation functions under open BCs may thus be a well-worth next step. Since, even in this class, dynamical stability is always realized for sufficiently large system size (say, $N >N_c$), we expect our current conclusions to hold in such a regime. Distinctive behavior may nonetheless emerge in a ``transient system-size regime'' $N<N_c$, akin to a similar phenomenon reported for quadratic bosonic Lindbladians \cite{DM2} and calling for additional investigation.

Second, for pure states of a standard (Hermitian) system, it is known that the QMT can be directly related to the quantum Fisher information matrix, a core tool in the theory of quantum estimation theory \cite{MultiParam}. It would be interesting to determine whether such a relationship admits a pseudo-Hermitian generalization, applicable to QBHs. If so, the singular behavior of the pseudo-Hermitian QMT at the generalized critical phase boundaries would imply a corresponding enhancement of metrological sensitivity, pointing to connections between the fields of EP-enhanced and criticality-enhanced quantum sensing \cite{Wsensing,Montenegro}, including uncovering potential sensing implications of KCs.

Finally, extending the analysis beyond static QBHs as considered here remains a broad important problem. For instance, analyzing a time-dependent quench to an EP or KC may reveal whether universal dynamics emerges and is governed by the critical exponents that the static analysis predicts, as in the Kibble-Zurek scenario for standard critical points \cite{Deng} and effective non-Hermitian Hamiltonians \cite{Dora}. Likewise, elucidating the role and influence (if any) of EPs in steady-state criticality properties of quadratic bosonic Lindbladians beyond one dimension \cite{BarthelCriticality} is especially compelling in the light of modified scaling behavior recently uncovered for an extended Dicke model \cite{LeeEnhanced}. It is our hope to report on some of these issues in future work.

\section*{Acknowledgments}

The authors are indebted to Yikang Zhang for many insightful discussions on quantum criticality in bosonic systems and for a critical reading of the manuscript. Work at Dartmouth was supported by the National Science Foundation through Grants No.\,PHY-2013974 and PHY-2412555 (to LV).

\appendix

\section*{Appendix}

\renewcommand{\theequation}{A.\arabic{equation}}
\setcounter{equation}{0}
\renewcommand{\thesection}{\Alph{section}}
\setcounter{section}{0}

\section{Detail on dynamical and thermodynamical stability properties for single-band QBHs}

\subsection{Stability properties from the Bloch dynamical matrix}
\label{App: BlochMat}

Here, we establish basic properties of the momentum-space dynamical matrix with $d=1$.  The structure of $\mb g(\bm k)$ in Eq.\,\eqref{eq: Hphicomm} can be leveraged to show that:
\begin{align}
\label{dag}
   \mb  g^{\dag}(\bm k)&=
   \boldsymbol \tau_3\mb g(\bm k)\boldsymbol \tau_3,\\
   \label{star}
   \mb  g^*(\bm k)&=
   -\boldsymbol \tau_1\mb g(-\bm k)\boldsymbol \tau_1.
\end{align}
For a system with finite-range couplings as we consider, we can write 
\begin{align*}
    \mb g(\bm k)&=d_0(\bm k)\boldsymbol \sigma_0+id_1(\bm k)\boldsymbol \sigma_1+id_2(\bm k)\boldsymbol \sigma_2+d_3(\bm k)\boldsymbol \sigma_3\equiv \mathcal{G}_{\bm k}(d_0,d_1,d_2,d_2).
\end{align*}
From Eq.\,(\ref{dag}) and the properties of Pauli matrices we deduce:
\begin{align*}
    \mathcal{G}_{\bm k}(d_0^*,-d_1^*,-d_2^*,d_3^*)= \mathcal{G}_{\bm k}(d_0,-d_1,-d_2,d_3),
\end{align*}
i.e. all $d_i(\bm k)$ are real. From Eq.\,(\ref{star}), we obtain:
\begin{align*}
    \mathcal{G}_{\bm k}(d_0,-d_1,d_2,d_3)= \mathcal{G}_{-\bm k}(-d_0,-d_1,d_2,d_3), 
\end{align*}
which implies that $d_0$ is an odd function of $\bm k$, while the rest are even. All in all, Eq.\,\eqref{eq: Hphicomm} allows the Bloch dynamical matrix to be cast in the form
\begin{align}
\label{eq: Gkgeneral}
    \mb g(\bm k)=\begin{pmatrix}
       d_0(\bm k)+d_3(\bm k)&i d_1(\bm k)+d_2(\bm k)\\
    i d_1(\bm k)-d_2(\bm k)   &d_0(\bm k)-d_3(\bm k)
    \end{pmatrix}, \qquad d_j(\bm k)=\sum\limits_{|\bm r |\leq R} d_{j\bm r}\, e^{i\bm k\cdot \bm r}, \; j=0,1,2,3,
    \end{align}
with $d_j(\bm k)$ odd under $\bm k$ for $j=0$ and even otherwise. Physically, the real-space coefficients $d_{3\bm r}$ and $d_{0\bm r}$ encode real and imaginary hopping terms, while $d_{2\bm r}, d_{1\bm r}$ encode real and imaginary pairing terms, respectively. For illustration, we show how the form of Eq.\,\eqref{eq: Gkgeneral} may be obtained from real-space coupling matrices in the simplest case of $D=1$. In this case, the Hamiltonian in Eq.\,\eqref{eq: QBH} and the corresponding dynamical matrix may be rewritten as 
\begin{align}
\label{eq: HamWithGrs}
  H=\frac{1}{2} \sum_{j\in {\mathbb Z}} \sum_{r =-R}^R \phi_j^{\dag} {\mb h}_r \phi_{j+r} ,\qquad  
\mb G =\mathds{1} \otimes \mb g_0+\sum_{r=1}^{R}\left(\mb V^r\otimes \mb g_r+\mb V^{\dag r}\otimes \mb g_{-r}\right),
\end{align}
where $\mb g_r \equiv {\boldsymbol \tau}_3 \mb h_r$ and the operator $\mb V\equiv \sum\limits_{r=-\inf}^{\inf}\vec{e}_j \vec{e}_{j+1}^{\,\dag}$, defined in terms of canonical basis vectors in ${\mathbb Z}$, appropriately accounts for the bi-infinite BCs we employ \cite{DynamicalMetastability}.  More specifically, the matrices $\mb K,\mb \Delta$ appearing in Eq.\,\eqref{eq: QBH} can be parametrized as 
\begin{align}
 \text{Re}(\mb K)&=d_{30}\mb 1+\sum\limits_{r=1}^R\frac{d_{3r}}{2}(\mb V^r+\mb V^{\dag r}) ,\qquad 
    \text{Im}(\mb K)=\sum\limits_{r=1}^R\frac{d_{0r}}{2}(\mb V^{\dag r}-\mb V^{ r}), 
    \label{eq: KDeltaCirculant}\\
     \text{Re}(\mb \Delta)&=d_{20}\mb 1+\sum\limits_{r=1}^R\frac{d_{2r}}{2}(\mb V^r+\mb V^{\dag r}),\qquad 
       \text{Im}(\mb \Delta)=d_{10}\mb 1+\sum\limits_{r=1}^R\frac{d_{1r}}{2}(\mb V^r+\mb V^{\dag r}),
       \label{eq: KDeltaCirculant2}
\end{align}
which, combined with Eq.\,\eqref{eq: HamWithGrs}, yields
\begin{align*}
  \mb g_0 &= \begin{pmatrix}
    d_{30}&d_{20}+id_{10}\\
    -d_{20}+id_{10}&-d_{30}
    \end{pmatrix}, \;
    \mb g_{r}=\frac{1}{2}
     \begin{pmatrix}d_{3r}-i d_{0r}&d_{2r}+id_{1r}\\
     -d_{2r}+id_{1r}&-id_{0r}-d_{3r}\end{pmatrix}, \;
     \mb g_{-r}=\begin{pmatrix}d_{3r}+i d_{0r}&d_{2r}+id_{1r}\\
     -d_{2r}+id_{1r}&+id_{0r}-d_{3r}\end{pmatrix}.
\end{align*}
In other words, the Bloch dynamical matrix is consistently recovered as the Fourier transform of the real-space coupling matrices: $\mb g(\bm k)=\sum\limits _{r=-R}^R \mb g_{r} e^{ik r }$.

Back to the general $D$-dimensional setting, the eigenvalue spectrum of $\mb g(\bm k)$ is given by
$$\sigma(\mb g(\bm k))=d_0(\bm k)\pm\sqrt{d_3(\bm k)^2-d_2(\bm k)^2-d_1(\bm k)^2}.$$
Denoting the two eigenvalues as $\omega(\bm k)$ and $-\omega(-\bm k)$, we may then identify 
\begin{align*}
\frac{1}{2}\big(\omega(\bm k)+\omega(-\bm k)\big) \equiv 
    \mathcal{E}(\bm k)
    = \sqrt{d_3(\bm k)^2-d_2(\bm k)^2-d_1(\bm k)^2}, 
\end{align*}
whereby the dynamical stability condition given in Eq.\,\eqref{dynStab} follows. The thermodynamic stability condition can be obtained from the eigenvalues of $\tau_3\mb g(\bm k)$, namely, 
\begin{align*}
 \sigma\left(\boldsymbol \tau_3\mb g(\bm k)\right)=d_3(\bm k)\pm\sqrt{d_0(\bm k)^2+d_1(\bm k)^2+d_2(\bm k)^2}   , 
\end{align*}
which leads to the conditions stated in the main text, Eq.\,\eqref{thStab}.

\subsection{Stability boundaries and conditions for a Krein collision}
\label{app:direct KC condition}

Here, we prove that, for a translationally invariant QBH with $d=1$, a KC at some point $\bm k'$ requires that $d_3(\bm k')=d_2(\bm k')=d_1(\bm k')=0$. From the form of the eigenvectors of $\mb g(\bm k)$ in Eq.\,\eqref{eq: eigenvecs}, we can say that a gap closing is a KC and not an EP if:
\begin{align*}
    d_3(\bm k')^2-d_2(\bm k')^2-d_1(\bm k')^2=0,\quad 
    d_3(\bm k')=p \, (-d_2(\bm k')+i d_1(\bm k')), \quad -d_2(\bm k')-id_1(\bm k')=\Bar{p}\, d_3(\bm k'),
\end{align*}
such that $p\neq \Bar{p}$, so that the eigenvectors are not proportional, and hence there is no EP. But then, assuming that $\Bar{p}\ne 0$,
the following equations also hold:
\begin{align*}
  d_3(\bm k')^2-d_2(\bm k')^2-d_1(\bm k')^2=0,\quad 
    d_3(\bm k')=p \,(-d_2(\bm k')+i d_1(\bm k')), \quad \frac{1}{\Bar{p}}(-d_2(\bm k')-id_1(\bm k'))=d_3(\bm k').
\end{align*}
Taken together, these equations imply that  
\begin{align*}
     d_3(\bm k')^2-d_2(\bm k')^2-d_1(\bm k')^2=0,\quad d_3(\bm k')^2=\frac{p}{\Bar{p}} \, (d_2(\bm k')^2+d_1(\bm k')^2) .
\end{align*}
Thus, 
we have a contradiction, since we assumed $\frac{p}{\Bar{p}}\neq 1$, unless the three terms $d_j(\bm k')$ separately vanish. If 
$\Bar{p}=0$, this in turn implies $d_2(\bm k')-id_1(\bm k')=0$ and hence $d_2(\bm k')=d_1(\bm k')=0$, since both quantities are real. Furthermore, $d_3(\bm k')=0$, as the gap is zero.

We can further show that, if $H$ has the direct-sum structure of the form $\mb H=\mb H_{xx}\oplus\mb 1$, the KC-type gap closing is impossible. Assume that $\text{Re}(\mb K-\mb \Delta)=\mb H_{pp}\sim \mb 1$, and that there is a direct KC at some $\bm k=\bm k'$. From the first condition, $d_3(\bm k)-d_2(\bm k)=c$, with $c$ some non-zero, $\bm k$-independent constant, and $d_3(\bm k)+d_2(\bm k)=b(\bm k)$. From here, we obtain $d_2(\bm k)=\tfrac{1}{2} (b(\bm k)-c)$ and $d_3(\bm k)=\tfrac{1}{2}(b(\bm k)+c)$. The condition for a KC at $\bm k'$ imposes $d_3(\bm k')=d_2(\bm k')=d_1(\bm k')=0$. This implies $b(\bm k')=\pm c$ and therefore $c=0$. This is a contradiction, which implies there cannot be a direct KC, as stated.

\renewcommand{\theequation}{B.\arabic{equation}}
\setcounter{equation}{0}

\section{Technical detail on the QPV covariance matrix}

\subsection{General dynamically stable, translationally invariant QBHs}
\label{App: CMfromModal}

Here, we derive Eq.\,\eqref{eq: Ck} from the main text. Since all first moments are zero for QBHs in the QPV, the CM can be expressed as
$(\mathbf{C}^{\text{qpv}}_{\bm{j},\bm{j'}})_{\ell \ell'} = \braket{\{\phi_{\bm{j},\ell} , \phi_{\bm{j'},\ell'}^\dag \}}_{\text{qpv}}$
or, equivalently, 
$\mathbf{C}^{\text{qpv}}_{\bm{j},\bm{j'}}=2 \braket{\mathcal{B}(\phi_{\bm j}\phi_{\bm j'}^{\dag})}_{\text{qpv}}$, 
with $\mathcal{B}(\bullet)\equiv \frac{1}{2}\left(\bullet+\boldsymbol{\tau}_1(\bullet)^T\boldsymbol{\tau}_1\right)$ defining a bosonic projection operator \cite{Squaring}.  Then, the full CM reads 
\begin{align*}
 \mb C^{\text{qpv}}
 \nonumber& = 
 2\sum_{\bm{j}\bm{j'}} \vec{e}_{\bm j}\vec{e}_{\bm j'}^{\,\dag}\otimes\mathcal{B} \left(\langle \phi_{\bm j}\phi_{\bm j'}^{\dag}\rangle\right)  \\
   \nonumber &=2\sum_{\bm{j}\bm{j'}}\vec{e}_{\bm j}\vec{e}_{\bm j'}^{\dag}\otimes\int^{\pi}_{-\pi}\frac{d^D\bm k d^D\bm k'}{(2\pi)^{2D}} e^{i(\bm k\cdot \bm j-\bm k'\cdot \bm j')}\mathcal{B}\left(\langle \phi(\bm k)\phi(\bm k')^{\dag}\rangle \right)\\
    \nonumber&=2\sum_{\bm{j}\bm{j'}}\vec{e}_{\bm j}\vec{e}_{\bm j'}^{\dag}\otimes\int^{\pi}_{-\pi}\frac{d^D\bm kd^D\bm k'}{(2\pi)^{2D}} e^{i(\bm k\cdot \bm j-\bm k'\cdot \bm j')}\mathcal{B}\left(\mb L(\bm k)\langle \psi(\bm k)\psi(\bm k')^{\dag}\rangle \mb L(\bm k')^{\dag}\right)\\
    &=\sum_{\bm{j}\bm{j'}}\vec{e}_{\bm j}\vec{e}_{\bm j'}^{\dag}\otimes\int^{\pi}_{-\pi}\frac{d^D\bm k}{(2\pi)^D} e^{i\bm k\cdot (\bm j-\bm j')}\mb L(\bm k)\mb L(\bm k)^{\dag}.
\end{align*}
Accordingly, $ \mathbf{C}^\text{qpv}(\bm k) = \mathbf{L}(\bm k)\mathbf{L}^\dag(\bm k)$. Alternatively, the elements of the CM in the quadrature representation, with $R=[x_{\bm 1,1},\ldots,p_{\bm 1,1},\ldots]^T$, are given by
\begin{align*}
\bm \Gamma^\text{qpv}= \begin{pmatrix}
   (  \mb \Gamma_{xx})_{\bm j,\bm j+\bm r}  &  (\mb \Gamma_{xp})_{\bm j,\bm j+\bm r}  \\
         (\mb \Gamma_{px})_{\bm j,\bm j+\bm r}     &  (\mb \Gamma_{pp})_{\bm j,\bm j+\bm r}
    \end{pmatrix}
    =\int^{\pi}_{-\pi}\frac{d^D\bm k}{(2\pi)^D}\, e^{i\bm k\cdot\bm r} \mb  U^{\dag}\mb L(\bm k)\mb L(\bm k)^{\dag}\mb U,
\end{align*}
with $\mb U \equiv \frac{1}{\sqrt{2}}\begin{pmatrix}
    1&i\\
    1&-i
\end{pmatrix}\otimes \mathbf{1}_d$ effecting a basis change between quadrature and bosonic degrees of freedom.

\subsection{Specialization to single-band QBHs}
\label{App: CMfromModal1}

For a system with  $d=1$, as in Appendix \ref{App: BlochMat}, we can make further headway using Eqs.\,\eqref{eq: Geig} and \eqref{eq: Gkgeneral}. Let's first assume $d_3(\bm k)>0$ \footnote{$d_3(\bm k)$ changes sign when there is a direct KC (if $d_3(\bm k)$ changes sign, it must have zero as an intermediate value. Imposing dynamical stability fixes $d_2(\bm k)$ and  $d_1(\bm k)$ to individually be zero as well. This is precisely the condition for a direct KC, see Appendix \ref{app:direct KC condition}). Thus, away from a KC, we can assume that $d_3(\bm k)$ has a definite sign.}. The eigenvectors of ${\mb g}(\bm k)$ may then be found as
\begin{align}
\label{eq: eigenvecs}
    \vec{\beta}_{+}(\bm k)\nonumber&=\frac{1}{\mathcal{N}(\bm k)}\left[d_3(\bm k )+  \mathcal{E}(\bm k)
    ,\, -d_2(\bm k)+i d_1(\bm k)\right]^T , \\
     \vec{\beta}_{-}(\bm k)&=\frac{1}{\mathcal{N}(\bm k)}\left[-d_2(\bm k)-i d_1(\bm k),\, d_3(\bm k)+ \mathcal{E}(\bm k)
     \right]^T,
\end{align}
where the normalization factor 
\begin{align*}
 \mathcal{N}(\bm k)^2 = \pm \vec{\beta}_{\pm}^{\,\dag}(\bm k)\boldsymbol \tau_3\vec{\beta}_{\pm}(\bm k)=
 2     \mathcal{E}(\bm k) (  \mathcal{E}(\bm k) + d_3(\bm k) ). 
\end{align*}
A similar derivation holds for the case $d_3(\bm k)<0$, except that $\vec{\beta}_{\pm}(\bm k)$ get exchanged, and the new normalization is given by $\mathcal{N'}(\bm k)^{2}=-\mathcal{N}(\bm k)^2$. 

Putting this all together, the real-space correlation functions in the QPV take the form: 
\begin{align}
\label{eq: FourierCoeffs}
  2\langle p_{\bm j} p_{\bm j+\bm r}\rangle
  \nonumber 
  &= \int^{\pi}_{-\pi}\frac{d^D\bm k e^{i\bm k\cdot \bm r}}{(2\pi)^D}\frac{\text{sgn}(d_3(\bm k))(d_{3}(\bm k)+d_{2}(\bm k))}{  \mathcal{E}(\bm k) }, \\
     2\langle x_{\bm j}x_{\bm j+\bm r}\rangle \nonumber&=  \int^{\pi}_{-\pi}\frac{d^D\bm k e^{i\bm k\cdot \bm r}}{(2\pi)^D}\frac{\text{sgn}(d_3(\bm k))(d_{3}(\bm k)-d_{2}(\bm k))}{  \mathcal{E}(\bm k)} ,\\
     \langle p_{\bm j}x_{\bm j+\bm r}+x_{\bm j+\bm r}p_{\bm j}\rangle& = \int^{\pi}_{-\pi}\frac{d^D \bm k e^{i\bm k\cdot \bm r}}{(2\pi)^D}\frac{\text{sgn}(d_3(\bm k))(-d_{1}(\bm k))}{  \mathcal{E}(\bm k)} .
\end{align}
Thus, the elements of $\mb\Gamma$ in real space are obtained in terms of its Fourier components as follows:
\begin{align}
\label{eq:FTCovMat}
\begin{pmatrix}
       (\mb \Gamma_{xx})_{\bm j,\bm j+\bm r}&  (\mb \Gamma_{xp})_{\bm j,\bm j+\bm r}\\
         (\mb \Gamma_{px})_{\bm j,\bm j+\bm r}     &  (\mb \Gamma_{pp})_{\bm j,\bm j+\bm r}
    \end{pmatrix}&=\int_\mathrm{BZ}
    {d^D\bm k}\, e^{i\bm k\cdot \bm r}\, \frac{\text{sgn}(d_3(\bm k))}{\mathcal{E}(\bm k)}\begin{pmatrix}
        d_3(\bm k)-d_2(\bm k)&-d_1(\bm k)\\
        -d_1(\bm k)& d_3(\bm k)+d_2(\bm k)
    \end{pmatrix},
\end{align} 
consistent with Eq.\,\eqref{GammaQPV}. 

The real-space CM may be obtained from the momentum-space expression by taking a Fourier transform. With $\boldsymbol{ \mathcal{E}}$ defined as in Eq.\,\eqref{eq:WolfCovMat} of the main text, the result is
\begin{align}
\label{eq: CovMatCompact}
    \mb \Gamma^\text{qpv}
    &=\begin{pmatrix}
       \text{Re}(\mb K-\mb \Delta)&-\text{Im}(\boldsymbol{\Delta})\\
       -  \text{Im}(\boldsymbol{\Delta})&\text{Re}(\mb K+\mb \Delta)  
    \end{pmatrix}\boldsymbol{ \mathcal{E}}^{-1}
    = \begin{pmatrix}
       \mb H_{pp}&- \frac{\mb H_{xp}+\mb H_{xp}^T}{2}\\
       - \frac{\mb H_{xp}+\mb H_{xp}^T}{2}&\mb H_{xx}
    \end{pmatrix}\boldsymbol{ \mathcal{E}}^{-1}. 
\end{align}
To see how Eq.\,\eqref{eq: CovMatCompact} is obtained, once the Fourier transform in Eq.\,\eqref{eq:FTCovMat} is explicitly carried out, consider first $D=1$ and define: 
\begin{align}
   \widehat{ \text{Re}(\mb K)}\equiv \sum_{r=-R}^R\text{Re}\left(\mb K_{r }\right)e^{ikr}\equiv d_{3}(k)=d_{30}+\sum\limits_{r=1}^Rd_{3r}\cos(kr)\equiv K(k).
\end{align}
Here $\text{Re}\left(\mb K_{r}\right)$ denote elements of $\text{Re}\left(\mb K\right)$ on the $r$'th diagonal as $\mb K$ and $\mb \Delta$ take the form of a Laurent operator:
\begin{align}
\label{eq:KandD}
 \mb K=\begin{pmatrix}
    \ddots &\ddots &\ddots&\\
    \ddots &\mb K_0 &\mb K_1&\ddots\\
    \ddots&\mb K_{-1}&\,\mb K_0&\ddots\\
    &\ddots&\ddots&\ddots
\end{pmatrix} ,  
\quad \mb \Delta=\begin{pmatrix}
    \ddots &\ddots &\ddots&\\
    \ddots &\mb \Delta_0 &\mb \Delta_1&\ddots\\
    \ddots&\mb \Delta_{-1}&\,\mb \Delta_0&\ddots\\
    &\ddots&\ddots&\ddots
\end{pmatrix}. 
\end{align} 
We analogously define the Fourier counterparts for other Laurent operators in Eqs.\,\eqref{eq: KDeltaCirculant}-\eqref{eq: KDeltaCirculant2}. Let $\mb C_1$ and $\mb C_2$ denote Laurent operators. Then $\widehat{\mb C_1\mb C_2}=\widehat{\mb C_1}\widehat{\mb C_2}$ and $\widehat{\mb C_1+\mb C_2}=\widehat{\mb C_1}+\widehat{\mb C_2}$. Therefore, e.g., $\widehat{\text{Re}(\mb K+\mb \Delta)\boldsymbol{\mathcal{E}}^{-1}}=\frac{d_3(k)+d_2(k)}{\mathcal{E}(k) }.$
Following this through with the other blocks of the CM, Eq.\,\eqref{eq: CovMatCompact} can be obtained. 

For $D>1$, while $\text{Re}(\mb K)$ is no longer a Laurent operator, it has the following structure:
\begin{align}
\label{eq:BlockStructure}
    \text{Re}(\mb K)&=\sum_{|\bm r|\equiv |(r_1,\ldots, r_D)^T|<R}d_{3\bm r}\mb V^{ r_1}\otimes\cdots\otimes \mb V^{ r_D},
\end{align}
with an analogous structure for other blocks. After performing the Fourier transformation, the above yields 
\begin{align*}
    \widehat{\text{Re}(\mb K)}&=\sum_{|\bm r|<R}d_{3\bm r}e^{i\bm k\cdot \bm r}\equiv d_3(\bm k).
\end{align*}
Furthermore, due to the Kronecker-product structure, two operators of the form $\mb K_1=\sum_{\bm r}d_{1\bm r}\mb V^{r_1}\otimes\cdots\otimes \mb V^{ r_D}$ and $\mb K_2=\sum_{\bm r'}d_{2\bm r'}\mb V^{r_1'}\otimes\cdots\otimes \mb V^{r_D'}$ commute. Therefore, the above results also hold for $D>1$.

\subsection{Validity of the covariance matrix for a Krein-gapped QBH}
\label{app:ValidCovMat}

In order for $\mb \Gamma$ to be a valid CM of a pure Gaussian state, it needs to be (i) positive-definite; and (ii) satisfy the purity condition in Eq.\,\eqref{CovMatPurity}. Here, we show that as long as the Krein gap is open, $\mb \Gamma^\text{qpv}$ given by Eq.\,\eqref{eq: CovMatCompact} is a valid CM. 

First, we note that since the blocks of the CM have the structure analogous to one given in Eq.\,\eqref{eq:BlockStructure}, they all commute and, for a given block $\mb C$, its spectrum is given by $\sigma(\mb C)=\underset{\bm k\in \text{BZ}}{\cup}C(\bm k)$, with $C(\bm k)\equiv\widehat{\mb C}$. 
An open Krein gap imposes $\boldsymbol{\mathcal{E}}> 0$. Without loss of generality, we can further assume that $d_3(\bm k)>0$. Therefore, $\text{Re}(\mb K+\mb \Delta)>0$ and $\boldsymbol{ \mathcal{E}}^{-1}\text{Re}(\mb K+\mb \Delta)>0$, since $d_3(\bm k)>0, d_3^2(\bm k)>d_2^2(\bm k)+d_1^2(\bm k)$, and thus $d_3(\bm k)+d_2(\bm k)>0, \forall \bm k$. Now, a $2\times 2$ block matrix $\begin{pmatrix}
    \mb A&\mb B\\
    \mb B^T&\mb D
\end{pmatrix}$ 
is positive-definite if $\mb D>0$ and $\mb A-\mb B\mb D^{-1}\mb B^T>0$. Since all of the blocks commute, this condition translates into $\boldsymbol{ \mathcal{E}}^{-1}\text{Re}(\mb K+\mb \Delta)>0$ and $\boldsymbol{ \mathcal{E}}^{-1}\text{Re}(\mb K+\mb \Delta)\text{Re}(\mb K-\mb \Delta)-\boldsymbol{\mathcal{E}}^{-1}\text{Im}(\mb \Delta)^2>0$, which is automatically satisfied, as we are assuming $\boldsymbol{\mathcal{E}}>0$. We further check that the state is a pure Gaussian state by confirming that
$$(\boldsymbol{\Sigma}\mb \Gamma^\text{qpv})^2= \boldsymbol{\mathcal{E}}^{-2} \begin{pmatrix}
\text{Im}(\mb \Delta
)^2-\text{Re}(\mb K+\mb \Delta)\text{Re}(\mb K-\mb \Delta)& 0 \\
0&\hspace*{-4mm}
\text{Im}(\mb \Delta
)^2-\text{Re}(\mb K+\mb \Delta)\text{Re}(\mb K-\mb \Delta)
\end{pmatrix}=-\mb 1.$$

\renewcommand{\theequation}{C.\arabic{equation}}
\setcounter{equation}{0}

\section{Proof of main theorems}
\label{app: UniqueVac}

\subsection{Two preliminary lemmas}

\begin{lemma}
\label{lem: vacnum}
Let $\ket{\tilde{0}}$ be a QPV and $V$ a (homogeneous) Gaussian unitary transformation. Let $\{\beta_n\}$ be the complete set of bosonic annihilation operators satisfying $\beta_n\ket{\tilde{0}} = 0$ . Then $V\ket{\tilde{0}} = e^{i\theta}\ket{\tilde{0}}$ if and only if $V$ commutes with $N_\text{qp}\equiv \sum_n \beta_n^\dag \beta_n$. Equivalently, $V\ket{\tilde{0}}$ is a distinct quantum state from $\ket{\tilde{0}}$ if and only if $V$ does not preserve the total particle number. 
\end{lemma}

\begin{proof}
Suppose first that $V$ commutes with the operator $N_\text{qp}$.
It follows that
\begin{align*}
       \beta_n' = V\beta_n V^\dag = \sum_m \mathbf{v}_{nm}\beta_m ,
 \end{align*}
where $\mathbf{v}_{nm}$ is a unitary matrix. These new quasiparticle $\{\beta_n'\}$ can be seen to annihilate $\ket{\tilde{0}}$. On the other hand, they annihilate $V\ket{\tilde{0}}$ as well.  Since a set of operators cannot annihilate two distinct QPVs, $V\ket{\tilde{0}} = e^{i\theta}\ket{\tilde{0}}$. 

Conversely, suppose that $V\ket{\tilde{0}} = e^{i\theta}\ket{\tilde{0}}$. Then $V\ket{\tilde{0}}$ is annihilated by all of the $\beta_n$ operators. Since $V$ is a Gaussian unitary, we have the identity
 $$   \beta_n' = V\beta_n V^\dag = \sum_m \mathbf{v}^{(1)}_{nm}\beta_m +\mathbf{v}^{(2)}_{nm}\beta_m^\dag.$$
These operators must then annihilate $\ket{\tilde{0}}$. By a direct check,
   $    \beta_n'\ket{\tilde{0}}=\sum_m \mathbf{v}^{(2)}_{nm}\ket{\tilde{1}_m}, $
where $\ket{\tilde{1}_m} = \beta_m^\dag \ket{\tilde{0}}$ are the single quasiparticle states. Since these states are linearly independent, this expression can only vanish if $\mathbf{v}^{(2)}_{nm}=0$ for all $n,m$. Thus, $V$ must commute with $N_\text{qp}$.
\end{proof}

\begin{lemma}
\label{lem: TIvac} 
If a QBH $H$ can be written in the form 
\begin{align}
\label{eq: diagTIQBH}
    H = \frac{1}{V}\int_\text{BZ} d^D\bm{k} \sum_{n}\omega_n(\bm{k}) \beta_n^\dag(\bm{k}) \beta_n(\bm{k}) + 
    \Lambda, 
\end{align}
with $\{\beta_n(\bm{k})\}$ a complete set of bosonic quasiparticles and $\Lambda$ a (potentially infinite) constant, then $H$ has a translationally invariant QPV. 
\end{lemma}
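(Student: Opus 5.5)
The plan is to construct the QPV explicitly as the image of the Fock vacuum under a translationally invariant Gaussian unitary, and then to check that it is annihilated by all the $\beta_n(\bm{k})$ and is invariant under lattice translations.

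\emph{Step 1: the Bogoliubov matrix.} Since $\{\beta_n(\bm{k})\}$ is a complete set of bosonic quasiparticles labelled by momentum, each $\beta_n(\bm{k})$ is a linear combination of $b_i(\bm{k})$ and $b_i^\dag(-\bm{k})$ only, as in Sec.\,\ref{sub:Diag}. Collect them into the Nambu array $\psi(\bm{k})=\mathbf{L}^{-1}(\bm{k})\phi(\bm{k})$. The CCRs for $\{\beta_n(\bm{k})\}$ and the Nambu structure $\psi(\bm{k})^\dag=[\bm{\tau}_1\psi(-\bm{k})]^T$ force $\mathbf{L}^\dag(\bm{k})\bm{\tau}_3\mathbf{L}(\bm{k})=\bm{\tau}_3$ and $\mathbf{L}^*(\bm{k})=\bm{\tau}_1\mathbf{L}(-\bm{k})\bm{\tau}_1$. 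These are exactly the conditions of Eq.\,\eqref{eq:GUT}.

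\emph{Step 2: the Gaussian unitary.} We invoke the statement around Eq.\,\eqref{eq:GUT} that any such $\mathbf{L}(\bm{k})$ is implemented by a homogeneous Gaussian unitary $U$ with $U\phi(\bm{k})U^\dag=\mathbf{L}^{-1}(\bm{k})\phi(\bm{k})$, so that $U b_n(\bm{k})U^\dag=\beta_n(\bm{k})$. Concretely, one takes $U=\exp(-iG)$ with $G$ a quadratic generator block-diagonal in momentum, obtained by writing $\mathbf{L}(\bm{k})$ as a product of exponentials of $\bm{\tau}_3$-anti-Hermitian matrices, for instance via a polar decomposition into a positive pseudo-unitary factor and a unitary factor. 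In the bi-infinite setting this is at least formal, because $U$ may fail to be unitarily implementable on Fock space. I would handle this by working with periodic BCs on $\{1,\ldots,N\}^D$, where everything is finite-dimensional, and reading the thermodynamic-limit statement as the limit of the resulting covariance matrix $\mathbf{L}(\bm{k})\mathbf{L}^\dag(\bm{k})$. Equivalently, one can define the state directly as the zero-mean Gaussian state with that covariance matrix.

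\emph{Step 3: annihilation and translation invariance.} Set $\ket{\tilde{0}}=U\ket{0}$. Eq.\,\eqref{eq:Uvac} then gives $\beta_n(\bm{k})\ket{\tilde{0}}=0$, and with the form \eqref{eq: diagTIQBH} this yields $H\ket{\tilde{0}}=\Lambda\ket{\tilde{0}}$. For translation invariance, let $T_{\bm{r}}$ be the unitary with $T_{\bm{r}}\phi(\bm{k})T_{\bm{r}}^\dag=e^{i\bm{k}\cdot\bm{r}}\phi(\bm{k})$. Because $\mathbf{L}(\bm{k})$ acts fibrewise in $\bm{k}$, the map $U$ commutes with $T_{\bm{r}}$ at the level of the linear action, so $T_{\bm{r}}\beta_n(\bm{k})T_{\bm{r}}^\dag=e^{i\bm{k}\cdot\bm{r}}\beta_n(\bm{k})$. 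Hence $T_{\bm{r}}\ket{\tilde{0}}$ is annihilated by every $\beta_n(\bm{k})$. Since $T_{\bm{r}}$ is Gaussian and preserves $N_\text{qp}$, it only multiplies each $\beta_n(\bm{k})$ by a phase, so Lemma~\ref{lem: vacnum} gives $T_{\bm{r}}\ket{\tilde{0}}=e^{i\theta}\ket{\tilde{0}}$. Alternatively, the Fock vacuum is translation invariant and $U$ commutes with $T_{\bm{r}}$ up to a phase.

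I expect the main obstacle to be Step 2, namely making rigorous sense of the Gaussian unitary and the vacuum in the infinite lattice, where $\sum_{\bm{k}}$ of the squeezing may diverge. I would first settle the finite periodic case and then argue through the covariance matrix, whose well-definedness follows from the boundedness of $\mathbf{L}(\bm{k})$ on the BZ.
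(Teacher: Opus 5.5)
Your proposal is correct and follows essentially the paper's argument: each $\beta_n(\bm{k})$ picks up only the phase $e^{i\bm{k}\cdot\bm{r}}$ under a lattice translation, so the translated vacuum is annihilated by the same complete set and therefore equals the original up to a phase. The paper invokes uniqueness of the state annihilated by all $\beta_n(\bm{k})$ directly, whereas you route it through Lemma~\ref{lem: vacnum}; your Step~2 (building the state as $U\ket{0}$ with a finite-torus regularization) makes explicit an existence step the paper takes for granted, but the lemma's hypothesis alone does not make $\mathbf{L}(\bm{k})$ bounded on the BZ — it can blow up near a Krein collision — and boundedness is only guaranteed when, e.g., $\Delta_\text{Krein}>0$, where $\mathbf{C}(\bm{k})=(2\mathbf{P}(\bm{k})-\bm{1}_{2d})\bm{\tau}_3$ is continuous.
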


\begin{proof}
Consider the state satisfying  $\beta_n(\bm{k}) \ket{\bar{0}} = 0$ for all $n$ and $\bm{k}$. This uniquely specifies $\ket{\bar{0}}$ up to a phase. Let $S_{\bm{r}}$ represent the unitary translation operator by the lattice vector $\bm{r}$, i.e., the unique (Gaussian) unitary satisfying $S_{\bm{r}}^\dag a_{\bm{j},\ell} S_{\bm{r}} = a_{\bm{j}+\bm{r},\ell}$. The quasiparticles transform under translation according to $S_{\bm{r}}^\dag \beta_n(\bm{k}) S_{\bm{r}} = e^{i\bm{k}\cdot \bm{r}} \beta_n(\bm{k})$. Therefore, we have 
\begin{align*}
    \beta_n(\bm{k}) S_{\bm{r}}\ket{\bar{0}} =  S_{\bm{r}} e^{i\bm{k}\cdot\bm{r}}\beta_n(\bm{k}) \ket{\bar{0}} = 0.
\end{align*}
Thus, for all $\bm{r}$, the state $S_{\bm{r}}\ket{\bar{0}}$  is annihilated by all of the $\beta_n(\bm{k})$ and so it can only differ from $\ket{\bar{0}}$ by a phase. That is, $\ket{\bar{0}}$ is translationally-invariant.
\end{proof}

\subsection{Statement and proof of Theorem \ref{thm: UniqueVac} in the zero-dimensional case}
\label{app: zerodimcase}

Before directly proving Theorem \ref{thm: UniqueVac}, we formally consider $D=0$ and $d<\infty$ arbitrarily (but quadratically) coupled bosonic modes $\{a_1,\ldots,a_d\}$. In this scenario, the direct and indirect Krein gaps are identical, since $\bm{k}$ and $\bm{k'}$ are effectively removed from the definitions in Eqs.\,\eqref{eq: IndKreinGap}-\eqref{eq: KreinGap}. Under the assumptions of the theorem, such a QBH can be diagonalized as
\begin{align*}
    H = \sum_{n=1}^d \left( \omega_n+\frac
    {1}{2}\right)\beta_n^\dag \beta_n ,
\end{align*}
where $\omega_n$ are the quasiparticle energies and $\beta_n$ and $\beta_n^\dag$ are the bosonic quasiparticle annihilation and creation operators, respectively. Moreover, $H$ has at least one QPV $\ket{\tilde{0}}$ defined by $\beta_n\ket{\tilde{0}} = 0$ for all $n$. Equivalently, $\ket{\tilde{0}} = U \ket{0}$ where $U$ is the Gaussian unitary implementing $Ua_n U^\dag = \beta_n$.

We will now prove the zero-dimensional statement of the theorem: {\em If $H$ lacks KCs, then its QPV is unique.} 

\begin{proof}
We proceed via the contrapositive: if the QPV is not unique, then $H$ must have a KC (i.e., an $n_0$ and $m_0$ such that $\omega_{n_0}=-\omega_{m_0}$). Nonuniqueness of the QPV means there are two QPV eigenstates $\ket{\tilde{0}}$ and $\ket{\tilde{0}'} = V\ket{\tilde{0}}$, where $V$ is some Gaussian unitary. Since these states must be distinct, $V$ cannot commute with the quasiparticle number $N_\text{qp}$. It follows that, for at least one quasiparticle energy $\omega_n$, there is an operator $\beta'$ satisfying
\begin{align*}
   &\text{(i): } [H,\beta'] = - \omega_n \beta',
    \\
    &\text{(ii): } \beta' = \sum_{m=1}^d c_{m}\beta_m + d_{m}\beta_m^\dag,\text{ with not all }d_{m} = 0.
\end{align*}
To see why, note that $H$ must have a set of normal modes that annihilate $\ket{\tilde{0}'}$ (as it is a QPV eigenstate of $H$). Thus, for every $\omega_n$, there is an operator that both satisfies (i) and annihilates $\ket{\tilde{0}'}$. Since $\ket{\tilde{0}'}$ is obtained from $\ket{\tilde{0}}$ by a Gaussian unitary that does not conserve $N_\text{qp}$, at least one such normal mode must have weight on the creation operators $\{\beta_n^\dag\}$ (otherwise the entire set, and thus $\ket{\tilde{0}'}$, could be obtained via a number-preserving Gaussian unitary). Now, plugging (ii) into the left hand side of (i), we obtain the system of equations
\begin{align}
   \sum_{m=1}^d \left( -\omega_m c_{m}\beta_m + \omega_m d_{m}\beta_m^\dag\right) = - \omega_n\sum_{m=1}^d (c_m\beta_m+ d_m  \beta_m^\dag)  .
\end{align}
This is equivalent to the linear equation
\begin{align*}
    \begin{pmatrix}
        \omega_1 &  &  & & &0
        \\
         & \ddots &  & & &
         \\
         & & \omega_d & & &
         \\
         & & & -\omega_1 &  &
         \\
         &  & & & \ddots &
         \\
        0& & & & & -\omega_d
    \end{pmatrix}\begin{pmatrix}
        c_1
        \\
        \vdots
        \\
        c_d
        \\
        d_1
        \\
        \vdots
        \\
        d_d
    \end{pmatrix} = \omega_n\begin{pmatrix}
        c_1
        \\
        \vdots
        \\
        c_d
        \\
        d_1
        \\
        \vdots
        \\
        d_d
    \end{pmatrix}.
\end{align*}
This is just an eigenvalue equation for the matrix on the left. However, we also know that at least one $d_m$ is not zero. Since the matrix is diagonal, this means at least one of $\{-\omega_1,\ldots,-\omega_d\}$ is equal to $\omega_n$, and so we are done. 
\end{proof}

It is worth noting that, in this zero-dimensional case, the converse can be proven as well: {\em If $H$ has a unique QPV, then it must lack KCs.} 

\noindent 
\begin{proof} 
Again, we proceed via the contrapositive: if $H$ has a KC, then it must have multiple distinct QPVs. Let $(n_0,m_0)$ be the quasiparticle modes such that $\omega_{n_0} +\omega_{m_0}=0$. Then $H$ can be rewritten as
\begin{align*}
    H = \frac{\omega_{n_0}}{2}\left( \beta_{n_0}^\dag \beta_{n_0}-\beta_{m_0}^\dag \beta_{m_0}+\beta_{n_0} \beta_{n_0}^\dag-\beta_{m_0} \beta_{m_0}^\dag \right) + \sum_{n\neq n_0,m_0} \left( \omega_n+\frac{1}{2}\right) \beta_n^\dag \beta_n .
\end{align*}
The first term in parentheses can be seen to be invariant under the family of Gaussian unitary transformations
\begin{subequations}
\label{eq: zerodimsqueeze}
\begin{align}
    V(r,\theta): \,\,&\beta_{n_0} \mapsto \beta_{n_0}' = \cosh(r)\beta_{n_0} + e^{i\theta}\sinh(r)\beta_{m_0}^\dag,
    \\
    V(r,\theta): \,\,&\beta_{m_0} \mapsto \beta_{m_0}' = \cosh(r)\beta_{m_0} + e^{i\theta}\sinh(r)\beta_{n_0}^\dag,
    \\
    V(r,\theta): \,\,&\beta_n \mapsto \beta_n,\quad n\neq n_0,m_0,
\end{align}
\end{subequations}
where $r,\theta\in\mathbb{R}$ are arbitrary. These (two-mode, if $n_0\neq m_0$, and single-mode if $n_0=m_0$) squeezing transformations manifestly do not conserve quasiparticle number. By Lemma \ref{lem: vacnum}, $V(r,\theta)\ket{\tilde{0}}$ is a distinct QPV from $\ket{\tilde{0}}$. Moreover, since $[H,V(r,\theta)]=0$, $V(r,\theta)\ket{\tilde{0}}$ is an eigenstate of $H$, so  we are done. \end{proof}

\subsection{Proof of Theorem \ref{thm: UniqueVac} in the general case}

\begin{proof}
To prove (i), note that, by the assumptions of the theorem, the QBH can be diagonalized as in Eq.\,\eqref{eq: diagTIQBH}. The conclusion then follows from the same contrapositive approach used in Appendix \ref{app: zerodimcase}, since translational invariance is not essential.

To prove (ii), we will again prove the contrapositive: if there are multiple distinct translationally invariant QPVs, then the direct Krein gap must be closed. Suppose there are two translationally invariant QPVs $\ket{\bar{0}}$ and $\ket{\bar{0}'}$. Since these are both QPVs and both translationally invariant, there must be a translation Gaussian unitary $U$ such that $\ket{\bar{0}'} = U\ket{\bar{0}}$. Translational invariance of $U$ means that $[U,S_{\bm{r}}]=0$ for all $\bm{r}$. The quasiparticles annihilating $\ket{\bar{0}'}$ are given by $\beta_n'(\bm{k}) = U\beta_n(\bm{k})U^\dag$. Thanks to translational  invariance of $U$, these new quasiparticles must transform identically under translations as the original set. Moreover, since $\ket{\bar{0}}$ and $\ket{\bar{0}'}$ are distinct, there must exist an $n=n_0$ and a $\bm{k}=\bm{k}_0$ such that
\begin{align}
\label{eq: betaks}
    \beta_{n_0}'(\bm{k}_0) = \sum_{m=1}^dc_{m}\beta_m(\bm{k}_0) + d_{m}\beta_m^\dag(-\bm{k}_0),\quad \text{ with not all }d_{m} = 0.
\end{align}
Now, we proceed exactly as in the zero-dimensional case to obtain the existence of an energy $-\omega_{m}(-\bm{k}_0)$ equal to $\omega_{n_0}(\bm{k}_0)$. Thus, the direct Krein gap is closed. We remark that translational invariance is essential here for restricting the right hand-side of Eq.\,\eqref{eq: betaks} to a combination of $\beta_m(\bm{k}_0)$ and $\beta_m^\dag(-\bm{k}_0)$, in order for both sides to transform identically under translation.
\end{proof}

\subsection{Remarks on the converse of Theorem \ref{thm: UniqueVac}} 
\label{app: converses}

It is natural to wonder whether, as in the $D=0$ case, necessity of the conditions (i) and (ii) also hold, namely, whether the converse statements are true:
\begin{itemize}
    \item[(i$'$)] If there exists a unique QPV (which then must be translationally invariant, by Lemma \ref{lem: TIvac}), then the indirect Krein gap must be open. 
    \item[(ii$'$)] If there exists a unique translationally invariant QPV, then the direct Krein gap must be open.
\end{itemize}
To establish a proof, it is tempting to simply import the squeezing transformations Eqs.\,\eqref{eq: zerodimsqueeze}. Below, we attempt that (in just the case of (ii$'$), for simplicity), and discuss potential issues.

Assume that the direct Krein gap closes at the point $\bm{k}_0$ between bands $n_0$ and $m_0$ and denote by $\ket{\bar{0}}$ the translationally invariant vacuum annihilated by all of the $\beta_n(\bm{k})$. Now, define the Gaussian unitary
\begin{subequations}
\label{eq: Dsqueeze}
    \begin{align}
    V(r,\theta): \,\,&\beta_{n_0}(\bm{k}_0) \mapsto \beta_{n_0}'(\bm{k}_0) = \cosh(r)\beta_{n_0}(\bm{k}_0)+ e^{i\theta}\sinh(r)\beta_{m_0}^\dag(-\bm{k}_0),
    \\
    V(r,\theta): \,\,&\beta_{m_0}(\bm{k}_0) \mapsto \beta_{m_0}'(\bm{k}_0) = \cosh(r)\beta_{m_0}(\bm{k}_0) + e^{i\theta}\sinh(r)\beta_{n_0}^\dag(-\bm{k}_0),
    \\
    V(r,\theta): \,\,&\beta_n(\bm{k}) \mapsto \beta_n(\bm{k}),\quad n\neq n_0,m_0, \,\, \bm{k}\neq \bm{k}_0.
\end{align}
\end{subequations}

As in the proof in Appendix \ref{app: zerodimcase}, the Hamiltonian can be shown to be invariant under this transformation thanks to the condition $\omega_{n_0}(\bm{k}_0) = -\omega_{m_0}(-\bm{k}_0)$. Moreover, $V(r,\theta)$ leaves the transformation properties of the quasiparticles under translations invariant. Once again, as in the proof in Appendix \ref{app: zerodimcase}, since $V(r,\theta)$ does not commute with quasiparticle number, $V(r,\theta)\ket{\bar{0}}$ is distinct from $\ket{\bar{0}}$. Thus, we have, in principle, constructed a translationally invariant QPV that is distinct from the original, completing this direction of the proof. 

The problem with this ``proof" is the following: the transformation $V(r,\theta)$ in Eq.\,\eqref{eq: Dsqueeze}, which may be decomposed as $\bigoplus_{\bm{k}}V_{\bm{k}}(r,\theta)$, is not continuous in $\bm{k}$. In fact, since this transformation only alters correlations at a single $\bm{k}$-point, it  leaves the real-space correlations unchanged in the infinite-size limit. This follows because altering the value of a function at one point (or, in general, over a set of measure zero) cannot change its Fourier transform. If the real-space correlations are unchanged, and the state remains Gaussian, it must be the same physical state. Moreover, one can show that, by considering a finite-torus geometry with $N$ sites, the real-space correlations of the original QPV and the squeezed QPV constructed using Eqs.\,\eqref{eq: Dsqueeze} differ by a quantity that vanishes as $N\to\infty$. It is possible that this issue is related to the breakdown of the Stone-von Neumann theorem (i.e., uniqueness of representations of the CCRs) for an infinite number of degrees of freedom. As it stands, it is unknown whether or not a direct KC is sufficient for the existence of non-unique vacua in $D>0$ systems, leaving open an interesting future direction for research.

\renewcommand{\theequation}{D.\arabic{equation} }
\setcounter{equation}{0}

\section{QPV correlation functions for illustrative models}

\subsection{Harmonic-chain and imaginary-hopping models}
\label{App: GHCCovMat}

The dynamical matrix for the Hamiltonian in Eq.\,\eqref{eq:GHC} takes the form 
\begin{align*}
    \mb g(k)&=\begin{pmatrix}
        \Omega-J\cos (k) &-J\cos (k)\\
        J\cos (k)&-\Omega+J\cos (k)
    \end{pmatrix},
\end{align*}
with eigenvalues equal to $\pm \sqrt{\Omega(\Omega-2J\cos(k))}.$ The $\boldsymbol \tau_3$-normalized eigenvectors are:
\begin{align*}
    \vec{\beta}_{+}(k)\nonumber&=\frac{1}{\mathcal{N}(k)}\left[\Omega-J\cos(k)+\sqrt{\Omega(\Omega-2J\cos(k))},J\cos(k)\right]^T,\\
    \vec{\beta}_{-}(k)&=\frac{1}{\mathcal{N}(k)}\left[J\cos(k), \Omega-J\cos(k)+\sqrt{\Omega(\Omega-2J\cos(k))}\right]^T, \\
\mathcal{N}(k) &= \sqrt{\Big(\Omega-J\cos(k)+\sqrt{\Omega(\Omega-2J\cos(k))}\Big)^2-J^2\cos^2(k)}.    
\end{align*}
As $\Omega\rightarrow2J$, $\mathcal{N}(k)\rightarrow0$, and both the eigenvalues and eigenvectors become degenerate at $k=0$: that is, the system approaches an EP.

Similarly, the dynamical matrix for the modified chain with imaginary hopping in Eq.\,\eqref{eq:GHCwImHopHam} is given by
\begin{align*}
    \mb g(k)&=\begin{pmatrix}
        \Omega-J\cos(k)+\gamma \sin(k)&-J\cos(k)\\
        J\cos(k)&-\Omega+J\cos(k)+\gamma \sin(k)
    \end{pmatrix},
\end{align*}
with eigenvalues equal to $\gamma \sin(k)\pm \sqrt{\Omega(\Omega-2J\cos(k))}$. Furthermore,
$$\sigma(\boldsymbol{\tau}_3\mb g(k))=\Omega-J\cos(k) \pm \sqrt{J^2\cos ^2k+\gamma^2\sin^2k}.$$ 
Thus, dynamical stability is retained as long as $\Omega>2J>0$, while thermodynamic stability holds up to 
$$\gamma^2=\gamma_c^2 \equiv  \tfrac{1}{2} \Omega^2 \big(1-\sqrt{1-(2J/\Omega )^2} \big).$$

\subsection{Interpolation model}
\label{App: InterpCovMat}

Expressed in terms of bosonic operators, the Hamiltonian given by Eq.\,\eqref{eq: InterpHam} takes the form 
\begin{align*}
        H=\frac{1}{2}\sum\limits_{j\in \mathbb{Z}}\left[ (1-s) \, \Omega
        (a_j^{\dag}a_j+a_ja_j^{\dag})+s\left(\Delta(ia_{j+1}^{\dag}a_j^{\dag}+\text{H.c.})+J (ia_{j+1}^{\dag}a_j+\text{H.c.})\right)\right] .
        \end{align*}
The corresponding dynamical matrix reads
\begin{align*}
       \mb g(k)&=\begin{pmatrix}
      \Omega (1-s)+J s\sin (k)& i\Delta s \cos (k)\\
     i s\Delta \cos (k)  &- \Omega (1-s)+J s\sin (k)
    \end{pmatrix} ,
\end{align*}
with eigenvalues equal to 
$s J \sin (k) \pm \sqrt{ \Omega^2 (1-s)^2-s^2\Delta^2\cos^2(k)}$,
while the normalized eigenvectors read:
\begin{align*}
   \vec{\beta}_{+}(k)\nonumber&= \frac{1}{\mathcal{N}(k)}\left[\Omega(1-s)+\sqrt{\Omega^2(1-s)^2-s^2\Delta^2\cos^2(k)},\,  i\Delta s \cos(k)\right]^T,\\
    \vec{\beta}_{-}(k)&= \frac{1}{\mathcal{N}(k)}\left[-i\Delta s \cos(k),\,  \Omega(1-s)+\sqrt{\Omega^2(1-s)^2-s^2\Delta^2\cos^2(k)}\right]^T, \\
\mathcal{N}({k})& = \sqrt{2\left(\Omega^2(1-s)^2-s^2\Delta^2\cos^2(k)+\Omega(1-s)\sqrt{\Omega^2(1-s)^2-s^2\Delta^2\cos^2(k)}\right)}.    
\end{align*}

From the above expressions, and using Eqs.\,\eqref{dynStab} and \eqref{thStab}, it become possible to determine the stability phase diagram. In particular, the eigenvalues of $\mb g(k)$ are real up to the value $s\leq s_2=\frac{1}{1+\Delta/\Omega}$ given in Eq.\,\eqref{special}. At $s=s_2$, the eigenvalues and eigenvectors of $\mb g(k)$ become degenerate at $k=0,\pm \pi$, marking the loss of dynamical stability associated with the presence of an EP.

By examining instead the eigenvalues 
$$\sigma\left(\boldsymbol{\tau}_3\mb g(k)\right)=\Omega(1-s)\pm \sqrt{(J s \sin(k))^2+(\Delta s \cos(k))^2},$$ 
we determine that the model is thermodynamically stable for $s\leq s_1=\frac{1}{1+J/\Omega},$ as stated in the main text.

\subsection{Double harmonic-chain model}
\label{App: Double}

The dynamical matrix for the Hamiltonian in Eq.\,\eqref{eq:multicritHam} has the form 
\begin{align*}
    \mb g(k)=\begin{pmatrix}
       \Omega_1+\Omega_2 +(K_1+K_2)(1- \cos (k))&\Omega_2-\Omega_1 +(K_2-K_1)(1- \cos (k))\\
      -\Big( \Omega_2-\Omega_1 +(K_2-K_1)(1- \cos (k))\Big)&-\Big( \Omega_1+\Omega_2 +(K_1+K_2)(1- \cos (k))\Big)
    \end{pmatrix}.
\end{align*}
The eigenvalues read 
$$\pm \sqrt{4(\Omega_1+K_1(1-\cos(k)))(\Omega_2+K_2(1-\cos(k)))} =\pm \mathcal{E}(k),$$ 
whereas the normalized eigenvectors are:
\begin{align}
     \vec{\beta}_{+}( k)\nonumber&=\frac{1}{\mathcal{N}(k)}\left[\Omega_1+\Omega_2 +(K_1+K_2)(1- \cos (k))+  \mathcal{E}(k)
    ,\, -(\Omega_2-\Omega_1 +(K_2-K_1)(1- \cos (k)))\right]^T , \\
     \vec{\beta}_{-}( k)&\nonumber=\frac{1}{\mathcal{N}( k)}\left[-(\Omega_2-\Omega_1 +(K_2-K_1)(1- \cos (k))),\, \Omega_1+\Omega_2 +(K_1+K_2)(1- \cos (k))+ \mathcal{E}(k)
     \right]^T , \\
      \mathcal{N}(k)\nonumber&=\sqrt{2\mathcal{E}(k)(\mathcal{E}(k)+ \Omega_1+\Omega_2 +(K_1+K_2)(1- \cos (k)))}.
\end{align}

From here, it is straightforward to determine the dynamical stability boundaries as discussed in the main text, namely, EPs for both $(\Omega_1=0,\Omega_2\neq 0)$ and $(\Omega_2=0,\Omega_1\neq 0)$, and a KC at the point $\Omega_1=\Omega_2=0$.

\vspace{5mm}
\noindent\textbf{References}
\vspace{.3cm}


\providecommand{\newblock}{}

\end{document}